\newcommand{\argmin}{\arg\!\min}
\newcommand{\e}{\text{e}}
\newcommand{\minus}{\scalebox{0.75}[0.75]{$-$}}
\newcommand{\plus}{\scalebox{0.75}[0.75]{$+$}}
\newcommand{\tinyDelta}{\scalebox{0.75}[0.55]{$\Delta$}}
\newcommand{\algmargin}{\the\ALG@thistlm}
\newlength{\whilewidth}
\algnewcommand{\parState}[1]{\State%
  \parbox[t]{\dimexpr\linewidth-\algmargin}{\strut #1\strut}}
\newtheorem{lemma}{Lemma}
\newtheorem{theorem}{Theorem}
\newtheorem{remark}{Remark}
\newtheorem{proposition}{Proposition}
\newcommand{\bseq}{\begin{subequations}}
\newcommand{\eseq}{\end{subequations}}
\newcommand{\baln}{\begin{align}}
\newcommand{\ealn}{\end{align}}
\newcommand{\balnd}{\begin{aligned}}
\newcommand{\ealnd}{\end{aligned}}
\newcommand{\beq}{\begin{equation}}
\newcommand{\eeq}{\end{equation}}
\newcommand{\beqn}{\begin{eqnarray}}
\newcommand{\eeqn}{\end{eqnarray}}
\newcommand{\beqno}{\begin{eqnarray*}}
\newcommand{\eeqno}{\end{eqnarray*}}
\newcommand{\bma}{\begin{displaymath}}
\newcommand{\ema}{\end{displaymath}}
\newcommand{\bnu}{\begin{enumerate}}
\newcommand{\enu}{\end{enumerate}}
\newcommand{\bce}{\begin{center}}
\newcommand{\ece}{\end{center}}
\newcommand{\btb}{\begin{tabular}}
\newcommand{\etb}{\end{tabular}}
\newcommand{\ba}{\begin{array}}
\newcommand{\ea}{\end{array}}
\newcommand{\mb}{\mathbf}
\renewcommand{\baselinestretch}{1.38}
\begin{document}
\title{Joint Data Compression and Computation Offloading  in Hierarchical Fog-Cloud Systems}

\author{\IEEEauthorblockN{Ti Ti Nguyen, {\em Student Member, IEEE}, Vu Nguyen Ha, {\em Member, IEEE},\\
 Long Bao Le, {\em Senior Member, IEEE}, and Robert Schober, {\em Fellow, IEEE}}
	
	\thanks{Ti Ti Nguyen and Long B. Le are with INRS-EMT, University of Qu\'{e}bec,  Montr\'{e}al, Qu\'{e}bec, Canada (emails: \{titi.nguyen,long.le\}@emt.inrs.ca).}
	\thanks{Vu N. Ha is with \'{E}cole Polytechnique de Montr\'{e}al,  Montr\'{e}al, Qu\'{e}bec, Canada (email: vu.ha-nguyen@polymtl.ca).}
	\thanks{Robert Schober  is with Friedrich-Alexander-University Erlangen-Nuremberg, Germany  (email: robert.schober@fau.de).}
}
 
\markboth{IEEE TRANSACTION ON WIRELESS COMMUNICATIONS, SUBMITTED}{NGUYEN ET AL ... }

\maketitle

\begin{abstract}

Data compression (DC) has the potential to significantly improve the computation offloading performance in hierarchical fog-cloud systems. However, it remains unknown how to optimally determine the compression ratio jointly with the computation offloading decisions and the resource allocation. This joint optimization problem is studied in the current paper where we aim to minimize the maximum weighted energy and service delay cost (WEDC) of all users.
First, we consider a scenario where DC is performed only at the mobile users. We prove that the optimal offloading decisions have a threshold structure. Moreover, a novel three-step approach employing convexification techniques is developed to optimize the compression ratios and the resource allocation. Then, we address the more general design where DC is performed at both the mobile users and the fog server. We propose three efficient algorithms to overcome the strong coupling between the offloading decisions and the resource allocation.
We show that the proposed optimal algorithm for DC at only the mobile users can reduce the WEDC by up to 65\% compared to computation offloading strategies that do not leverage DC  or use sub-optimal optimization approaches. Besides, the proposed algorithms for additional DC at the fog server can further reduce the WEDC.

\end{abstract}
\begin{IEEEkeywords}
	Fog computing, resource allocation, computation offloading, hierarchical fog/cloud, data compression, energy saving, latency, mixed integer non-linear programming.
\end{IEEEkeywords}

\IEEEpeerreviewmaketitle
\section{Introduction}
Currently, mobile edge/cloud computing (MEC/MCC) technologies are 
considered as promising solutions for enhancing the mobile usability and prolonging the mobile battery life by offloading computation heavy applications to a remote fog/cloud server \cite{ren2017latency,ti2017,nguyen2017joint}. In an MCC system, enormous computing resources are available in the core network, but the limited backhaul capacity can induce significant delay for the underlying applications. In contrast, an MEC system, with computing resources deployed at the network edge in close proximity to mobile devices, can enable computation offloading and meet demanding application requirements \cite{lyu2017}.

Hierarchical fog-cloud computing systems which leverage the advantages of both MCC and MEC can further
 enhance the system performance  \cite{jalali2016fog, chiang2016fog, deng2016optimal,shah2018hierarchical,du2017computation}
where fog servers deployed at the network edge can operate collaboratively with the more powerful cloud servers to 
execute computation-intensive user applications. 
Specifically, when the users' applications require high computing power or low latency, their computation tasks can be offloaded and processed at the fog and/or remote cloud servers. 
However, the upsurge of mobile data and the constrained radio spectrum may result in significant delays in transferring
offloaded data between the mobile users and the fog/cloud servers, which ultimately degrades the quality of service (QoS) \cite{liu2018cooperative}. 
To overcome this challenge, advanced DC techniques can be leveraged to reduce the amount of incurred data (i.e., the input data of a user's application)
\cite{deepu2017hybrid, alsheikh2016rate}.
However, employment of DC entails additional computations 
for the execution of the corresponding DC and decompression algorithms \cite{zhang2017mobile}. 
Therefore, an efficient joint design of DC, offloading decisions, and resource allocation is needed
to take full advantage of DC while meeting QoS requirements and other system constraints. 

\subsection{Related Works}

Computation offloading design for MCC/MCE systems has been studied extensively in the literature, see recent surveys \cite{yuyi2017, mouradian2017comprehensive} and the references therein. Most existing works consider two main
performance metrics for their designs, namely energy-efficiency \cite{YouHCK16, wang2018joint, zhao2017energy, wang2017multi} and delay-efficiency
\cite{ning2018cooperative,  gu2018joint, bi2018computation, ren2019collaborative}.  
Focusing on energy-efficiency, the authors of \cite{YouHCK16} develop partial offloading frameworks for  multiuser MEC systems employing
time division multiple access and frequency-division multiple access. 
In \cite{wang2018joint}, wireless power transfer is integrated into the computation offloading design. Moreover, different binary 
offloading frameworks are developed in \cite{zhao2017energy, wang2017multi} where various 
 branch-and-bound and heuristic algorithms are proposed to tackle the resulting mixed integer optimization
problems.

Considering computation offloading from the delay-efficiency point of view, an iterative heuristic algorithm 
to optimize the binary offloading decisions for minimization of the overall computation and transmission delay in a hierarchical fog-cloud system is proposed in \cite{ning2018cooperative}. 
The authors in \cite{gu2018joint} formulate the computation offloading and
 resource allocation problem as a student-project-allocation game with the objective to maximize the ratio between the average offloaded data rate and the offloading cost at the users. In \cite{bi2018computation}, the authors study a binary computation offloading problem 
for maximization of the weighted sum computation rate. Then, they propose a coordinate descent based algorithm in which  the offloading decision and time-sharing variables are iteratively updated until convergence. Considering the partial computation offloading problem, the authors in  \cite{ren2019collaborative} propose a framework to minimize weighted-sum latency of all mobile users via the collaboration between cloud computing and fog computing assuming the TDMA based resource sharing strategy \cite{ren2019collaborative}.

Some recently proposed schemes for computation offloading consider both energy and delay efficiency aspects  \cite{deng2016optimal, liu2018multiobjective, du2017computation}.
 In particular, 
the work in \cite{deng2016optimal} proposes a radio and computing resource allocation framework where 
the computational loads of fog and cloud servers are determined and the trade-off between power consumption and service delay is investigated. 
Additionally, the authors of \cite{liu2018multiobjective} jointly optimize the transmit power and offloading probability for minimization of the average weighted energy, delay, and payment cost.  
In \cite{du2017computation}, the authors study fair computation offloading design 
 minimizing the maximum WEDC of all users in a hierarchical fog-cloud system. 
In this work, a two-stage algorithm is proposed where the offloading decisions are determined in the first stage using a semidefinite relaxation and probability rounding based method while the radio and computing resource allocation is determined in the second stage. However, references \cite{YouHCK16, wang2018joint, zhao2017energy, wang2017multi, ning2018cooperative,  gu2018joint, bi2018computation,deng2016optimal, liu2018multiobjective, du2017computation} have not exploited DC for computation offloading.

There are few existing works that explore DC for computation offloading.
Specifically, the authors of \cite{liu2018cooperative} propose an analytical framework
to evaluate the outage performance of a hierarchical fog-cloud system. 
Moreover, the work in \cite{zhang2017mobile} considers DC for computation offloading design
for systems with a single server but assumes a fixed compression ratio (i.e., this parameter is not optimized).
In general, the compression ratio should be jointly optimized with the computation offloading decisions and resource allocation
to achieve optimal system performance.
However, the computational load incurred by compression/decompression is a non-linear function of the
compression ratio, which makes this joint optimization problem very challenging.


\subsection{Contributions and Organization of the Paper}

To the best of our knowledge, the joint design of DC, computation offloading, and resource 
allocation for hierarchical fog-cloud systems has not been considered  in the existing literature.
The main contributions of this paper can be summarized as follows:
\begin{itemize}
	\item We propose a non-linear computation model which can be fitted to accurately capture the computational load incurred by DC and decompression. In 
	particular, the compression and decompression computational load as well as the quality of data recovery are modeled as functions of the compression ratio. 
	
	\item For DC at only the mobile users, we  formulate the fair joint design of the compression ratio, computation offloading, and resource allocation as a mixed-integer non-linear programming (MINLP) optimization problem. This problem formulation takes into account practical constraints on the maximum transmit power, wireless access bandwidth, backhaul capacity, and computing resources. 
	We propose an optimal algorithm, referred to as Joint DC, Computation offloading, and Resource 
	Allocation (JCORA) algorithm, which solves this challenging 
	problem optimally. To develop this algorithm, we first prove that users incurring higher WEDC when executing their application locally 
	should have higher priority for offloading. Based on this result, the bisection search method is employed to optimally classify users 
	into two user sets, namely the set of offloading users, and the set of remaining users, and JCORA globally optimizes the
	decision optimization variables for both user sets.
	
	\item We then study a more general design where DC is performed at both the mobile users and the fog server (with different compression
	ratios) before the compressed data are transmitted over the wireless link and the backhaul link connecting the fog server and cloud server, respectively. This enhanced design can lead to a significant performance gain when both wireless access and backhaul networks are congested.	
	Three different solution approaches are proposed to solve this more general problem. In the first approach, we extend the design principle of the JCORA algorithm by employing the piece-wise linear approximation (PLA) method to tackle the coupling of optimization variables. In the remaining approaches, we utilize the Lagrangian method and solve the dual optimization problem. Specifically, in the second approach, referred to as One-dimensional $\lambda$-Search based Two-Stage (OSTS) algorithm, a one-dimensional search is employed to determine the optimal value of the Lagrangian multiplier, while in the third approach, referred to as Iterative $\lambda$-Update based Two-Stage (IUTS) algorithm, a low-complexity iterative sub-gradient projection technique is adopted to tackle the problem.
	
	\item Extensive numerical results are presented to evaluate the performance gains of the proposed designs in 
	comparison with conventional strategies that do not employ DC. Moreover, our results confirm the excellent performance achievable by 
	joint optimization of DC,  computation offloading decisions, and resource allocation in a hierarchical fog-cloud system.
 \end{itemize}

The remainder of this paper is organized as follows. \textbf{Section~\ref{st2}} presents the system model, the computation and transmission energy models, and the problem formulation. \textbf{Section~\ref{st3}} develops the proposed optimal algorithm for the case when DC is performed only at the mobile users. \textbf{Section~\ref{st4a}} provides the enhanced problem with DC also at the fog server and three 
methods for solving it. \textbf{Section~\ref{st5}} evaluates the performance of the proposed algorithms. Finally, \textbf{Section~\ref{st6}} concludes this work.

\section{System Model and Problem Formulation} \label{st2}
\subsection{System Model}

\begin{figure}[t]
	\centering
	\begin{minipage}[t]{0.9\linewidth}
		\includegraphics[width=1\textwidth]{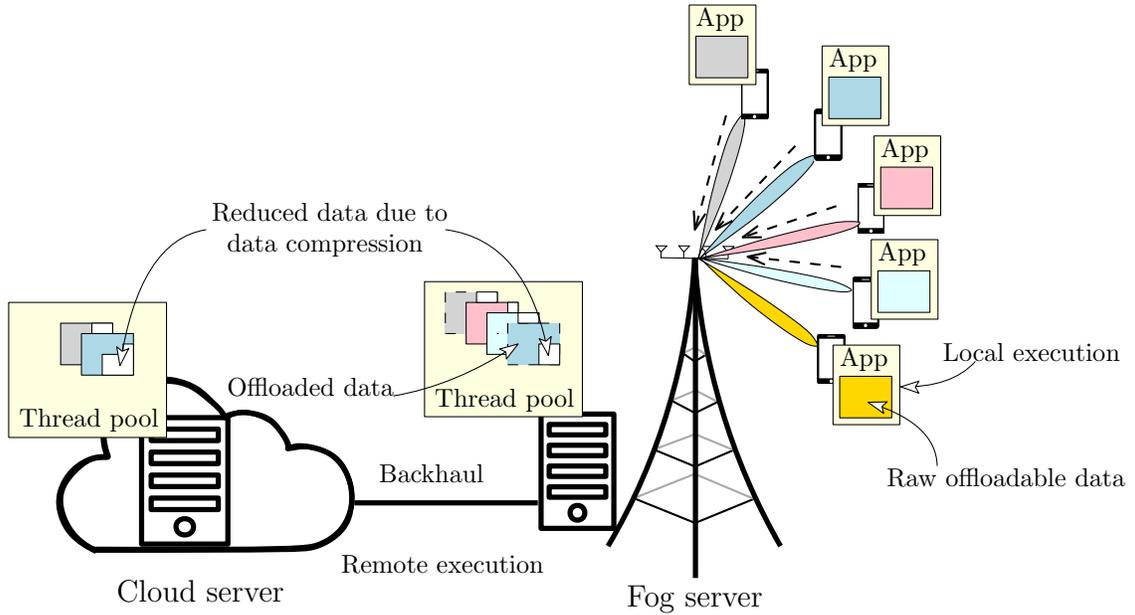}
		\caption{Data compression and computation offloading in hierarchical fog-cloud systems.}
		\label{P2-figsys}
		\vspace{-0.5 cm}
	\end{minipage}%
\end{figure}
We consider a hierarchical fog-cloud system consisting of $K$ mobile users, one cloud server, and one fog server co-located with
a base station (BS) equipped with multiple antennas. 
In this system, BS communicates to its users through wireless links while the (wired) backhaul link are deployed between the BS co-located with fog server and the cloud server\footnote{This system can capture a scenario in cellular network where the fog server is deployed at the base station to provide the computing service to the mobile users and the base station is connected to the cloud server through the backhaul transportation network.}.
For convenience, we denote the set of users as $\mathcal{K}$.
We assume that each user $k$ needs to execute an application requiring $c_{k}$ CPU cycles within an interval of $T_k^{\sf{max}}$ seconds, 
in which $c_{k,0}$ CPU cycles must be executed locally at the mobile device and the remaining offloadable $c_{k,1}$ CPU cycles can be processed locally or offloaded and processed at the fog/cloud server for energy saving and delay improvement.
A sequential processing order for the unoffloadable and offloadable computing components is assumed 
in this paper\footnote{Our design can be extended to tackle the parallel processing order.}. 
Let $b_{k}^{\sf{in}}$ be the number of bits representing the corresponding
incurred data (i.e., programming states, input text/image/video)
of the possibly-offloaded $c_{k,1}$ CPU cycles.
To overcome the wireless transmission bottleneck caused by the capacity-limited wireless links between the users 
and the BS, DC is 
employed at the users for reducing the amount of data transferred to the fog server. Fig.~\ref{P2-figsys}
illustrates the considered system. 

In particular, once $c_{k,1}$ CPU cycles are offloaded, user $k$ first compresses the corresponding $b_{k}^{\sf{in}}$ bits down 
to $b_{k}^{\sf{out,u}}$ bits before sending them to the remote fog server. 
The ratio between $b_k^{\sf{in}}$ and $b_k^{\sf{out,u}}$ for user $k$ is called the compression ratio, which is denoted as $\omega_k^{\sf{u}} = b_k^{\sf{in}}/b_k^{\sf{out,u}}$.
Depending on the available fog computing resources, the offloaded computation task can be directly processed at the fog server or be further offloaded to the cloud server. 
The amount of data containing the computation outcome sent back to the users is usually
much smaller than that incurred by offloading the task. Therefore, similar to \cite{YouHCK16, liu2018multiobjective, du2017computation}, we do not consider the downlink transmission of the 
computation results in this paper\footnote{The design in this paper can be extended to consider the downlink transmission of 
feedback data as in \cite{TiLong2019}.}.

\begin{remark}
\label{mrk:offloadable}
Running an application requires executing several unoffloadable sub-tasks that handle user interaction or access local I/O devices and cannot be executed remotely  and other offloable sub-tasks that can be executed locally or remotely based on the employed offloading strategy \cite{Kwak-2015, liu2018multiobjective}.  Practically, the workload corresponding to each sub-task of a specific application has to be pre-determined and remain unchanged 
	according to the pre-programmed source code.
	Hence, the total workload of the offloadable components is typically fixed and cannot be optimized.
	In this work, we assume a binary offloading decision for all offloadable sub-tasks of each user. 
	This corresponds to the practical scenario where all offloadable sub-tasks are strongly related so they cannot be
	executed at different places.
\end{remark}

\subsubsection{Data Compression Model}
DC can be achieved by eliminating only statistical redundancy (i.e., lossless compression) or by also removing unnecessary information (i.e., lossy compression). 
To realize it, compression and decompression algorithms must be executed at the data source and destination, respectively, which
induces additional computational load.  To the best of our knowledge, in the literature, there is no theoretical result regarding a mathematical model for the computational workload required
	for data compression process.
Hence, we employ a practical data-fitting approach to 
capture the compression computational load, decompression computational load, and compression quality as non-linear
 functions of the compression ratio.
In particular, the following model is proposed to capture the compression workload,
\begin{eqnarray}
c_k^{\sf{x,u}} &=& \gamma_{k,0}^{\sf{u}}  \big[ \gamma_{k,1}^{\sf{x,u}}(\omega_k^{\sf{u}})^{\gamma_{k,2}^{\sf{x,u}}} + \gamma_{k,3}^{\sf{x,u}} \big],  \text{ for }  \omega_k^{\sf{u}} \in [ \omega_{k,1}^{\sf{u,min}}, \omega_{k,1}^{\sf{u,max}}], \label{eqcomplex} \\
q_k^{\sf{qu,u}} &=& \gamma_{k,3}^{\sf{qu,u}}- \big[\gamma_{k,1}^{\sf{qu,u}}(\omega_k^{\sf{u}})^{\gamma_{k,2}^{\sf{qu,u}}} \big], \text{ for }  \omega_k^{\sf{u}} \in [ \omega_{k,1}^{\sf{u,min}}, \omega_{k,1}^{\sf{u,max}}], \label{eqqua}
\end{eqnarray} 
where `$\sf{x}$' $=$ `$\sf{co}$' and `$\sf{de}$' stands for compression and decompression, respectively, $[ \omega_{k,1}^{\sf{u,min}}, \omega_{k,1}^{\sf{u,max}}]$ represents the possible range of $\omega_k^{\sf{u}}$ due to the compression algorithm employed at user $k$, $c_k^{\sf{co,u}}$ and $c_k^{\sf{de,u}}$ denote the additional CPU cycles at source and destination needed for compression and decompression, respectively\footnote{Note that when the compression and decompression algorithms are executed at a fixed CPU clock speed, the computational load in CPU cycles is linearly proportional to the execution time.}; $q_k^{\sf{qu,u}}$ represents the perceived QoS (i.e, this parameter, which is only considered for lossy compression, measures the deviation between the true data and the decompressed data); $\gamma_{k,0}^{\sf{u}}$  is the maximum number of CPU cycles; 
$\gamma_{k,i}^{\sf{co/de/qu,u}}, i=1,2,3,$ are constant parameters where $\gamma_{k,1}^{\sf{co/de/qu,u}}, \gamma_{k,3}^{\sf{co/de/qu,u}} \geq 0$. 
It is worth noting that $\gamma_{k,i}^{\sf{co/de/qu,u}}, i=1,2,3,$ in our paper are selected based on the experimental data which is collected by running the compression algorithms GZIP, BZ2, and JPEG in Python 3.0\footnote{For validation, we first turned off all other applications to keep the CPU clock speed almost constant when executing the compression and decompression algorithms by using \textit{`cpupower tool'} in Linux. Then, we ran algorithms GZIP, BZ2, and JPEG in Python 3.0 via a Linux terminal using Ubuntu 18.04.1 LTS on a computer equipped with CPU chipset Intel(R) core(TM) i7-4790, and 12 GB RAM. The experimental data was obtained over 1000 realizations. This allowed us to estimate the normalized  execution time, which is proportional to the normalized computational load.}.

The accuracy of the proposed model is validated by Fig.~\ref{P2-fig1} which illustrates the relation between the normalized compression/decompression execution time and the compression ratio 
using the lossless algorithms GZIP and BZ2 for the benchmark text files \textit{``alice.txt''} and \textit{``asyoulik.txt''} from Canterbury Corpus \cite{Canter}, and the lossy algorithm `JPEG' for images \textit{``clyde-river.jpg''} and \textit{``frog.jpg''} from the Canadian Museum of Nature \cite{Nature}, obtained from simulation and fitting the proposed model. Here, the normalized execution time is the ratio of the actual execution time and the maximum execution time 
over all values of the compression ratio. 
The figure shows that the curves obtained through fitting using the proposed model match
the simulation results well. 

\begin{remark}
\label{mrk:DCmodel}
A detailed accuracy comparison between our compression computational load model and existing models is provided in \textbf{Appendix G}.
\end{remark}
%
%

\begin{figure} [t]
	\centering	
	\includegraphics[width=1\textwidth]{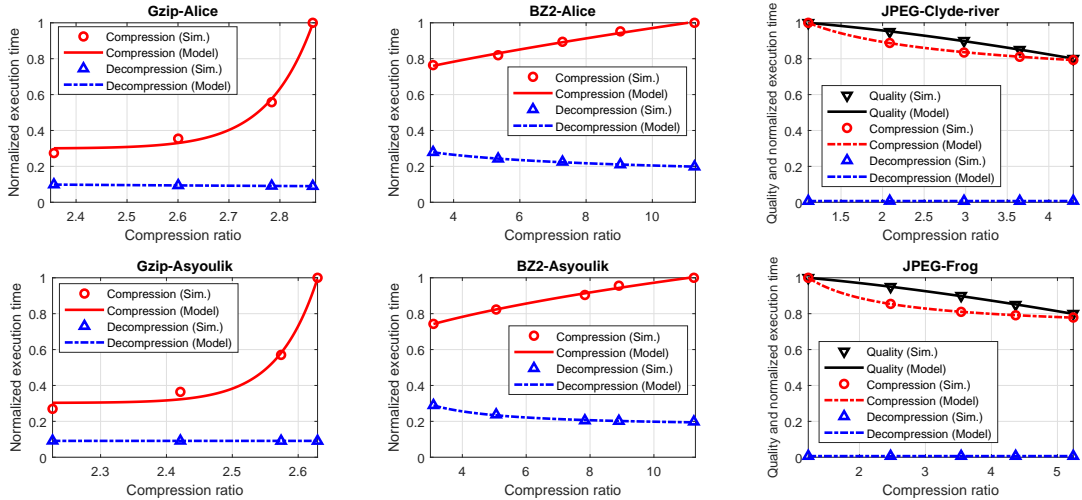}
	\caption{Compression quality and normalized execution time.}
	\label{P2-fig1}
\end{figure}

\subsubsection{Computing and Offloading Model}
We now introduce the binary offloading decision variables $s_{k}^{\sf{u}}$, $s_{k}^{\sf{f}}$, and $s_{k}^{\sf{c}}$ for the computation task of user ${k}$, where $s_{k}^{\sf{u}} = 1$, $s_{k}^{\sf{f}} = 1$, and $s_{k}^{\sf{c}} = 1$ denote the scenarios where the application is executed at the mobile device, the fog server, and the cloud server, respectively; and these variables are zero otherwise. Moreover, we assume that the $c_{k,1}$ CPU cycles can be executed at exactly one location, which implies
$
s_{k}^{\sf{u}} + s_{k}^{\sf{f}} + s_{k}^{\sf{c}} =1.
$
Then, the total computational load of user ${k}$ at the mobile device, denoted as $c_{k}^{\sf{u}}$, and at the fog server, denoted as $c_{k}^{\sf{f}} $, are given as, respectively, 
\begin{eqnarray}
c_{k}^{\sf{u}} = c_{k,0} + s_{k}^{\sf{u}} c_{k,1} + (1-s_{k}^{\sf{u}})c_{k}^{\sf{co,u}}  \text{ and }
c_{k}^{\sf{f}} = s_{k}^{\sf{f}} \big(c_{k,1} + c_{k}^{\sf{de,u}}\big).
\label{eq6}
\end{eqnarray}

As the fog and cloud servers are generally connected to the power grid while the capacity of a mobile battery is limited, we will focus on the energy consumption of the users \cite{du2017computation}. 
The local computation energy consumed by user $k$ and the local computation time can be expressed, respectively, as 
$
\xi_{1,k}^{\sf{u}} = \alpha_{k}{f_{k}^{\sf{u}}}^2 c_{k}^{\sf{u}}, \text{ and }
t_{1,k}^{\sf{u}} = {c_{k}^{\sf{u}}}/{f_{k}^{\sf{u}}}, 
$
where $f_{k}^{\sf{u}}$ is the CPU clock speed of user $k$ and $\alpha_k$ denotes the energy coefficient specified by the CPU model \cite{Zhang13}. Let $f_{k}^{\sf{f}}$ denote the CPU clock speed used at the fog server to process $c_{k,1}$. Then, the computing time at the fog server is given by
$
t_{1,k}^{\sf{f}} = {c_{k}^{\sf{f}}}/{f_{k}^{\sf{f}}}. 
$
We assume that the computation task of each user is executed at the cloud server with a fixed delay of $T^{\sf{c}}$ seconds\footnote{The delay time for the cloud server consists of two components: the execution time and the CPU set-up time. Due to the huge computing resource in the cloud server, the execution time is generally much smaller than the CPU set-up time \cite{CloudSigma}, which is identical for all users.}.

\subsubsection{Communication Model}
In order to send the incurred data during the offloading process, we 
assume that zero-forcing beamforming is applied at the BS and the average uplink rate from user $k$ to the
BS (fog server) is expressed as 
$
r_{k} = \rho_k \log_2 (1 {\plus} p_{k} \beta_{k,0}),
$
where $p_{k}$ is the uplink transmit power per Hz of user $k$, $\rho_k$ denotes the transmission bandwidth, and $\beta_{k,0} = M_0 \beta_{k}/\sigma_{\sf{bs}}$ in which $\beta_k$ represents the large-scale fading coefficient, $\sigma_{\sf bs}$ is the noise power density (watts per Hz), and $M_0$ is the MIMO beamforming gain \cite{Hien13}. It is assumed that the number of antennas is sufficiently large so that $M_0$ is identical for all users. Then, the uplink transmission time and energy of user $k$ can be computed, respectively, as 
$
t_{2,k}^{\sf{u}} = {(1-s_{k}^{\sf{u}}) b_k^{\sf{out,u}}}/r_{k}   \text{ and } 
\xi_{2,k}^{\sf{u}}  =  \rho_k (p_{k} + p_{k,0}) t_{2,k}^{\sf{u}},
$
where $p_{k,0}$ denotes the circuit power consumption per Hz.
For the data transmission between the fog server and the cloud server, a backhaul link with capacity  $D^{\sf{max}}$ bps (bits per second) is assumed. 
Let $d_{k}$ denote the backhaul rate allocated to user $k$, then the transmission time from the fog server to the 
cloud server is\footnote{Consideration of more sophisticated rate/bandwidth sharing models over the shared backhaul link is outside the scope of this paper, which is left for our future work. In fact, the fixed rate or capacity allocation
for different users sharing the backhaul link was also assumed in recent papers \cite{ren2019collaborative,  zhang2016energy}.}
$t_{2,k}^{\sf{f}} ={s_{k}^{\sf{c}} b_k^{\sf{out,u}}}/d_{k}.$

\subsection{Problem Formulation}

Assume the users have to pay for their usage of the radio and computing resources at the fog/cloud servers. Then, the service cost 
of user $k$ can be modeled as
$
\Theta_k = (1-s_{k}^{\sf{u}})(w^{\sf{BW}} \rho_{k} + w^{\sf{C}}  c_{k,1}),
$
where $w^{\sf{BW}}$ is the price per $1$~Hz of bandwidth for wireless data transmission, and $w^{\sf{C}}$ is the price
paid to execute one CPU cycle at the fog/cloud servers. Assuming that a pre-determined contract agreement specifies a maximum service cost 
$\Theta_{k}^{\sf{max}}$ then $\Theta_k \leq  \Theta_{k}^{\sf{max}}$. This constraint can be rewritten equivalently as
$
(1-s_{k}^{\sf{u}}) \rho_k \leq \rho_k^{\sf{max}} = \frac{\Theta_{k}^{\sf{max}} - w^{\sf{C}}  c_{k,1}}{w^{\sf{BW}}}.
$
Beside the constrained service cost, two important metrics for each user are the service latency and the consumed energy. Specifically, the total delay for completing the computation task of user $k$ includes the computation delay of the mobile device, the average transmission delay of the mobile device, the computation delay of the fog server, the average transmission delay of the fog server over the backhaul link, and the computation delay of the cloud server, which is given by
\begin{eqnarray}
T_k &=&  t_{1,k}^{\sf{u}} + t_{2,k}^{\sf{u}} + t_{1,k}^{\sf{f}} + t_{2,k}^{\sf{f}} + s_{k}^{\sf{c}}T^{\sf{c}} \nonumber\\
&=& \frac{c_{k,0} {+} s_{k}^{\sf{u}} c_{k,1} {+} (1{-}s_{k}^{\sf{u}})c_{k}^{\sf{co,u}}} {f_{k}^{\sf{u}}} +   \frac{(1{-}s_{k}^{\sf{u}})b_k^{\sf{in}}}{\omega_k^{\sf{u}}\rho_k \log_2 (1 {\plus} p_{k} \beta_{k,0})} {+} \frac{s_{k}^{\sf{f}} \big(c_{k,1} {+} c_{k}^{\sf{de,u}}\big)}{f_{k}^{\sf{f}}} {+} \frac{s_{k}^{\sf{c}} b_k^{\sf{in}}}{\omega_k^{\sf{u}}d_{k}} {+}s_{k}^{\sf{c}}T^{\sf{c}}.
\end{eqnarray}
Note that  massive MIMO communications with interference cancellation using MIMO beamforming is assumed in this paper cuch that multiple mobile users can transmit their data to the fog server at the same time over the same frequency band.
Moreover, unlike \cite{ren2019collaborative}, we do not consider the TDMA transmission strategy where
the users are scheduled and have to wait for their turns to transmit their data in the uplink. In other work, no time-based
scheduling is required for the multi-user MIMO communication system considered in our paper.

In addition, the overall energy consumed at user $k$ for processing its task comprises the energy for local computation and for data transmission in the offloading case. Hence, the energy consumption of user $k$ is given by
\begin{eqnarray}
\xi_k = \xi_{1,k}^{\sf{u}} + \xi_{2,k}^{\sf{u}} =  \alpha_{k}{f_{k}^{\sf{u}}}^2 (c_{k,0} {+} s_{k}^{\sf{u}} c_{k,1} {+} (1{-}s_{k}^{\sf{u}})c_{k}^{\sf{co,u}}) +   \frac{(p_{k} {+} p_{k,0})(1{-}s_{k}^{\sf{u}})b_k^{\sf{in}}}{\omega_k^{\sf{u}} \log_2 (1 {\plus} p_{k} \beta_{k,0})} .
\end{eqnarray}

Practically, all users want to save energy and enjoy low application execution latency. 
Hence,  we adopt the WEDC as the objective function of each user $k$ as follows:
\begin{eqnarray}
\Xi_k = w_k^{\sf{T}} T_{k} + w_k^{\sf{E}} \xi_{k}, \\[-30 pt] \nonumber
\end{eqnarray} 
where $w_k^{\sf{T}}$ and $w_k^{\sf{E}}$ represent the weights corresponding to the service latency and consumed energy, respectively. 
These weights can be pre-determined by the users to reflect their priorities or interests. 
The proposed design aims to minimize the WEDC function for each user while maintaining fairness among all users. 
Towards this end, we consider the following min-max optimization problem:
\begin{eqnarray}
\begin{aligned}
(\mathcal{P}_1)  \; \; &  \min\limits_{\Omega_1} \max\limits_{k} \; \Xi_k   \nonumber \\[-5 pt]
\text{s.t} \quad
&(\text{C}1): f_{k}^{\sf{u}} \leq F_k^{\sf{max}},  \forall k , 
\hspace{1.25 cm} (\text{C}4): s_{k}^{\sf{u}} + s_{k}^{\sf{f}} +  s_{k}^{\sf{c}} =1,  \forall k ,
\hspace{1.05 cm} (\text{C}7): 0 \leq \rho_k \leq \rho_k^{\sf{max}}, \; \forall k , \\[-2 pt] 
& (\text{C}2): \sum\nolimits_{k}f_{k}^{\sf{f}} \leq F^{\sf{f,max}}, 
\hspace{0.9 cm} (\text{C}5):\omega_k^{\sf{u,min}} \leq \omega_k^{\sf{u}}\leq \omega_k^{\sf{u,max}},  \forall k , 
\hspace{0.35 cm} (\text{C}8): \sum\nolimits_k d_{k} \leq D^{\sf{max}}, \\[-2 pt]
&(\text{C}3):s_{k}^{\sf{u}},s_{k}^{\sf{f}},s_{k}^{\sf{c}}{\in} \{0,1\}, \forall k , 
\hspace{0.3 cm} (\text{C}6): 0 \leq \rho_k p_{k} \leq P_{k}^{\sf{max}},  \forall k , 
\hspace{1.05 cm} (\text{C}9): T_k \leq T_k^{\sf{max}}, \forall k,  \\[4 pt]
\end{aligned}
\label{p1}	
\end{eqnarray}
where $\Omega_1 = \cup_{k\in \mathcal{K}}\Omega_{1,k}$, $\Omega_{1,k} =\{s_k^{\sf{u}}, s_k^{\sf{f}}, s_k^{\sf{c}},  \omega_k^{\sf{u}}, f_k^{\sf{u}}, f_k^{\sf{f}}, p_k,$ $ \rho_k, d_k\}$;
$F_k^{\sf{max}}$ is the maximum CPU clock speed of user $k$, $F^{\sf{f,max}}$ is the maximum CPU clock speed of the fog server,  
$P_{k}^{\sf{max}}$ is the maximum transmit power of user $k$, $[\omega_k^{\sf{u,min}} , \omega_k^{\sf{u,max}}]$ denotes the
feasible range of the compression ratio $\omega_k^{\sf{u}}$ which can guarantee the required QoS of the recovered data. In particular, for lossless DC where the perceived QoS $q_k^{\sf qu,u} = 1$ for all $\omega_{k}^{\sf u}$, this feasible range is determined as $\omega_k^{\sf{u,min}}=\omega_{k,1}^{\sf{u,min}}$ and $\omega_k^{\sf{u,max}}= \omega_{k,1}^{\sf{u,max}}$.
For lossy DC where the perceived QoS  is required to be greater than $q_k^{\sf qu,u,min}$, this range is determined as $\omega_k^{\sf{u,min}}=\omega_{k,1}^{\sf{u,min}}$  and $\omega_k^{\sf{u,max}} = \min\left\{ \omega_{k,1}^{\sf u,max}, \left((\gamma_{k,3}^{\sf qu,u} -q_k^{\sf qu,u,min})/\gamma_{k,1}^{\sf qu,u} \right)^{1/\gamma_{k,2}^{\sf qu,u}} \right\} $.
In this problem, (C1) and (C2) represent the constraints on the computing resources at the users and the fog server, respectively, while the offloading decision constraints are characterized by (C3) and (C4). The constraints on the compression ratio are captured by (C5), while (C6) and (C7)  impose constraints on the maximum user transmit power and the bandwidth, respectively. Finally, (C8) and (C9) are the constraints on the limited backhaul capacity and delay, respectively.

\section{Optimal Algorithm Design for DC at only Mobile Users} \label{st3}
\subsection{Problem Transformation}
To gain insight into its non-smooth min-max objective function, we recast $(\mathcal{P}_1)$ into the following equivalent problem:
\begin{eqnarray}
\begin{aligned}
(\mathcal{P}_2)  \; \;   \underset{\Omega_1 \cup \eta}{\min} \; \eta   \; \; \text{s.t} \quad
(\text{C}0):\; \Xi_{k} \leq \eta, \forall k,  \;  (\text{C}1)-(\text{C}9),\nonumber
\end{aligned}
\label{p3a1}		
\end{eqnarray}
where $\eta$ is an auxiliary variable. $(\mathcal{P}_2)$ is a
MINLP problem which is difficult to solve due to \textit{the complex fractional and bilinear form of the transmission time and energy consumption, the logarithmic transmission rate function, and the mix of binary offloading decision variables and continuous variables}. Conventional approaches usually 
decompose the problem into multiple subproblems which optimize the
 offloading decision, and the computing and radio resource  allocation separately as in \cite{bi2018computation, du2017computation} or 
relax the binary variables  as in \cite{zhao2017energy, wang2017multi}. These approaches can obtain only sub-optimal solutions. 

To solve the problem optimally, we first study how to classify the users into two sets, namely, a \textit{``locally executing user set''} which is the set of users executing
their applications locally, and an \textit{``offloading user set''} which is the set of users offloading their applications for processing at the fog/cloud server.
This classification is important because, in all constraints of $(\mathcal{P}_2)$, the optimization variables corresponding to the locally executing users are independent from
the optimization variables of the other users. 
Hence, the decisions for the locally executing users 
can be optimized by decomposing $(\mathcal{P}_2)$ into  user independent subproblems which can be solved separately. 
The optimal algorithm is developed based on the bisection search approach where in each search iteration, we perform: 1) 
user classification based on the current value of $\eta$ using the results in \textbf{Theorem~\ref{thrm_opt_class}} below; 2) feasibility verification
for sub-problem $(\mathcal{P}_{\mathcal{B}})$ of $(\mathcal{P}_2)$ corresponding to the offloading user set $\mathcal{B}$; and
3) updates of  lower and upper bounds on $\eta$ according to the feasibility verification outcome. The detailed design
is presented in the following. 


\renewcommand{\baselinestretch}{1.2}
\setlength{\textfloatsep}{5 pt}
\begin{algorithm}[t]
	\footnotesize
	\caption{Optimal Joint DC, Offloading, and Resource Allocation (JCORA)}
	\label{alg1}
	\begin{algorithmic}[1]
		\State \textbf{Initialize}: Compute $\eta_k^{\sf{lo}}, \forall k \in \mathcal{K}$ as in (\ref{eq20}), choose $\epsilon$,  assign $\eta^{\sf{min}} = 0$, $\eta^{\sf{max}}= \max\limits_k(\eta_k^{\sf{lo}})$, and set $\text{BOOL}=\textit{False}$.
		\While{$(\eta^{\sf{max}}-\eta^{\sf{min}}>\epsilon)$ \& $(\text{BOOL} = \textit{False})$}
		\State Assign $\eta = (\eta_{\sf{max}}+\eta_{\sf{min}})/2$, and then define sets $\mathcal{A} =  \lbrace k \vert \eta_{k}^{\sf{lo}} \leq \eta \rbrace$ and $\mathcal{B} =\mathcal{K}/\mathcal{A}$.
		\State Check feasibility of $(\mathcal{P}_{\mathcal{B}})$ as in \textbf{Section~\ref{stC}}. 
		\State \textbf{if} {\textit{$(\mathcal{P}_{\mathcal{B}})$ is feasible}} \textbf{then} $\eta^{\sf{max}} = \eta$, \text{BOOL} = \textit{True},
		\textbf{else}  $\eta^{\sf{min}} = \eta$, \text{BOOL} = \textit{False},
		\textbf{end if}	
		\EndWhile 
	\end{algorithmic}
\end{algorithm}
\renewcommand{\baselinestretch}{1.45}

\subsection{User Classification} \label{UeClass}
Let $\mathcal{A}$ be the locally executing user set, and $\mathcal{B}$ be the offloading user set. We further define any pair of sets $(\mathcal{A},\mathcal{B})$ satisfying $\mathcal{B}=\mathcal{K} \backslash \mathcal{A}$ as a user classification. 
By defining \begin{eqnarray} \mathcal{Q}_{k,0}(f_k^{\sf{u}}) = w_k^\text{\tiny E} \alpha_k (f_k^{\sf{u}})^2 c_{k} +  w_k^\text{\tiny T} {c_{k}}/{f_k^{\sf{u}}}, \end{eqnarray} and $\Omega_{\mathcal{B}} = \cup_{k\in \mathcal{B}}\Omega_{1,k}$, then for a given classification $(\mathcal{A},\mathcal{B})$, problem $(\mathcal{P}_2)$ can be tackled by solving two sub-problems $(\mathcal{P}_{\mathcal{A}})$ and $(\mathcal{P}_{\mathcal{B}})$ for the users in sets $\mathcal{A}$ and $\mathcal{B}$, respectively,  as follows:
 
	\begin{minipage}[t]{0.4\linewidth}
		\begin{eqnarray}
		\begin{aligned}
		(\mathcal{P}_{\mathcal{A}})  \; \; &  \underset{\lbrace f_k^{\sf{u}}\rbrace_{k \in \mathcal{A}}, \eta}{\min} \; \eta \nonumber \\[-0 pt]
		\text{s.t} \quad
		&(\text{CA}0):\; \mathcal{Q}_{k,0}(f_k^{\sf{u}}) \leq \eta, \forall k \in \mathcal{A},\\[-0 pt]               
		&(\text{CA}2):\; {c_{k}}/{T_k^{\sf{max}}} \leq f_{k}^{\sf{u}} \leq F_k^{\sf{max}}, \forall k \in \mathcal{A},\\[-0 pt]
		\end{aligned}
		\label{p2a}	
		\end{eqnarray}
	\end{minipage}%
	\hfill%
	\begin{minipage}[t]{0.4\linewidth}
		\begin{eqnarray}
		\begin{aligned}
		(\mathcal{P}_{\mathcal{B}})  \; \; &  \underset{\Omega_{\mathcal{B}}, \eta}{\min} \; \eta \nonumber \\[-0 pt]
		\text{s.t} \quad
		&(\text{C}0):\; \Xi_{k} \leq \eta, \forall k \in \mathcal{B},\\[-0 pt]               
		&(\text{C}1)-(\text{C}9), \forall k \in \mathcal{B},\\[-0 pt]
		\end{aligned}
		\label{p3a}	
		\end{eqnarray}
	\end{minipage} 
	
\vspace{0.5 cm}

Note that the variable set $\Omega_{1,k}$ corresponding to user $k$ in $\mathcal{A}$ becomes $\lbrace f_k^{\sf{u}} \rbrace$ since we have $s_k^{\sf{u}}=1$ and the other variables can be set equal to zero when user $k$ executes its application locally. 
In such a scenario, $\Xi_k$ can be simplified to $\mathcal{Q}_{k,0}(f_k^{\sf{u}})$.
To attain more insight into the user classification, we now study the relationship between  optimization sub-problems $(\mathcal{P}_{\mathcal{A}})$ and $(\mathcal{P}_{\mathcal{B}})$ in the following lemma.
\begin{lemma}
\label{lemma_1} We denote the optimal values of $(\mathcal{P}_2)$, $(\mathcal{P}_{\mathcal{A}})$, and $(\mathcal{P}_{\mathcal{B}})$ as  $\eta^{\star}$, $\eta_{\mathcal{A}}^{\star}$, and $\eta_{\mathcal{B}}^{\star}$, respectively. Then, we have
\begin{enumerate}
\item $\eta^{\star} \leq \max(\eta_{\mathcal{A}}^{\star},\eta_{\mathcal{B}}^{\star})$ for any classification $(\mathcal{A},\mathcal{B})$.
\item The merged optimal solutions of $(\mathcal{P}_{\mathcal{A}})$ and $(\mathcal{P}_{\mathcal{B}})$ are the optimal solution of $(\mathcal{P}_2)$ if 
\beq \label{opt_class_cond}
\eta^{\star} = \max(\eta_{\mathcal{A}}^{\star},\eta_{\mathcal{B}}^{\star}).
\eeq
\item If $\mathcal{B}^{\prime} \subset \mathcal{B}$, then, we have $\eta_{\mathcal{B}^{\prime}}^{\star} \leq \eta_{\mathcal{B}}^{\star}$.
\end{enumerate}
\end{lemma}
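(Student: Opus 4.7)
My plan is to prove the three parts sequentially, using a ``merging and restriction'' argument on feasible solutions.

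For Part 1, I would argue constructively. Let $\{\hat{f}_k^{\sf u}\}_{k \in \mathcal{A}}$ be an optimal solution of $(\mathcal{P}_{\mathcal{A}})$ and let $\hat{\Omega}_{\mathcal{B}}$ be an optimal solution of $(\mathcal{P}_{\mathcal{B}})$. I would merge them into a candidate solution for $(\mathcal{P}_2)$ by setting, for each $k \in \mathcal{A}$, $s_k^{\sf u}=1$, $s_k^{\sf f}=s_k^{\sf c}=0$, $f_k^{\sf u}=\hat{f}_k^{\sf u}$, and the remaining variables to zero (or their lower feasible bound, e.g., $\omega_k^{\sf u}=\omega_k^{\sf u,min}$), and for each $k \in \mathcal{B}$ using $\hat{\Omega}_{\mathcal{B}}$. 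The key verification is that every constraint $(\text{C}1)$--$(\text{C}9)$ decouples across users except $(\text{C}2)$ and $(\text{C}8)$, which are sum constraints; for users in $\mathcal{A}$ the corresponding $f_k^{\sf f}$ and $d_k$ are zero so these sums are unchanged from $(\mathcal{P}_{\mathcal{B}})$'s solution and remain feasible. With $\eta = \max(\eta_{\mathcal{A}}^{\star},\eta_{\mathcal{B}}^{\star})$, the per-user constraints $(\text{C}0)$ hold by construction, so the merged tuple is feasible for $(\mathcal{P}_2)$, yielding $\eta^{\star} \leq \max(\eta_{\mathcal{A}}^{\star},\eta_{\mathcal{B}}^{\star})$.

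For Part 2, the argument is immediate from Part 1: if equality $\eta^{\star}=\max(\eta_{\mathcal{A}}^{\star},\eta_{\mathcal{B}}^{\star})$ holds, then the merged feasible solution attains the optimum of $(\mathcal{P}_2)$, so it is optimal.

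For Part 3, I would use a restriction argument. Let $(\hat{\Omega}_{\mathcal{B}},\hat{\eta})$ be optimal for $(\mathcal{P}_{\mathcal{B}})$ with $\hat{\eta}=\eta_{\mathcal{B}}^{\star}$. Restrict this solution to indices $k \in \mathcal{B}^{\prime}$, keeping $\hat{\eta}$ unchanged. I would then verify feasibility for $(\mathcal{P}_{\mathcal{B}^{\prime}})$ constraint-by-constraint: $(\text{C}0),(\text{C}1),(\text{C}3)$--$(\text{C}7),(\text{C}9)$ are per-user, so they carry over; the sum constraints $(\text{C}2)$ and $(\text{C}8)$ become
\begin{equation}
\sum\nolimits_{k\in\mathcal{B}^{\prime}} \hat{f}_k^{\sf f} \leq \sum\nolimits_{k\in\mathcal{B}} \hat{f}_k^{\sf f} \leq F^{\sf f,max}, \quad \sum\nolimits_{k\in\mathcal{B}^{\prime}} \hat{d}_k \leq \sum\nolimits_{k\in\mathcal{B}} \hat{d}_k \leq D^{\sf max}, \nonumber
\end{equation}
since dropping nonnegative terms from a sum cannot increase it. Hence $(\hat{\Omega}_{\mathcal{B}}|_{\mathcal{B}^{\prime}},\hat{\eta})$ is feasible for $(\mathcal{P}_{\mathcal{B}^{\prime}})$, so $\eta_{\mathcal{B}^{\prime}}^{\star} \leq \hat{\eta} = \eta_{\mathcal{B}}^{\star}$.

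The main obstacle, though minor, is bookkeeping around the sum constraints $(\text{C}2)$ and $(\text{C}8)$ and ensuring that ``dummy'' variable assignments for users in $\mathcal{A}$ in Part 1 do not accidentally violate the per-user bounds in $(\text{C}5)$--$(\text{C}7)$ or make the computed $\Xi_k$ or $T_k$ expressions ill-defined (for instance, the terms with $\omega_k^{\sf u}$ or $\rho_k$ in the denominator are multiplied by $(1-s_k^{\sf u})=0$, so they vanish by convention; this must be stated explicitly to avoid $0/0$ ambiguity). Once that convention is fixed, all three claims follow from elementary feasibility manipulations.
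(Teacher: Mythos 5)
Your proposal is correct and follows essentially the same route as the paper: Parts~1 and~2 are proved by merging the optimal solutions of $(\mathcal{P}_{\mathcal{A}})$ and $(\mathcal{P}_{\mathcal{B}})$ into a feasible point of $(\mathcal{P}_2)$ with objective $\max(\eta_{\mathcal{A}}^{\star},\eta_{\mathcal{B}}^{\star})$, and Part~3 by restricting the optimal solution of $(\mathcal{P}_{\mathcal{B}})$ to $\mathcal{B}^{\prime}$ and noting that the coupling constraints $(\text{C}2)$ and $(\text{C}8)$ only relax when nonnegative terms are dropped. Your version is in fact slightly tighter in its bookkeeping (explicitly zeroing $f_k^{\sf f}$ and $d_k$ for $k\in\mathcal{A}$ and flagging the $0/0$ convention), and your Part~3 stops at the restriction argument, which already suffices for the stated inequality, whereas the paper appends an additional resource-redistribution step that is not needed for the non-strict bound.
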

\begin{IEEEproof} The proof is given in Appendix~\ref{prf_lemma_1} .
\end{IEEEproof}

Considering \textbf{Lemma~\ref{lemma_1}}, instead of solving $(\mathcal{P}_2)$, we can equivalently solve the two sub-problems $(\mathcal{P}_{\mathcal{A}})$ and $(\mathcal{P}_{\mathcal{B}})$. 
Moreover, a classification $(\mathcal{A},\mathcal{B})$ is optimal if the condition in \eqref{opt_class_cond} holds. The optimal solution of $(\mathcal{P}_{\mathcal{A}})$ can be obtained as described in \textbf{Proposition~\ref{prop_PA}} while solving $(\mathcal{P}_{\mathcal{B}})$ requires a more complex approach which will be discussed in \textbf{Section~\ref{st_fea_B}}.
\begin{proposition}
\label{prop_PA}
The optimal objective value of $(\mathcal{P}_{\mathcal{A}})$ can be expressed as
$
\eta_{\mathcal{A}}^{\star} = \max_{k \in \mathcal{A}} \eta_k^{\sf{lo}},
$
where $\eta_k^{\sf{lo}}$ is defined as 
\beq \label{eq20}
\eta_k^{\sf{lo}} = \begin{cases}
	\mathcal{Q}_{k,0} (f_k^{\sf{u,sta}}),  \text{ if }  f_k^{\sf{u,sta}} \in [f_k^{\sf{u,min}},F_k^{\sf{max}}]\\
	\min\big(\mathcal{Q}_{k,0}(f_k^{\sf{u,min}}), \mathcal{Q}_{k,0}(F_k^{\sf{max}}) \big), \text{ otherwise},
	\end{cases}
\eeq
where $f_k^{\sf{u,min}} = {c_{k}}/{T_k^{\sf{max}}}$ and $f_k^{\sf{u,sta}}= \sqrt[3]{{w_k^{\sf{T}}}/{(2w_k^{\sf{E}}  \alpha_k)}}$.

\end{proposition}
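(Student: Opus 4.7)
The plan is to exploit the fact that in $(\mathcal{P}_{\mathcal{A}})$ the variables $\{f_k^{\sf u}\}_{k \in \mathcal{A}}$ are completely decoupled across users: each $f_k^{\sf u}$ appears in only a single constraint of type (CA0) and a single box constraint (CA2). Therefore, by the usual epigraph argument, $(\mathcal{P}_{\mathcal{A}})$ is equivalent to
\begin{equation}
\eta_{\mathcal{A}}^{\star} \;=\; \max_{k \in \mathcal{A}} \; \min_{f_k^{\sf u} \in [f_k^{\sf u,min},\, F_k^{\sf max}]} \mathcal{Q}_{k,0}(f_k^{\sf u}),
\end{equation}
so it suffices to characterize the per-user inner minimization and identify it with $\eta_k^{\sf lo}$.

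Next I would analyze the scalar function $\mathcal{Q}_{k,0}(f) = w_k^{\sf E} \alpha_k c_k f^2 + w_k^{\sf T} c_k / f$ on $f > 0$. Since both coefficients are positive, computing the second derivative $2 w_k^{\sf E} \alpha_k c_k + 2 w_k^{\sf T} c_k / f^3 > 0$ shows that $\mathcal{Q}_{k,0}$ is strictly convex on $(0,\infty)$. Setting $\mathcal{Q}_{k,0}'(f) = 2 w_k^{\sf E} \alpha_k c_k f - w_k^{\sf T} c_k / f^2 = 0$ and solving yields the unique unconstrained stationary point $f_k^{\sf u,sta} = \sqrt[3]{w_k^{\sf T}/(2 w_k^{\sf E} \alpha_k)}$ (the $c_k$ factors cancel), which is the global minimizer by strict convexity.

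Then I would split into the two cases given in the statement. If $f_k^{\sf u,sta} \in [f_k^{\sf u,min}, F_k^{\sf max}]$, the unconstrained minimizer is feasible, so the minimum value is simply $\mathcal{Q}_{k,0}(f_k^{\sf u,sta})$. Otherwise, strict convexity together with the fact that $f_k^{\sf u,sta}$ lies outside the interval implies $\mathcal{Q}_{k,0}$ is strictly monotone on $[f_k^{\sf u,min}, F_k^{\sf max}]$ (decreasing if $f_k^{\sf u,sta} > F_k^{\sf max}$, increasing if $f_k^{\sf u,sta} < f_k^{\sf u,min}$), so the minimum is attained at one of the endpoints and equals $\min\!\bigl(\mathcal{Q}_{k,0}(f_k^{\sf u,min}), \mathcal{Q}_{k,0}(F_k^{\sf max})\bigr)$. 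This matches the definition of $\eta_k^{\sf lo}$ in \eqref{eq20}, and combining with the decoupling step gives $\eta_{\mathcal{A}}^{\star} = \max_{k \in \mathcal{A}} \eta_k^{\sf lo}$.

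No step is genuinely hard: the only thing to be careful about is the justification that the lower bound $f_k^{\sf u,min} = c_k/T_k^{\sf max}$ embedded in (CA2) is indeed the correct translation of the delay constraint (C9) under $s_k^{\sf u}=1$ (so that $(\mathcal{P}_{\mathcal{A}})$ is well-posed and its feasible set is non-empty whenever the original problem admits a locally-executing user), and that one can freely drop all other variables for $k\in\mathcal{A}$ as argued just before the statement. Once these reductions are in place, the proof reduces to the one-dimensional convex optimization above.
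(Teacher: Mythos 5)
Your proposal is correct and follows essentially the same route as the paper's own proof in Appendix~\ref{prf_prop_PA}: both decouple $(\mathcal{P}_{\mathcal{A}})$ into independent one-dimensional problems per user, exploit the convexity of $\mathcal{Q}_{k,0}$ and its unique stationary point $f_k^{\sf{u,sta}}$, and resolve the box constraint via the stationary-point/endpoint case split to recover \eqref{eq20}, after which taking the maximum over $k\in\mathcal{A}$ gives $\eta_{\mathcal{A}}^{\star}$. The additional details you supply (the explicit second-derivative check and the monotonicity argument when $f_k^{\sf{u,sta}}$ falls outside $[f_k^{\sf{u,min}},F_k^{\sf{max}}]$) are harmless elaborations of steps the paper leaves implicit.
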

\begin{IEEEproof} 
The proof is given in Appendix~\ref{prf_prop_PA}.
\end{IEEEproof}

Based on the results in \textbf{Lemma~\ref{lemma_1}} and \textbf{Proposition~\ref{prop_PA}}, the optimal user classification 
can be performed as described in the following theorem. 
\begin{theorem} \label{thrm_opt_class}
If $\eta^{\star}$ is the optimum objective value of problem $(\mathcal{P}_2)$, then an optimal classification, $(\mathcal{A}^{\star},\mathcal{B}^{\star})$, can be determined as
$ 
\mathcal{A}^{\star}  =  \lbrace k \vert \eta_k^{\sf{lo}} \leq \eta^{\star} \rbrace, \label{opt_class1} \text{ and }
\mathcal{B}^{\star}  =  \mathcal{K} \backslash \mathcal{A}^{\star}. \label{opt_class2}
$
\end{theorem}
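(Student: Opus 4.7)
My plan is to verify the optimality condition from Lemma 1(ii), namely $\eta^{\star} = \max(\eta_{\mathcal{A}^{\star}}^{\star}, \eta_{\mathcal{B}^{\star}}^{\star})$, for the specific partition defined in the theorem. Together with Lemma 1(i), showing this equality is enough to conclude that the merged optimal solutions of $(\mathcal{P}_{\mathcal{A}^{\star}})$ and $(\mathcal{P}_{\mathcal{B}^{\star}})$ form an optimal solution of $(\mathcal{P}_2)$, hence that $(\mathcal{A}^{\star}, \mathcal{B}^{\star})$ is an optimal classification.

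The easy half concerns $\mathcal{A}^{\star}$. By the definition $\mathcal{A}^{\star} = \{k \mid \eta_k^{\sf{lo}} \leq \eta^{\star}\}$ and Proposition~\ref{prop_PA}, I would immediately get
\[
\eta_{\mathcal{A}^{\star}}^{\star} = \max_{k \in \mathcal{A}^{\star}} \eta_k^{\sf{lo}} \leq \eta^{\star}.
\]

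The less obvious half is to bound $\eta_{\mathcal{B}^{\star}}^{\star}$. Here I would invoke an optimal solution of $(\mathcal{P}_2)$ itself and read off from it the set $\mathcal{A}_0 := \{k \mid s_k^{\sf{u}} = 1\}$ of users who execute locally at this optimum, together with the complementary offloading set $\mathcal{B}_0 = \mathcal{K}\setminus \mathcal{A}_0$. For any $k \in \mathcal{A}_0$, the WEDC reduces to $\mathcal{Q}_{k,0}(f_k^{\sf{u}})$, and the constraint $\Xi_k \leq \eta^{\star}$ forces $\mathcal{Q}_{k,0}(f_k^{\sf{u}}) \leq \eta^{\star}$; since $\eta_k^{\sf{lo}}$ is by construction the minimum of $\mathcal{Q}_{k,0}$ over the feasible CPU-speed range, this gives $\eta_k^{\sf{lo}} \leq \eta^{\star}$, i.e.\ $k \in \mathcal{A}^{\star}$. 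Hence $\mathcal{A}_0 \subseteq \mathcal{A}^{\star}$, equivalently $\mathcal{B}^{\star} \subseteq \mathcal{B}_0$. Moreover, the restriction of the optimal $(\mathcal{P}_2)$ solution to the users in $\mathcal{B}_0$ is feasible for $(\mathcal{P}_{\mathcal{B}_0})$ with objective value no larger than $\eta^{\star}$, so $\eta_{\mathcal{B}_0}^{\star} \leq \eta^{\star}$. Applying the monotonicity property from Lemma~\ref{lemma_1}(3) to the inclusion $\mathcal{B}^{\star} \subseteq \mathcal{B}_0$ then yields $\eta_{\mathcal{B}^{\star}}^{\star} \leq \eta_{\mathcal{B}_0}^{\star} \leq \eta^{\star}$.

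Combining the two bounds gives $\max(\eta_{\mathcal{A}^{\star}}^{\star}, \eta_{\mathcal{B}^{\star}}^{\star}) \leq \eta^{\star}$, while Lemma~\ref{lemma_1}(1) supplies the reverse inequality, so equality holds and Lemma~\ref{lemma_1}(2) closes the argument. The main conceptual hurdle I anticipate is the $\mathcal{B}^{\star}$ side: one must resist the temptation to assume $\mathcal{B}^{\star} = \mathcal{B}_0$ (the partition induced by the optimum of $(\mathcal{P}_2)$ need not coincide with the threshold partition, since users with $\eta_k^{\sf{lo}} \leq \eta^{\star}$ could still be offloaded at some optimum of $(\mathcal{P}_2)$), and instead leverage the subset relation via the monotonicity statement in Lemma~\ref{lemma_1}(3). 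The rest of the work is essentially bookkeeping.
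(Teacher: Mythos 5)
Your proof is correct and follows essentially the same route as the paper's: both identify a reference optimal partition (your $(\mathcal{A}_0,\mathcal{B}_0)$ is the paper's $(\mathcal{A}',\mathcal{B}')$), show $\mathcal{A}_0\subseteq\mathcal{A}^{\star}$ hence $\mathcal{B}^{\star}\subseteq\mathcal{B}_0$, bound $\eta_{\mathcal{A}^{\star}}$ via Proposition~\ref{prop_PA} and $\eta_{\mathcal{B}^{\star}}$ via the monotonicity in Lemma~\ref{lemma_1}(3), and close with Lemma~\ref{lemma_1}(1)--(2). The only (welcome) difference is that you derive the reference partition explicitly from an optimal solution of $(\mathcal{P}_2)$ rather than postulating an optimal classification outright.
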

\begin{IEEEproof}
The proof is given in \textbf{Appendix~\ref{prf_thrm_opt_class}}.
\end{IEEEproof}

\begin{figure}[t]
	\centering
	\begin{minipage}[t]{0.79\linewidth}
		\includegraphics[width=0.9\textwidth]{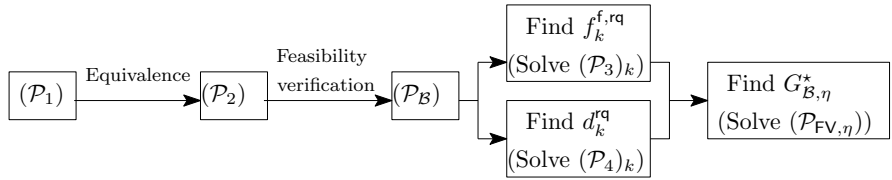}
		\caption{Relationship between the (sub)problems when solving $(\mathcal{P}_1)$  by the JCORA algorithm.}
		\label{P2-figAlg1}
	\end{minipage}%
\end{figure}

\subsection{General Optimal Algorithm Design} \label{stC}
The results in \textbf{Theorem~\ref{thrm_opt_class}} are now employed to develop an optimal algorithm for solving $(\mathcal{P}_2)$ by iteratively 
solving $(\mathcal{P}_{\mathcal{A}})$ and $(\mathcal{P}_{\mathcal{B}})$ and updating $({\mathcal{A}},{\mathcal{B}})$ until the optimal $({\mathcal{A}}^{\star},{\mathcal{B}}^{\star})$ is obtained. 
The general optimal algorithm is presented in \textbf{Algorithm~\ref{alg1}}. In this algorithm, we initially calculate $\eta_k^{\sf{lo}}$ for all users in $\mathcal{K}$ as in \eqref{eq20}.
Then, we employ the bisection search to find the optimum $\eta^{\star}$ where  upper bound $\eta^{\sf{max}}$ and lower bound $\eta^{\sf{min}}$ are iteratively updated until the difference between them becomes sufficiently small, 
$(\mathcal{P}_{\mathcal{B}})$
is feasible, and the sets $\mathcal{A}$ and $\mathcal{B}$ do not change. 
At convergence, the optimal classification solution can be obtained by merging the solutions of $(\mathcal{P}_{\mathcal{A}})$ and $(\mathcal{P}_{\mathcal{B}})$. 
The optimal solution of $(\mathcal{P}_{\mathcal{A}})$ can be determined using \textbf{Proposition~\ref{prop_PA}}
and the verification of the feasibility of $(\mathcal{P}_{\mathcal{B}})$ is addressed in the following. The relationship between the (sub)problems when solving $(\mathcal{P}_1)$ is illustrated in Fig.~\ref{P2-figAlg1}.


\subsection{Feasibility Verification of $(\mathcal{P}_{\mathcal{B}})$ } \label{st_fea_B}
In order to verify the feasibility of $(\mathcal{P}_{\mathcal{B}})$, we consider the following problem
\begin{eqnarray}
(\mathcal{P}_{\sf{FV},\eta})  \; \;   \min\limits_{\Omega_{\mathcal{B}}} \; \sum\limits_{k \in \mathcal{B}} f_k^{\sf{f}} 
\quad \text{s.t.} \; (\text{C}0), (\text{C}1),  (\text{C}3) -  (\text{C}9). \nonumber 
\label{p5}	
\end{eqnarray}
This problem minimizes the total required computing resource of the fog server subject to all constraints of $(\mathcal{P}_{\mathcal{B}})$ except $(\text{C}2)$.  
Let $G_{\mathcal{B},\eta}^{\star}$ be the objective value of problem $(\mathcal{P}_{\sf{FV},\eta})$. 
Then, the feasibility of $(\mathcal{P}_{\mathcal{B}})$ can be verified by comparing $G_{\mathcal{B},\eta}^{\star}$ to the available fog computing resource $F^{\sf{f,max}}$. 
In particular, problem $(\mathcal{P}_{\mathcal{B}})$ is feasible if $G_{\mathcal{B},\eta}^{\star} \leq F^{\sf{f,max}}$. 
Otherwise, $(\mathcal{P}_{\mathcal{B}})$ is infeasible.

We propose to solve $(\mathcal{P}_{\sf{FV},\eta})$ as follows.
First, recall that there are two possible scenarios for executing the tasks of the users in set $\mathcal{B}$ (referred to as modes): \textbf{\textit{Mode 1}} - task execution at the fog server, i.e., $s_k^{\sf{f}}=1$; \textbf{\textit{Mode 2}} -  task execution at the cloud server, i.e., $s_k^{\sf{c}}=1$. 
In addition, the fog computing resources are only required by the users in \textbf{\textit{Mode 1}} and the backhaul resources are only used by the users in \textbf{\textit{Mode 2}}.
Considering these two modes, a three-step solution approach is proposed to verify the feasibility of sub-problem  $(\mathcal{P}_{\mathcal{B}})$ as follows.
In \textbf{Step~1}, the minimum required fog computing resource of every user is determined by assuming that it is in \textbf{\textit{Mode 1}}. This step is fulfilled by solving  sub-problem $(\mathcal{P}_3)_k$ for every user $k$, see \textbf{Section~\ref{st4a1}}.
In \textbf{Step~2}, the minimum required backhaul rate for each user is optimized by assuming that it is in \textbf{\textit{Mode 2}}. This step can be accomplished by solving subproblem $(\mathcal{P}_4)_k$ for every user $k$, see \textbf{Section~\ref{st4a2}}.
In \textbf{Step~3}, using the results obtained in the two previous steps, problem $(\mathcal{P}_{\sf{FV},\eta})$ is equivalently 
transformed to a mode-mapping problem, see \textbf{Section~\ref{st5D}}.

\subsubsection{\textbf{Step 1} - Minimum Fog Computing Resources for User $k \in \mathcal{B}$} \label{st4a1}
If the application of user $k$ is executed at the fog server, the minimum fog computing resource required for this application, denoted as $f_{k}^{\sf{f,rq}}$, can be optimized based on the following sub-problem:
\begin{eqnarray}
(\mathcal{P}_3)_k  \; \;   \min\limits_{\Omega_{2,k}} \; f_k^{\sf{f}} 
\quad \text{s.t.} \quad s_k^{\sf{f}}  =  1,\;  (\text{C}0)_k, \;(\text{C}1)_k, \; (\text{C}5)_k -  (\text{C}7)_k, \; (\text{C}9)_k, \nonumber 
\label{p3k}	
\end{eqnarray} 
where $\Omega_{2,k} = \{ \omega_k^{\sf{u}}, f_k^{\sf{u}}, f_k^{\sf{f}},  p_k, \rho_k  \}$, $(\text{C}0)_k$, $(\text{C}1)_k$, $(\text{C}5)_k-(\text{C}7)_k$, and $(\text{C}9)_k$ denote the respective constraints of user $k$ corresponding to $(\text{C}0)$, $(\text{C}1)$, $(\text{C}5)-(\text{C}7),$ and $(\text{C}9)$.
In sub-problem $(\mathcal{P}_3)_k$, the WEDC function $\Xi_k$ consists of posynomials and
other terms involving $\log(1+p_k \beta_{k,0} )$. We can convert $\Xi_k$ into a convex function via logarithmic transformation
as follows. 
When $s_k^{\sf{f}} = 1$, all variables in set $\Omega_{2,k}$ must be positive to satisfy constraints (C0) and (C9); therefore, we can employ the following variable transformations: ${\tilde{\omega}_k^{\sf{u}}} = \log(\omega_k^{\sf{u}})$, ${\tilde{f}_k^{\sf{u}}} = \log (f_k^{\sf{u}})$, ${\tilde{f}_k^{\sf{f}}} = \log (f_k^{\sf{f}})$, ${\tilde{p}_k} = \log (p_k)$, and ${\tilde{\rho}_k} = \log (\rho_k)$. With these transformations, the objective function and all constraints of $(P_3)_k$ except $(\text{C}0)_k$ and $(\text{C}9)_k$ are converted into a linear form  while the total delay and the WEDC in  $(\text{C}9)_k$ and $(\text{C}0)_k$ can be 
rewritten, respectively, as 
$  T_k  =   \frac{ b_k^{\sf{in}} \e^{- {\tilde{\omega}_k^{\sf{u}}} - {\tilde{\rho}_k}}  }{\log \big(1+ \beta_{k,0} \e^{{\tilde{p}_k}}  \big)}  + \mathcal{Q}_{k,1}, \text{ and }
 \Xi_k  = \frac{ w_k^\text{\tiny E} b_k^{\sf{in}} \big[ \e^{{\tilde{p}_k} - {\tilde{\omega}_k^{\sf{u}}} }+ p_{k,0}\e^{-{\tilde{\omega}_k^{\sf{u}}}} \big]  }{\log \big(1+ \beta_{k,0} \e^{{\tilde{p}_k}}  \big)} + w_k^\text{\tiny E} \alpha_k \mathcal{Q}_{k,2} + w_k^\text{\tiny T}T_k, 
$
where $\mathcal{Q}_{k,1} = \big(c_{k,0} {\plus} \gamma_{k,0}^{\sf{u}} \gamma_{k,3}^{\sf{co}} \big) \e^{-{\tilde{f}_k^{\sf{u}}}}             {\plus}            \gamma_{k,0}^{\sf{u}} \gamma_{k,1}^{\sf{co}} \e^{\big(-{\tilde{f}_k^{\sf{u}}} {\plus} \gamma_{k,2}^{\sf{co}} {\tilde{\omega}_k^{\sf{u}}} \big)}          
{\plus}              \big(c_{k,1} {\plus} \gamma_{k,0}^{\sf{u}} \gamma_{k,3}^{\sf{de}} \big)\e^{-{\tilde{f}_k^{\sf{f}}}}     {\plus}      \gamma_{k,0}^{\sf{u}} \gamma_{k,1}^{\sf{de}} \e^{\big(-{\tilde{f}_k^{\sf{f}}}{+} \gamma_{k,2}^{\sf{de}} {\tilde{\omega}_k^{\sf{u}}} \big)}
$ and $\mathcal{Q}_{k,2} =  \big(c_{k,0} \plus \gamma_{k,0}^{\sf{u}} \gamma_{k,3}^{\sf{co}} \big) \e^{2{\tilde{f}_k^{\sf{u}}}}  
\plus  \gamma_{k,0}^{\sf{u}} \gamma_{k,1}^{\sf{co}} \e^{\big(2{\tilde{f}_k^{\sf{u}}} + \gamma_{k,2}^{\sf{co}} {\tilde{\omega}_k^{\sf{u}}} \big)} $. The convexity of $(P_3)_k$ is formally stated in the following proposition.
\begin{proposition} \label{prop2}
	Sub-problem $(\mathcal{P}_3)_k$ is convex with respect to set $\tilde{\Omega}_{2,k} \cup {\tilde{l}_k}$, where ${\tilde{l}_k} = {\tilde{\omega}_k^{\sf{u}}} + {\tilde{\rho}_k}$  and $\tilde{\Omega}_{2,k} = \{ {\tilde{\omega}_k^{\sf{u}}}, {\tilde{f}_k^{\sf{u}}}, {\tilde{f}_k^{\sf{f}}}, {\tilde{p}_k}, {\tilde{\rho}_k} \}$.	
\end{proposition}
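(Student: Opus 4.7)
The plan is to show that after the logarithmic change of variables stated just above the proposition, the problem inherits a structure to which standard convexity rules apply term by term. The objective $f_k^{\sf{f}} = \e^{\tilde{f}_k^{\sf{f}}}$ is an exponential in a single variable and thus convex, and constraints $(\text{C}1)_k$, $(\text{C}5)_k$–$(\text{C}7)_k$ together with the auxiliary coupling ${\tilde{l}_k} = {\tilde{\omega}_k^{\sf{u}}} + {\tilde{\rho}_k}$ become affine in the tilde variables (for instance $(\text{C}6)_k$, originally $\rho_k p_k \leq P_k^{\sf{max}}$, becomes $\tilde{\rho}_k + \tilde{p}_k \leq \log P_k^{\sf{max}}$). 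Hence the whole argument reduces to showing that $T_k$ and $\Xi_k$, as written above, are convex functions of the tilde variables, since then the sublevel sets that appear in $(\text{C}9)_k$ and $(\text{C}0)_k$ are convex.

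For this, I would first observe that $\mathcal{Q}_{k,1}$ and $\mathcal{Q}_{k,2}$ are non-negative linear combinations of terms of the form $\e^{\text{affine}}$, hence convex, and that $w_k^{\sf{T}} T_k$ and $w_k^{\sf{E}}\alpha_k \mathcal{Q}_{k,2}$ inherit convexity under non-negative scaling. The non-trivial terms are of the form $\e^{h}/\log(1+\beta_{k,0}\e^{\tilde{p}_k})$ where $h$ is affine in the tilde variables: $h = -\tilde{l}_k$ inside $T_k$, and $h = \tilde{p}_k - \tilde{\omega}_k^{\sf{u}}$ or $h = -\tilde{\omega}_k^{\sf{u}}$ inside the transmission-energy part of $\Xi_k$. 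Each such term can be rewritten as $\e^{\,h + \psi(\tilde{p}_k)}$ with $\psi(v) := -\log\log(1+\beta_{k,0}\e^{v})$, so it is enough to show that $\psi$ is convex: if so, $h+\psi$ is convex (affine plus convex), and $\e^{\,h+\psi}$ is then convex because the exponential is convex and non-decreasing.

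The crux of the proof, and the main obstacle, is establishing convexity of $\psi$, equivalently log-concavity of $\phi(v) := \log(1+\beta_{k,0}\e^{v})$. I would verify this by a direct second-derivative computation. With $\phi'(v) = \beta_{k,0}\e^{v}/(1+\beta_{k,0}\e^{v})$ and $\phi''(v) = \beta_{k,0}\e^{v}/(1+\beta_{k,0}\e^{v})^2$, a short calculation gives
$$(\phi')^2 - \phi\,\phi'' \;=\; \frac{\beta_{k,0}\e^{v}\bigl[\beta_{k,0}\e^{v} - \log(1+\beta_{k,0}\e^{v})\bigr]}{(1+\beta_{k,0}\e^{v})^2} \;\geq\; 0,$$
where the inequality uses $x \geq \log(1+x)$ for every $x \geq 0$ (an immediate consequence of $\e^{x}\geq 1+x$). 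Thus $(\log\phi)'' \leq 0$, so $\phi$ is log-concave and $\psi$ is convex. Chaining this with the preceding observations, every term making up $T_k$ and $\Xi_k$ is convex in the tilde variables; their non-negative weighted sums remain convex, so $(\text{C}9)_k$ and $(\text{C}0)_k$ are convex inequality constraints. Combined with the convex objective and the affine remaining constraints, this shows that $(\mathcal{P}_3)_k$ is convex over $\tilde{\Omega}_{2,k}\cup\{\tilde{l}_k\}$, as claimed.
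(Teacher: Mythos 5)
Your proof is correct, and it takes a genuinely different (and cleaner) route than the paper's. The paper attacks the troublesome fractional terms by defining the two-variable function $\mathcal{H}(\tilde{p}_k,y_k)=a_{k,0}\e^{a_{k,1}\tilde{p}_k+a_{k,2}y_k}/\log(1+\beta_{k,0}\e^{\tilde{p}_k})$ and computing its Hessian explicitly: it verifies $\partial^2\mathcal{H}/\partial\tilde{p}_k^2\geq 0$ and $|H|>0$, which forces a case split on $a_{k,1}\in\{0,1\}$ and, in the case $a_{k,1}=1$, a rather delicate analysis of the quadratic $\mathcal{H}_{\text{a}}(u_k)=(1+u_k)^2\log^2(1+u_k)-(3u_k+2u_k^2)\log(1+u_k)+2u_k^2$ via its discriminant and root locations. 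You instead absorb the denominator into the exponent, writing each term as $a_{k,0}\,\e^{\,h+\psi(\tilde{p}_k)}$ with $h$ affine and $\psi(v)=-\log\log(1+\beta_{k,0}\e^{v})$, reduce everything to the log-concavity of $v\mapsto\log(1+\beta_{k,0}\e^{v})$, and dispatch that with a one-line second-derivative computation; the composition rule (convex nondecreasing outer function applied to a convex inner function) then finishes the argument uniformly for all affine exponents, with no case analysis. Both proofs ultimately rest on the same elementary inequality $\log(1+x)\leq x$ for $x\geq 0$. What your approach buys is brevity and generality — it works for any affine $h$, not just the specific coefficient patterns $a_{k,1}\in\{0,1\}$ appearing in $T_k$ and $\Xi_k$ — while the paper's brute-force Hessian computation additionally exhibits strict positive definiteness of the Hessian, which is more than the proposition requires. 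Your handling of the remaining pieces (exponential objective, affine transformed constraints, convexity of the sublevel sets in $(\text{C}0)_k$ and $(\text{C}9)_k$, nonnegative weighted sums) matches the paper and is sound.
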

\begin{IEEEproof}
	The proof is given in \textbf{Appendix~\ref{prf_prop2}}.
\end{IEEEproof}

Based on \textbf{Proposition~\ref{prop2}}, we can apply the interior point method to find the optimal solution $\tilde{\Omega}_{2,k}^\star = \{{\tilde{\omega}_k}^{\sf{u}\star}$, ${\tilde{f}_k^{\sf{u}\star}}$, ${\tilde{f}_k^{\sf{f}\star}}$, ${\tilde{p}_k}^\star$, ${\tilde{\rho}_k}^\star \} $ of $(\mathcal{P}_3)_k$ \cite{boyd2004convex}. The original optimal solution ${\Omega}_{2,k}^\star =\{ \omega_k^{\sf{u}\star}, \; f_k^{{\sf{u}} \star}, \;f_k^{{\sf{f}}\star},  \; p_k^\star, \; \rho_k^\star \} $ can then be obtained from $\tilde{\Omega}_{2,k}^\star$. If $(\mathcal{P}_3)_k$ is infeasible, we set $s_k^{\sf{f}}=0$. It is noted that $f_k^{{\sf{f}}\star}$ is also the value of $f_k^{\sf{f,rq}}$.

\renewcommand{\baselinestretch}{1.2}
\setlength{\textfloatsep}{6 pt}
\begin{algorithm}[t]
	\footnotesize
	\caption{Feasibility Verification of $(\mathcal{P}_{\mathcal{B}})$}
	\label{alg2}
	\begin{algorithmic}[1]	
		\parState {Solve $(\mathcal{P}_3)_k$ to find $f_{k}^{\sf{f,rq}}, \forall k \in \mathcal{B}$, as  in \textbf{Section~\ref{st4a1}}.}
		\parState {Solve $(\mathcal{P}_4)_k$ to find $d_k^{\sf{rq}}, \forall k \in \mathcal{B}$, as in \textbf{Section~\ref{st4a2}}.} 
		\If {$\exists k$ such that $ s_k^{\sf{f}} + s_k^{\sf{c}}=0$ } Return  $(\mathcal{P}_{\mathcal{B}})$ is infeasible,
		\Else		\; {Solve $(\mathcal{P}_{\sf{FV},\eta})$ to find $G_{\mathcal{B},\eta}^{\star}$, as in \textbf{Section~\ref{st5D}}.}
		\parState{\textbf{if} $G_{\mathcal{B},\eta}^{\star} < F^{\sf{f,max}}$ \textbf{then} Return  $(\mathcal{P}_{\mathcal{B}})$ is feasible,
		\textbf{else} Return  $(\mathcal{P}_{\mathcal{B}})$ is infeasible \textbf{end if}}
		\EndIf			
	\end{algorithmic}
\end{algorithm}
\renewcommand{\baselinestretch}{1.45}

\subsubsection{\textbf{Step 2} - Minimum Allocated Backhaul Resource  for User $k \in \mathcal{B}$} \label{st4a2}
If the application of user $k$ is executed at the cloud server, the minimum backhaul capacity for transferring its application to the cloud server, denoted as $d_k^{\sf{rq}}$, can be determined by solving the following sub-problem:
\begin{eqnarray}
(\mathcal{P}_4)_k \quad  \min\limits_{\Omega_{2,k}\cup {d_k}\backslash f_k^{\sf{f}}} \quad  d_k \quad \text{ s.t. } \quad s_k^{\sf{c}}  =  1, \; (\text{C}0)_k, \;(\text{C}1)_k,\; (\text{C}5)_k  -  (\text{C}7)_k, \; (\text{C}9)_k. \nonumber
\label{p4k}	
\end{eqnarray}
Similar to  $(\mathcal{P}_{3})_k$, $(\mathcal{P}_{4})_k$ can be converted to a convex problem via logarithmic transformations; thus, we can find the optimal point $d_k^{\sf{rq}}$. If $(\mathcal{P}_4)_k$ is infeasible, we set $s_k^{\sf{c}}=0$.

\subsubsection{\textbf{Step 3} - Feasibility Verification} \label{st5D}
With the obtained values $f_k^{\sf{f,rq}}$ and $d_k^{\sf{rq}}$, problem $(\mathcal{P}_{\sf{FV},\eta})$ can be transformed to
\begin{eqnarray}
(\mathcal{P}_{\sf{FV},\eta})   &  \min\limits_{\Omega_3}  \;  \mathcal{G}_{\mathcal{B},\eta}(\Omega_3) =\sum_{k \in \mathcal{B}} (1-s_k^{\sf{c}}) f_{k}^{\sf{f,rq}}  \;
\text{ s.t. } \;
(\text{C}3,4,8): \sum\nolimits_{k \in \mathcal{B}} s_k^{\sf{c}} d_k^{\sf{rq}} \leq D^{\sf{max}}, \;s_{k}^{\sf{c}} \in \{0,1\}, \nonumber
\label{p51}	
\end{eqnarray}
where $\Omega_{3} = \{ s_k^{\sf{c}}| k \in \mathcal{B} \} $ for a given $\eta$. 
In fact, $(\mathcal{P}_{\sf{FV},\eta})$ is a \textit{``0-1 knapsack''} problem \cite{martello1990knapsack}, which can be solved optimally and effectively using the CVX solver.
If $G_{\mathcal{B},\eta}^{\star} \leq F^{\sf{f,max}}$, combining the set of all solutions of the $(\mathcal{P}_3)_k$'s, $(\mathcal{P}_4)_k$'s, and $(\mathcal{P}_{\sf{FV},\eta})$ yields a feasible solution of $(\mathcal{P}_\mathcal{B})$ for this value of $\eta$. Hence, $(\mathcal{P}_\mathcal{B})$ is feasible in such scenario.  
The feasibility verification of $(\mathcal{P}_{\mathcal{B}})$ is summarized in \textbf{Algorithm~\ref{alg2}}.

\subsection{Optimal JCORA Algorithm to Solve $(\mathcal{P}_{2})$} \label{st4}
Based on the results presented in the previous sections, the solution of $(\mathcal{P}_2)$ can be found by employing \textbf{Algorithm~\ref{alg1}} and the $(\mathcal{P}_\mathcal{B})$ feasibility verification presented in \textbf{Algorithm~\ref{alg2}}. The optimality of the obtained solution 
is formally stated in the following theorem.

\begin{theorem} \label{thrm_opt_P2}
	The integration of \textbf{Algorithm~\ref{alg2}} into \textbf{Algorithm~\ref{alg1}} yields the global optimum of MINLP $(\mathcal{P}_2)$. 
\end{theorem}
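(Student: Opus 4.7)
The plan is to establish optimality by combining the three structural results developed earlier. First I would argue that the bisection search in \textbf{Algorithm~\ref{alg1}} converges to a well-defined threshold $\hat{\eta}$; then I would verify that \textbf{Algorithm~\ref{alg2}} is a correct feasibility oracle for $(\mathcal{P}_{\mathcal{B}})$; finally I would identify $\hat{\eta}$ with $\eta^{\star}$ via \textbf{Theorem~\ref{thrm_opt_class}} and \textbf{Lemma~\ref{lemma_1}}.

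For the monotonicity step, I would observe that the classification map $\eta \mapsto (\mathcal{A}(\eta),\mathcal{B}(\eta))$ is monotone: as $\eta$ grows, $\mathcal{A}(\eta)=\{k:\eta_k^{\sf{lo}}\leq \eta\}$ expands and $\mathcal{B}(\eta)$ contracts, while simultaneously relaxing (C0) enlarges the feasible set of $(\mathcal{P}_\mathcal{B})$. Combined with \textbf{Lemma~\ref{lemma_1}}(3), this renders feasibility of $(\mathcal{P}_{\mathcal{B}(\eta)})$ monotone in $\eta$, so the bisection isolates (up to accuracy $\epsilon$) the smallest $\hat\eta$ at which \textbf{Algorithm~\ref{alg2}} certifies feasibility. \textbf{Proposition~\ref{prop_PA}} then guarantees $\eta_{\mathcal{A}(\hat\eta)}^\star = \max_{k\in\mathcal{A}(\hat\eta)} \eta_k^{\sf{lo}} \leq \hat\eta$ automatically by the very construction of $\mathcal{A}(\hat\eta)$, so condition \eqref{opt_class_cond} of \textbf{Lemma~\ref{lemma_1}} is met.

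Next I would verify correctness of \textbf{Algorithm~\ref{alg2}}. The only constraints of $(\mathcal{P}_\mathcal{B})$ that couple users are the resource budgets (C2) and (C8); every other constraint is per-user. Once a user's binary mode is fixed (mode~1 uses fog CPU only, mode~2 uses backhaul only), the minimum per-user resource demand of that mode is computed by $(\mathcal{P}_3)_k$ or $(\mathcal{P}_4)_k$, and by \textbf{Proposition~\ref{prop2}} (together with its direct analogue for $(\mathcal{P}_4)_k$) both sub-problems are convex after the log-transformation and hence solved to global optimality. The residual mode-assignment task reduces to the 0--1 knapsack $(\mathcal{P}_{\sf{FV},\eta})$, which is handled exactly. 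Therefore $G_{\mathcal{B},\eta}^{\star}\leq F^{\sf{f,max}}$ is both necessary and sufficient for $(\mathcal{P}_\mathcal{B})$ to be feasible at $\eta$, making Algorithm~\ref{alg2} a tight oracle.

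Finally I would close the loop. By \textbf{Theorem~\ref{thrm_opt_class}}, the partition induced by $\eta^\star$ is an optimal classification satisfying \eqref{opt_class_cond}; consequently $(\mathcal{P}_{\mathcal{B}(\eta^\star)})$ is feasible at $\eta^\star$, giving $\hat\eta\leq\eta^\star$. Conversely, whenever the oracle certifies feasibility at some $\eta$, merging the per-user optima of $(\mathcal{P}_{\mathcal{A}(\eta)})$ (whose objective is at most $\eta$) with the configuration returned by \textbf{Algorithm~\ref{alg2}} produces a feasible point of $(\mathcal{P}_2)$ with objective $\eta$, so $\eta^\star\leq\hat\eta$. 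Sending $\epsilon\to 0$ in the bisection yields $\hat\eta=\eta^\star$ and thus global optimality. The hardest step will be rigorously justifying that the per-user decomposition in Algorithm~\ref{alg2} incurs no loss: one must exclude the possibility that some jointly-feasible configuration of $(\mathcal{P}_\mathcal{B})$ uses strictly less aggregate fog CPU than $\sum_{k:s_k^{\sf{f}}=1}f_k^{\sf{f,rq}}$ through a non-trivial trade-off across users that the independent sub-problems miss. This is reconciled by observing that $f_k^{\sf{f}}$ enters only user $k$'s per-user constraints and the aggregate (C2) (and similarly for $d_k$ with respect to (C8)), so any value below $f_k^{\sf{f,rq}}$ makes user $k$ per-user infeasible, while larger values are merely wasteful; thus the separable minimization is exact.
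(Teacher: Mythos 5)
Your proposal is correct and follows essentially the same route as the paper: bisection on $\eta$, user classification via \textbf{Theorem~\ref{thrm_opt_class}}, and exact feasibility verification of $(\mathcal{P}_{\mathcal{B}})$ through the per-user convex sub-problems and the knapsack $(\mathcal{P}_{\sf{FV},\eta})$. The paper's own proof is only a two-sentence assertion of this, so the ingredients you supply explicitly --- the monotonicity of feasibility in $\eta$ (via \textbf{Lemma~\ref{lemma_1}}(3) plus the relaxation of (C0)) that validates the bisection, and the argument that the separable minimization of $f_k^{\sf{f}}$ and $d_k$ is lossless because these are the only variables entering the coupling constraints (C2) and (C8) --- are exactly the details the paper leaves implicit, and they are argued correctly.
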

\begin{IEEEproof}
	\textbf{Algorithm~\ref{alg2}} verifies the feasibility of $(\mathcal{P}_{\mathcal{B}})$ for any given value 
	of $\eta_\mathcal{B} = \eta$. Therefore, if \textbf{Algorithm~\ref{alg1}} employs \textbf{Algorithm~\ref{alg2}}, $(\mathcal{P}_2)$ is solved optimally. Note that after convergence, the optimal variables are given by the optimal solution of $(\mathcal{P}_3)_k$ if  $s_k^{\sf{f}}=1$ or $(\mathcal{P}_4)_k$ if  $s_k^{\sf{c}}=1$ where the values of the $s_k^{\sf{f}}$'s and $s_k^{\sf{c}}$'s are the outcomes of $(\mathcal{P}_{\sf{FV},\eta})$.
\end{IEEEproof}

\subsection{Complexity Analysis}
\label{sec:ComAna}
We analyze the computational complexity of the JCORA algorithm in terms of the required number of arithmetic operations. In all proposed algorithms, the while-loop for the bisection search of $\eta$ requires $\log_2(\frac{\eta^{\sf{max}}-\eta^{\sf{min}}}{\epsilon})$ iterations. To verify the feasibility of $(\mathcal{P}_{\mathcal{B}})$ for a given $\eta$, the convex problems $(\mathcal{P}_3)_k$ and $(\mathcal{P}_4)_k$ can be solved by using the interior point method with complexity $\mathcal{O}(m_1^{1/2}(m_1+m_2)m_2^2)$, where $m_1$ is the number of equality constraints and $m_2$ represents the number of variables \cite{hoang2016energy}. It can be verified that $(\mathcal{P}_3)_k$ and $(\mathcal{P}_4)_k$  have the same complexity. On the other hand, the knapsack problem $(\mathcal{P}_{\sf{FV},\eta})$ for $|\mathcal{B}|$ users  can be solved by \textbf{Algorithm \ref{alg2}} in pseudo-polynomial time with complexity $\mathcal{O}(\nu_1|\mathcal{B}|)$, where $\nu_1$ is determined by coefficients in $(\mathcal{P}_{\sf{FV},\eta})$ \cite{martello1990knapsack}.  Moreover, $(\mathcal{P}_3)_k$ and $(\mathcal{P}_4)_k$ can be solved independently for all users $k \in \mathcal{B}$; therefore, the complexity of each bisection search step can be expressed as $|\mathcal{B}|\mathcal{O}((\mathcal{P}_3)_k) + |\mathcal{B}|\mathcal{O}((\mathcal{P}_4)_k) + \mathcal{O}(\mathcal{P}_{\sf{FV},\eta}) = \mathcal{O}(\nu_2|\mathcal{B}|)$, where $\nu_2 = \nu_1+2m_1^{1/2}(m_1+m_2)m_2^2$. Consequently, the overall complexity of \textbf{Algorithm~\ref{alg1}} 
is $\mathcal{O}(\log_2(\frac{\eta^{\sf{max}}-\eta^{\sf{min}}}{\epsilon})\nu_2|\mathcal{B}|)$.

\section{DC at Both Mobile Users and Fog Server } \label{st4a}

We now consider the more general case where the fog server also
performs DC before transmitting the compressed data over the backhaul link to
the cloud server. This design option can further enhance the performance for systems with a congested backhaul link.
The backhaul compression ratio is defined as $\omega_k^{\sf{f}} = b_k^{\sf{in}}/b_k^{\sf{out,f}}$ where $b_k^{\sf{out,f}}$ 
stands for the number of bits transmitted over the backhaul link.
Note that if $b_k^{\sf{out,f}}=b_k^{\sf{out,u}}$, then no DC 
 is employed at the fog server, which corresponds to the design in \textbf{Section~\ref{st3}}.
Hence, \textbf{\textit{Mode 2}} in \textbf{Section~\ref{st4a1}} is equivalent to the scenario that the task is executed at the cloud server without DC at the fog server.
However, the fog server can re-compress the data before transmitting it to the cloud server for processing, which is referred to
 as \textbf{\textit{Mode 3}} in the following.
Denote $s_k^{\sf m}$ as the binary variable indicating whether or not DC is performed at the fog server for user $k$ 
($s_k^{\sf m}=1$ for DC, and $s_k^{\sf m}=0$, otherwise).
Then, we have $s_k^{\sf{f}}=1$ if user $k$ is in \textbf{\textit{Mode 1}};  $s_k^{\sf{c}}=1$  if user $k$ is in \textbf{\textit{Mode 2}}; $s_k^{\sf{m}}=1$  if user $k$ is in \textbf{\textit{Mode 3}}. In this general case, constraints (C3) and (C4)  can be
 rewritten as
$\text{$\check{\text{C}}$3:} \; s_k^{\sf{u}}, s_k^{\sf{f}}, s_k^{\sf{c}}, s_k^{\sf{m}} \in \{0,1\}, \forall k \in \mathcal{K}$ and $(\check{\text{C}}4)$: $s_k^{\sf{u}}+ s_k^{\sf{f}}+ s_k^{\sf{c}}+ s_k^{\sf{m}} =1, \forall k \in \mathcal{K}$.

Then, the computational load for compression and the output data corresponding to \textbf{\textit{Mode 3}} can be modeled as
$
c_k^{\sf{co,f}} = \gamma_{k,0}^{\sf{f}}  \big[ \gamma_{k,1}^{\sf{co},\sf{f}}(\omega_k^{\sf{f}})^{\gamma_{k,2}^{\sf{co},\sf{f}}} + \gamma_{k,3}^{\sf{co},\sf{f}} \big] \text{ and }
b_k^{\sf{out,f}} = b_k^{\sf{in}}/\omega_k^{{\sf{f}}}, 
$ respectively,
where $\gamma_{k,0}^{\sf{f}}, \gamma_{k,1}^{\sf{co},\sf{f}}, \gamma_{k,3}^{\sf{co},\sf{f}} \in \mathbb{R}_+$ are positive numbers. 
Here, we have additional constraints for the compression processes at the fog server as
$
(\check{\text{C}}10)$:  $\omega_{k}^{\sf{f}} \in [\omega_{k}^{\sf{f,min}}, \omega_{k}^{\sf{f,max}}], \; \forall k \in \mathcal{K}. 
$

Then, the total computational load for user ${k}$ at the fog server becomes
$
\check{c}_{k}^{\sf{f}} = s_{k}^{\sf{f}} \big(c_{k,1} + c_{k}^{\sf{de,u}}\big) + s_{k}^{\sf{m}}  (c_{k}^{\sf{co},\sf{f}} + c_{k}^{\sf{de,u}}),
$ and the computing time at the fog server is $\check{t}_{1,k}^{\sf{f}} =\check{c}_{k}^{\sf{f}}/f_k^{\sf{f}}$.
Moreover, the transmission time incurred by offloading the data of user $k$ from the fog server to the cloud server can be rewritten as
$
\check{t}_{2,k}^{\sf{f}} = \big(s_{k}^{\sf{f}} b_k^{\sf{out,u}}  + s_{k}^{\sf{m}} b_k^{\sf{out,f}} \big)/d_{k}.
$
Then, the total delay for completing the computation task of user $k$ is given by $\check{T}_k = t_{1,k}^{\sf{u}} + t_{2,k}^{\sf{u}} +\check{t}_{1,k}^{\sf{f}} +\check{t}_{2,k}^{\sf{f}}+(s_k^{\sf{c}}+s_k^{\sf{m}})T^{\sf{c}}$, and the WEDC becomes $\check{\Xi}_k = w_k^{\sf{T}}\check{T}_k + w_k^{\sf{E}}\xi_k$. Then, constraint (C9) is rewritten as $ 
(\check{\text{C}}9)$: $\check{T}_k \leq T_k^{\sf{max}}.
$

\begin{figure}[t]
	\centering
	\begin{minipage}[t]{0.8\linewidth}
		\includegraphics[width=1\textwidth]{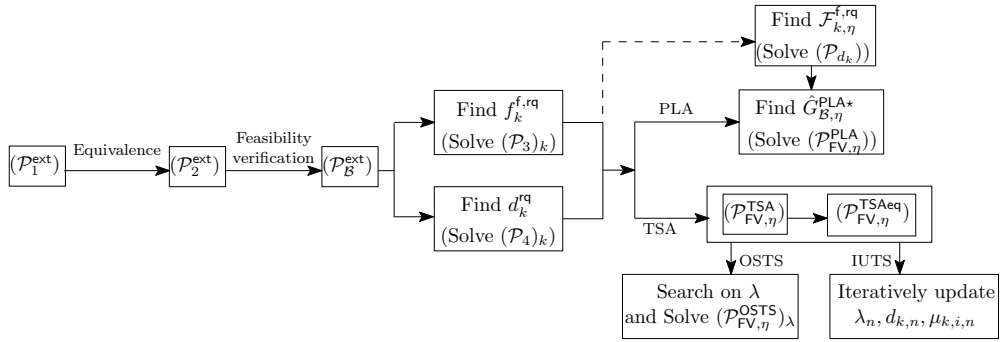}
		\caption{Relationship between the (sub)problems when solving $(\mathcal{P}_1^{\sf{ext}})$.}
		\label{P2-figAlg2}
	\end{minipage}%
\end{figure}

With the additional variables $s_k^{\sf m}$ and $\omega_{k}^{\sf{f}} , \forall k \in \mathcal{B}$, the extended versions of problems ($\mathcal{P}_1$) and ($\mathcal{P}_2$) can be stated, respectively, as 
\begin{eqnarray}
\begin{aligned}
(\mathcal{P}_1^{\sf{ext}}) & \; \;   \min\limits_{\Omega_1 \cup_k \{s_k^{\sf m},\omega_{k}^{\sf{f}}\}} \max\limits_{k} \; \check{\Xi}_k  \; \; \text{s.t} \quad
(\text{C}1),(\text{C}2),(\text{C}5)-(\text{C}8), (\check{\text{C}}3), (\check{\text{C}}4), (\check{\text{C}}9), (\check{\text{C}}10).\\[-0 pt]
(\mathcal{P}_2^{\sf{ext}}) & \; \;   \underset{\Omega_1 \cup_k { \{s_k^{\sf m},\omega_{k}^{\sf{f}}\}}\cup \eta}{\min} \; \eta   \; \; \text{s.t} \quad
(\check{\text{C}}0): \; \check{\Xi}_k \leq \eta, \; (\text{C}1),(\text{C}2),(\text{C}5)-(\text{C}8), (\check{\text{C}}3), (\check{\text{C}}4), (\check{\text{C}}9), (\check{\text{C}}10). \nonumber \\[-0 pt]
\end{aligned}
\end{eqnarray}

The main challenge for solving the extended problem in comparison to the original one comes from the  users in \textbf{\textit{Mode 3}}. These users require both fog computing and backhaul resources. 
To solve the extended problem, we employ the general solution approach presented in \textbf{Section~\ref{st3}} 
but modify the feasibility verification for $(\mathcal{P}_{\mathcal{B}})$.
In particular, \textbf{Algorithm~\ref{alg1}} is used to determine sets $\mathcal{A}$ and $\mathcal{B}$ for a given $\eta$ and 
we update $\eta$ using the bisection search method.
The results in \textbf{Theorem~\ref{thrm_opt_class}} are still applicable for the extended problem.
In the following, we propose several techniques for dealing with \textbf{\textit{Mode 3}} and verify the feasibility 
of user classification for a given $\eta$ in \textit{Step 4} of \textbf{Algorithm~\ref{alg1}}.

For a given $\eta$, $(\mathcal{P}_{\mathcal{B}}^{\sf{ext}})$ is obtained by adding $ (\check{\text{C}}10)$ to $(\mathcal{P}_{\mathcal{B}})$ and replacing $\Xi_k$ and $T_k$  by $\check{\Xi}_k$ and $\check{T}_k$, respectively. 
To verify the  feasibility of $(\mathcal{P}_{\mathcal{B}}^{\sf{ext}})$, a similar
three-step solution approach as for $(\mathcal{P}_{\mathcal{B}})$ is employed.
In \textbf{Steps 1} and \textbf{2}, $f_k^{\sf{f,rq}}$ and $d_k^{\sf{rq}}$ which correspond to the users in \textbf{\textit{Mode 1}} and \textit{\textbf{2}} are optimized by solving $(\mathcal{P}_3)_k$ and $(\mathcal{P}_4)_k$ as in \textbf{Sections~\ref{st4a1}} and \textbf{\ref{st4a2}}, respectively.
In \textbf{Step 3}, we first investigate the network resources required by the users in \textbf{\textit{Mode 3}}, modify problem $(\mathcal{P}_{\sf{FV},\eta})$ to adapt it to the extended problem, and solve that problem to verify the feasibility. 
Three different methods for this extended problem will be proposed as follows.

In the first approach, we represent $f_k^{\sf{f,rq}}$ of user $k$ in \textbf{\textit{Mode 3}} as a function of $d_k$ by employing a piece-wise linear approximation (PLA) method. Based on this approximation, we transform $(\mathcal{P}_{\sf{FV},\eta})$ into a standard mixed-integer linear programming (MILP) problem, $(\mathcal{P}_{\sf{FV},\eta}^{\sf{PLA}})$, which can be solved effectively by using the CVX solver.
In the other two approaches, we directly deal with the modified problem $(\mathcal{P}_{\sf{FV},\eta}^{\sf{TSA}})$ without approximating $f_k^{\sf{f,rq}}$ of user $k$ in \textbf{\textit{Mode 3}}. To cope with this challenging MINLP problem, we first 
reduce the optimized variable set by exploiting some useful relations among the variables. 
Then, two algorithms are proposed to solve the resulting problem for the remaining variables.
One algorithm is based on a one-dimensional search for the Lagrangian multiplier, see \textbf{Section~\ref{P61}}, while the 
other algorithm iteratively updates the Lagrangian multiplier, see \textbf{Section~\ref{P62}}. The relationship between the (sub)problems when solving $(\mathcal{P}_1^{\sf ext})$ is illustrated in Fig.~\ref{P2-figAlg2}.

\subsection{Piece-wise Linear Approximation based Algorithm (PLA)} \label{st5a}
After determining the minimum computing and backhaul resources, $f_{k}^{\sf{f,rq}}$ and $d_k^{\sf{rq}}$, required in \textbf{\textit{Modes 1}} and \textbf{\textit{2}}, respectively,  one can set $d_k \in (0,d_k^{\sf{rq}})$ for the users in \textbf{\textit{Mode 3}}.
We now study the relationship between $f_{k}^{\sf{f}}$ and $d_k$ in \textbf{\textit{Mode 3}} where user $k$ demands both fog computing resources
for re-compression and backhaul capacity resources.
Towards this end, we determine the required fog computing resources for a given  $d_k \in (0,d_k^{\sf{rq}})$ by 
solving the following problem:
\begin{eqnarray}
({\mathcal{P}}_{d_k})  \; \;   \min\limits_{\Omega_{2,k} \cup \{\omega_{k}^{\sf{f}}\}}  \; f_k^{\sf{f}} \quad
\text{s.t.} \quad
 s_k^{\sf{m}} = 1, (\check{\text{C}}0)_k, (\text{C}1)_k, (\text{C}5)_k-(\text{C}7)_k, (\check{\text{C}}9)_k, (\check{\text{C}}10)_k. \nonumber
\label{p4ke}	
\end{eqnarray}
Let $\mathcal{F}_{k,\eta}^{\sf{f,rq}}(d_k)$ be the optimal solution of this problem, which can be obtained
by employing the logarithmic transformations described in \textbf{Section~\ref{st4a1}}. 
However, finding a closed-form expression for $\mathcal{F}_{k,\eta}^{\sf{f,rq}}(d_k)$  is not tractable. 

\renewcommand{\baselinestretch}{1.2}
\setlength{\textfloatsep}{6 pt}
\begin{algorithm}[t]
	\footnotesize
	\caption{PLA-based Feasibility Verification for $(\mathcal{P}_{\mathcal{B}}^{\sf{ext}})$}
	\label{alg_PLA}
	\begin{algorithmic}[1]
		\State \textbf{Initialize}: $L$, $\eta$ 
		\parState {Compute $f_{k}^{\sf{f,rq}}$ and $d_k^{\sf{rq}}$ for all $k \in \mathcal{B}$ as in Step~1~and~2 of \textbf{Algorithm~\ref{alg2}}.}
		\parState {Define $d_{k,l} = (d_k^{\sf{rq}}-\epsilon_{\sf d})l/L, ~ \forall k \in \mathcal{B}, l=0:L$.}		
		\parState {Compute $\mathcal{F}_{k,\eta}^{\sf{f,rq}}(d_{k,l})$. \textbf{If} $\mathcal{F}_{k,\eta}^{\sf{f,rq}}(d_{k,l})$ is unbound \textbf{then} Remove point $d_{k,l}$ \textbf{end if}.}
		\parState {Compute $A_{k,l}$,  $B_{k,l}$, and then solve $(\mathcal{P}_{\sf{FV},\eta}^{\sf{PLA}})$ to get optimal value $\hat{G}_{\mathcal{B},\eta}^{\sf{PLA}\star}$  of $(\mathcal{P}_{\sf{FV},\eta}^{\sf{PLA}})$.}
		\parState { \textbf{if} $\hat{G}_{\mathcal{B},\eta}^{\sf{PLA}\star} \leq F^{\sf{f,max}}$ \textbf{then} Return  $(\mathcal{P}_{\mathcal{B}}^{\sf{ext}})$ is feasible,  
		\textbf{else} Return  $(\mathcal{P}_{\mathcal{B}}^{\sf{ext}})$ is infeasible
		\textbf{end if}}
	\end{algorithmic}
\end{algorithm}
\renewcommand{\baselinestretch}{1.45}
Hence, we propose to employ the \textit{``Piece-wise Linear Approximation''} (PLA) method to divide 
the original domain into multiple small segments such that $\mathcal{F}_{k,\eta}^{\sf{f,rq}}(d_k)$ can be 
approximated by a linear function 
in each segment. 
Suppose that the interval $[\epsilon_{\sf d},d_k^{\sf{rq}}{-}\epsilon_{\sf d}]$ is divided into $L$ segments of equal size, where $\epsilon_{\sf d}$ is a very small number compared to $d_k^{\sf{rq}}$, e.g, $\epsilon_{\sf d} = 1$. 
Specifically, the ${l}^\text{th}$ segment corresponds to  interval $[d_{k,l},d_{k,l+1} ]$, 
where $d_{k,l} {=} (d_k^{\sf{rq}}{-}\epsilon_{\sf d})l/L$ is a point such that $\mathcal{F}_{k,\eta}^{\sf{f,rq}}(d_{k,l})$ and the value of the approximated function at this point are equal. 
Then, we can approximate $\mathcal{F}_{k,\eta}^{\sf{f,rq}}(d_k)$ as
$
\hat{\mathcal{F}}_{k,\eta}^{\sf{f,rq}}\big(V_{k}, U_{k}\big) {=}  \sum_{l=0}^{L-1}  \left( v_{k,l} A_{k,l} {+} u_{k,l} B_{k,l} \right),
$
where $V_{k} {=} \{v_{k,l}, l{=}0{:}L{-}1\}$,  $U_{k} {=} \{u_{k,l}, l{=}0{:}L{-}1\}$, $A_{k,l} {=} {\left(\mathcal{F}_{k,\eta}^{\sf{f,rq}}(d_{k,l+1}) {-} \mathcal{F}_{k,\eta}^{\sf{f,rq}}(d_{k,l})\right)}/{\left(d_{k,l+1}{-}d_{k,l}\right)}$,  $B_{k,l} {=} \mathcal{F}_{k,\eta}^{\sf{f,rq}}(d_{k,l}) {-} A_{k,l}d_{k,l}$, and continuous variable $v_{k,l}$ and binary variable $u_{k,l}$ satisfy the following constraints:
\begin{equation}
s_k^{\sf m} = \sum\nolimits_{l=0}^{L-1} u_{k,l} \leq 1, \; \forall k\in \mathcal{B}, \text{ and } u_{k,l} d_{k,l} \leq v_{k,l} \leq u_{k,l+1} d_{k,l+1}, \; \forall k\in \mathcal{B},\; l{=}0{:}L{-}1. \label{eq36} 
\end{equation}
Then, the allocated backhaul resources due to user $k$ in \textbf{\textit{Mode 3}} are rewritten as $s_k^{\sf m}d_k{ = }\sum_{l=0}^{L-1} v_{k,l}$. Therefore, problem $(\mathcal{P}_{\sf{FV},\eta})$, which is used to determine the minimum total required fog computing resources for all users, is modified  in this extended case as follows
\begin{eqnarray}
\begin{aligned}
(\mathcal{P}_{\sf{FV},\eta}^{\sf{PLA}})  \;  &   \min\limits_{\check{\Omega}_3}  \hat{\mathcal{G}}_{\mathcal{B},\eta}^{\sf{PLA}} \Big(\check{\Omega}_3\Big) =  \sum\nolimits_{k \in \mathcal{B}} \left( s_k^{\sf{f}} f_{k}^{\sf{f,rq}} + \hat{\mathcal{F}}_{k,\eta}^{\sf{f,rq}}\left(V_{k}, U_{k}\right) \right) \nonumber \\[-0 pt]
\text{s.t.} \;
& (\check{\text{C}}3)^{\sf PLA}: s_{k}^{\sf{f}}, s_k^{\sf{c}}  , u_{k,l}\in \{0,1\}, \; \forall k,l;  \quad (\check{\text{C}}8\text{a})^{\sf PLA}: u_{k,l} d_{k,l} {\leq} v_{k,l} {\leq} u_{k,l+1} d_{k,l+1}, \; \forall k,l; \\
& (\check{\text{C}}4)^{\sf PLA}: s_{k}^{\sf{f}} + s_{k}^{\sf{c}} + \sum\nolimits_{l=0}^{L-1} u_{k,l} =1;   \quad (\check{\text{C}}8\text{b})^{\sf PLA}: \sum \nolimits_{k \in \mathcal{B}} \left( \sum\nolimits_{l=0}^{L-1} v_{k,l} + s_k^{\sf c}d_k^{\sf rq}  \right)\leq D^{\sf{max}}, \\[-0 pt]
\end{aligned}
\label{p511}	
\end{eqnarray}
where $\check{\Omega}_3 = \cup_{k \in \mathcal{B}} \big(s_k^{\sf{f}}\cup s_k^{\sf{c}} \cup U_{k} \cup V_{k}\big)$, constraints $(\check{\text{C}3})^{\sf PLA}$, $(\check{\text{C}4})^{\sf PLA}$, and $(\check{\text{C}}8\text{a})^{\sf PLA}{-}(\check{\text{C}}8\text{b})^{\sf PLA}$ are the transformed constraints of original constraints $(\check{\text{C}}3)$, $(\check{\text{C}}4)$, and $(\text{C}8)$, respectively.
This transformed problem is an MILP problem, which can be solved effectively by using the CVX solver.
The PLA~based~algorithm for verifying the feasibility of $(\mathcal{P}_{\mathcal{B}}^{\sf{ext}})$ is summarized in \textbf{Algorithm~\ref{alg_PLA}}, which can be integrated into \textbf{Algorithm~\ref{alg1}} to solve $(\mathcal{P}_2^{\sf{ext}})$. It is noted that if the value of $\mathcal{F}_{k,\eta}^{\sf{f,rq}}(d_{k,l})$ is unbounded for a given $d_{k,l}$, this infeasible point is removed when applying the PLA~based~algorithm.

\subsection{Two-stage Solution Approach (TSA)} \label{st5b}

In this section, two two-stage algorithms are developed by exploiting the fact that the decompression computational load (and therefore, the associated energy consumption) is almost independent from the compression ratio as can be seen in Fig.~\ref{P2-fig1}.
This implies that for a given $\eta$, the optimal values $f_k^{\sf{u}}$, $\omega_{k}^{\sf{u}}$, $p_k$, and $\rho_k$ for mobile user $k$ are 
similar for both $s_k^{\sf{f}}=1$ and $s_k^{\sf{c}}=1$.
Hence, in the first stage, after solving $(\mathcal{P}_3)_k$ and $(\mathcal{P}_4)_k$, $\forall k \in \mathcal{B}$, introduced in \textbf{Section~\ref{st3}}, we can set these variables to the corresponding optimal solution of $(\mathcal{P}_3)_k$, denoted as $f_{k,1}^{\sf{u}\star}$, $\omega_{k,1}^{\sf{u}\star}$, $p_{k,1}^\star$, and $\rho_{k,1}^\star$. In the second stage, we  find the remaining variables pertaining to the fog server $\Omega_4 = \cup_{k\in \mathcal{B}}\{s_k^{\sf{f}}, s_k^{\sf{c}}, s_k^{\sf{m}}, d_k, f_k^{\sf{f}}, \omega_k^{\sf{f}}\}$ 
by solving the following problem\footnote{We note that by reducing the number of optimization variables in $(\mathcal{P}_{\sf{FV},\eta}^{\sf{TSA}})$, the complexity of the resulting  
algorithms for feasibility verification of $(\mathcal{P}_{\mathcal{B}}^{\sf{ext}})$ 
	is lower than that of the PLA based algorithm.}:
\begin{eqnarray}
\begin{aligned}
(\mathcal{P}_{\sf{FV},\eta}^{\sf{TSA}})  \; \; &  \min\limits_{\Omega_4} \; \hat{\mathcal{G}}_{\mathcal{B},\eta}^{\sf{TSA}}(\Omega_4)=  \sum_{k \in \mathcal{B}} \left( s_{k}^{\sf{m}} f_k^{\sf{f}} + s_{k}^{\sf{f}} f_{k}^{\sf{f,rq}} \right) \nonumber \\[-0 pt]
\text{s.t.} \quad
&(\check{\text{C}}0\&9): 
s_{k}^{\sf{m}} \Big(\frac{ b_k^{\sf{out,f}}}{d_{k}} 
{+}  \frac{(c_{k}^{\sf{co},\sf{f}} {\plus} c_{k}^{\sf{de,u}})}{f_{k}^{\sf{f}}}\Big) {\leq} \nu_{k,0} , \\[-0 pt]
&(\check{\text{C}}8): \sum\nolimits_{k \in \mathcal{B}} \left( s_k^{\sf{m}}d_k + s_k^{\sf{c}}d_k^{\sf{rq}} \right) \leq D^{\sf{max}}, 
(\check{\text{C}}3), (\check{\text{C}}4), (\check{\text{C}}10),   \\[-0 pt]   
\end{aligned}
\label{p61}	
\end{eqnarray}
where $\nu_{k,0} = \min\{ (\eta-\Xi_{k,1})/w_k^{\sf{T}}, T_k^{\sf{max}}-T_{k,1}\} + {(c_{k,1} {+} c_{k}^{\sf{de}})}/{f_{k}^{\sf{f,rq}}} - T^{\sf{c}}$, and $\Xi_{k,1}$ and $T_{k,1}$ are the optimal values of $\Xi_k$ and $T_k$ in $(\mathcal{P}_3)_k$, respectively; $(\check{\text{C}}0\&9)$ is determined by the time delay constraint  as $\check{T}_k \leq \min(T_k^{\sf max}, (\eta - w_k^{\sf E}\xi_k)/w_k^{\sf T})$ which is
 equivalent to constraints $(\check{\text{C}}0)$ and $(\check{\text{C}}9)$. This constraint captures the fact that the application should be offloaded to
the cloud server if the resulting WEDC is smaller than that achieved when the application is executed at the fog server and the delay 
constraint $(\check{\text{C}}9)$ is not violated. 
Because $(\mathcal{P}_{\sf{FV},\eta}^{\sf{TSA}})$ is a difficult MINLP problem, we tackle it by reducing the set of variables based on the results in the following three propositions. In particular, \textbf{Propositions~\ref{prop3}--\ref{prop5}} are introduced to respectively rewrite variables $f_k^{\sf{f}}$, $\omega_k^{\sf{f}}$, and $d_k$, for all $k$ as functions of the remaining variables. Subsequently, two algorithms are proposed to solve for the remaining variables, one based on a one-dimensional search of the Lagrangian multiplier, and the other one based on an iterative update of the Lagrangian multiplier.
\begin{proposition} \label{prop3}
For any value of $d_k$'s satisfying $(\check{\text{C}}8)$, the optimal solution of $f_k^{\sf{f}}$ in $(\mathcal{P}_{\sf{FV},\eta}^{\sf{TSA}})$ can be determined as
$
 f_k^{\sf{f}\star} {=} s_k^{\sf{m}}  \frac{ (c_{k}^{\sf{co,f}} {\plus} c_{k}^{\sf{de,u}})}{\nu_{k,0} - { b_k^{\sf{out,f}}}/d_{k} }  {=} s_k^{\sf{m}}  \mathcal{H}_0 \big(\omega_{k}^{\sf{f}}, d_k\big),
$
where $ \mathcal{H}_0 \big(\omega_{k}^{\sf{f}}, d_k\big) {=}   \frac{\omega_k^{\sf{f}} d_{k}                                       \big[ \tilde{\gamma}_{k,1}^{\sf{co,f}} (\omega_k^{\sf{f}})^{\gamma_{k,2}^{\sf{co,f}}} + \tilde{\gamma}_{k,3}^{\sf{co,f}} \big]  }{\nu_{k,0} \omega_k^{\sf{f}} d_{k} - b_k^{\sf{in}} }$, $\tilde{\gamma}_{k,1}^{\sf{co,f}} =  \gamma_{k,0}^{\sf{f}} \gamma_{k,1}^{\sf{co,f}}$, and $\tilde{\gamma}_{k,3}^{\sf{co,f}} = \gamma_{k,0}^{\sf{f}} \gamma_{k,3}^{\sf{co,f}}                   {\plus} c_{k}^{\sf{de,u}}$.
\end{proposition}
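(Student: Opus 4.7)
The plan is to exploit the fact that in the objective of $(\mathcal{P}_{\sf{FV},\eta}^{\sf{TSA}})$ the variable $f_k^{\sf f}$ appears only through the term $s_k^{\sf m} f_k^{\sf f}$, which has a nonnegative coefficient, while in the constraints $f_k^{\sf f}$ appears only in $(\check{\text{C}}0\&9)$ for that same user $k$. Hence the minimization over $f_k^{\sf f}$ decouples across users and reduces to pushing $f_k^{\sf f}$ down to the smallest value compatible with $(\check{\text{C}}0\&9)$. First I would separate the two cases based on the value of the binary mode variable $s_k^{\sf m}$.

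In the case $s_k^{\sf m}=0$, both the contribution of $f_k^{\sf f}$ to the objective and the constraint $(\check{\text{C}}0\&9)$ vanish, so any nonnegative value is optimal and the product $s_k^{\sf m} f_k^{\sf f\star}=0$ matches the claimed formula. In the case $s_k^{\sf m}=1$, the constraint $(\check{\text{C}}0\&9)$ can be rearranged as
\begin{equation}
f_k^{\sf f} \;\geq\; \frac{c_k^{\sf co,f}+c_k^{\sf de,u}}{\nu_{k,0} - b_k^{\sf out,f}/d_k},
\end{equation}
provided the denominator is strictly positive (feasibility of $(\mathcal{P}_{\sf{FV},\eta}^{\sf{TSA}})$ will imply this, since otherwise no finite $f_k^{\sf f}$ satisfies the delay budget). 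Because the objective is monotonically increasing in $f_k^{\sf f}$ and this is the only active lower bound, the minimum is attained with equality, yielding
\begin{equation}
f_k^{\sf f\star} \;=\; \frac{c_k^{\sf co,f}+c_k^{\sf de,u}}{\nu_{k,0} - b_k^{\sf out,f}/d_k}.
\end{equation}

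The last step is to rewrite this expression in the stated form by plugging in the data compression model for $c_k^{\sf co,f}$ and the definition $b_k^{\sf out,f}=b_k^{\sf in}/\omega_k^{\sf f}$. Multiplying numerator and denominator by $\omega_k^{\sf f} d_k$ turns the denominator into $\nu_{k,0}\omega_k^{\sf f} d_k - b_k^{\sf in}$, while combining $\gamma_{k,0}^{\sf f}[\gamma_{k,1}^{\sf co,f}(\omega_k^{\sf f})^{\gamma_{k,2}^{\sf co,f}}+\gamma_{k,3}^{\sf co,f}]+c_k^{\sf de,u}$ in the numerator and using the abbreviations $\tilde\gamma_{k,1}^{\sf co,f}=\gamma_{k,0}^{\sf f}\gamma_{k,1}^{\sf co,f}$ and $\tilde\gamma_{k,3}^{\sf co,f}=\gamma_{k,0}^{\sf f}\gamma_{k,3}^{\sf co,f}+c_k^{\sf de,u}$ reproduces $\mathcal{H}_0(\omega_k^{\sf f},d_k)$ exactly.

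The only subtle part of the argument is ruling out the degenerate regime $\nu_{k,0}-b_k^{\sf out,f}/d_k\leq 0$: in that regime the allocated backhaul rate is too small to meet the delay budget regardless of $f_k^{\sf f}$, so $(\check{\text{C}}0\&9)$ is infeasible with $s_k^{\sf m}=1$ and this user cannot be in \emph{Mode 3} for the given $d_k$. This observation, which could be stated as a side remark, also justifies that we only need to consider $d_k$'s from the range where the denominator of $\mathcal{H}_0$ is strictly positive when this proposition is subsequently used (in particular by Proposition 4 and Proposition 5). The rest of the proof is just the elementary algebraic manipulation sketched above.
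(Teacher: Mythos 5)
Your proof is correct and follows essentially the same route as the paper's: since $f_k^{\sf f}$ enters the objective with a nonnegative coefficient and appears only in the single constraint $(\check{\text{C}}0\&9)$, that constraint must be active at the optimum, and solving it with equality plus the substitutions $b_k^{\sf out,f}=b_k^{\sf in}/\omega_k^{\sf f}$ and the compression model yields $\mathcal{H}_0(\omega_k^{\sf f},d_k)$. Your write-up is in fact more careful than the paper's one-line argument, since you also dispose of the $s_k^{\sf m}=0$ case and explicitly rule out the degenerate regime $\nu_{k,0}-b_k^{\sf out,f}/d_k\le 0$, which the paper leaves implicit.
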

\begin{proof}
When $s_k^{\sf{m}}{=}1$, the left-hand side of $(\check{\text{C}}1\&9)$ is inversely proportional to $f_k^{\sf{f}}$; thus, $f_k^{\sf{f}}$ is minimized if users spend the maximum possible resources. 
\end{proof}

\begin{proposition} \label{prop4}
When $s_k^{\sf{m}}=1$ and $d_k\geq \bar{d}_{k,1}$, the optimal value of  $\omega_{k}^{\sf{f}}$, denoted as $\omega_{k}^{{\sf{f}}\star}$, is  given as follows:
\begin{equation}
\omega_{k}^{{\sf{f}}\star} = 
\begin{cases} 
\omega_k^{\sf{max},\sf{f}} , \text{ if }\gamma_{k,2}^{\sf{co},\sf{f}} {\leq} 0 \cup \{\gamma_{k,2}^{\sf{co},\sf{f}} {\geq} 0, \bar{d}_{k,1} {<} d_k {\leq} \bar{d}_{k,2}\}, \\[-2pt]
\textup{inv}\big(\mathcal{H}_1 \big(d_k \big)\big) , \quad \text{if} \hspace{0.05 cm} \gamma_{k,2}^{\sf{co},\sf{f}} \geq 0, \bar{d}_{k,2} {<} d_k {\leq} \bar{d}_{k,3} , \\[-2pt]
\omega_k^{\sf{f,min}} , \hspace{0.5 cm} \text{if} \hspace{0.05 cm}  \gamma_{k,2}^{\sf{co},\sf{f}} \geq 0, d_k > \bar{d}_{k,3},
\end{cases} 
\label{eq54}
\end{equation}	
where $\bar{d}_{k,1} = {b_k^{\sf{in}}}/{(\nu_{k,0}\omega_{k}^{\sf{f}})}$, $\bar{d}_{k,2}   = \mathcal{H}_1 \big(\omega_{k}^{\sf{max},\sf{f}}\big)$,  $ \bar{d}_{k,3}= \mathcal{H}_1 \big(\omega_{k}^{\sf{f,min}}\big)$, and $\textup{inv}\big(\mathcal{H}_1 \big(d_k \big)\big)$ is the value of $\omega_{k}^{\sf{f}}$ for which
 $\mathcal{H}_1 \big(\omega_k^{\sf{f}} \big)$ is equal to $d_k$, and  $\mathcal{H}_1 \big(\omega_k^{\sf{f}} \big) \overset{\tinyDelta}{=} \frac{\tilde{\gamma}_{k,1}^{\sf{co},\sf{f}} b_k^{\sf{in}}(\gamma_{k,2}^{\sf{co},\sf{f}} {\plus} 1) {\big(\omega_{k}^{\sf{f}}\big)}^{\gamma_{k,2}^{\sf{co},\sf{f}}} {\plus} \tilde{\gamma}_{k,3}^{\sf{co},\sf{f}} b_k^{\sf{in}} }{\tilde{\gamma}_{k,1}^{\sf{co},\sf{f}} \nu_{k,0} \gamma_{k,2}^{\sf{co},\sf{f}} {\big(\omega_{k}^{\sf{f}}\big)}^{\gamma_{k,2}^{\sf{co},\sf{f}}+1} }.  $ 
\end{proposition}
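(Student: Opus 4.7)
The plan is to exploit Proposition~\ref{prop3}: once $s_k^{\sf m}=1$ and $d_k\geq\bar d_{k,1}$, the minimum of the overall objective with respect to $\omega_k^{\sf f}$ reduces to the univariate minimization of $\mathcal{H}_0(\omega_k^{\sf f},d_k)$ over $[\omega_k^{\sf f,min},\omega_k^{\sf f,max}]$, because the summands of $\hat{\mathcal G}_{\mathcal B,\eta}^{\sf TSA}$ decouple across users and the constraint $d_k\geq \bar d_{k,1}$ keeps the denominator $\nu_{k,0}\omega_k^{\sf f}d_k-b_k^{\sf in}$ strictly positive and $\mathcal{H}_0$ smooth. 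Thus the whole proposition reduces to a one-variable calculus exercise on $\mathcal{H}_0(\cdot,d_k)$.

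The crux is to compute $\partial \mathcal{H}_0/\partial \omega_k^{\sf f}$ by the quotient rule and massage the numerator into a form whose sign is explicitly $\gamma_{k,2}^{\sf co,f}\,\bigl[d_k-\mathcal{H}_1(\omega_k^{\sf f})\bigr]$ up to strictly positive factors. Concretely, writing $x=\omega_k^{\sf f}$ and collecting terms, the numerator of the derivative simplifies to $d_k\,\bigl\{\nu_{k,0}d_k\,\tilde\gamma_{k,1}^{\sf co,f}\gamma_{k,2}^{\sf co,f} x^{\gamma_{k,2}^{\sf co,f}+1} -b_k^{\sf in}\bigl[\tilde\gamma_{k,1}^{\sf co,f}(\gamma_{k,2}^{\sf co,f}{+}1)x^{\gamma_{k,2}^{\sf co,f}}+\tilde\gamma_{k,3}^{\sf co,f}\bigr]\bigr\}$, and factoring $\tilde\gamma_{k,1}^{\sf co,f}\nu_{k,0}\gamma_{k,2}^{\sf co,f} x^{\gamma_{k,2}^{\sf co,f}+1}$ out yields exactly $d_k-\mathcal{H}_1(x)$ inside the brackets. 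This identity is what links the first-order condition $\partial\mathcal{H}_0/\partial\omega_k^{\sf f}=0$ to $d_k=\mathcal{H}_1(\omega_k^{\sf f})$, whose inversion provides the candidate interior optimizer $\mathrm{inv}(\mathcal{H}_1(d_k))$.

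Once this identity is in place, the three-way case split follows almost mechanically. If $\gamma_{k,2}^{\sf co,f}\leq 0$ the derivative is non-positive throughout, so $\mathcal{H}_0$ is non-increasing and is minimized at $\omega_k^{\sf f,max}$. If $\gamma_{k,2}^{\sf co,f}\geq 0$, I would first verify that $\mathcal{H}_1(x)$ is strictly decreasing in $x$ by rewriting $\mathcal{H}_1(x)=\frac{(\gamma_{k,2}^{\sf co,f}+1)b_k^{\sf in}}{\nu_{k,0}\gamma_{k,2}^{\sf co,f}}x^{-1}+\frac{\tilde\gamma_{k,3}^{\sf co,f}b_k^{\sf in}}{\tilde\gamma_{k,1}^{\sf co,f}\nu_{k,0}\gamma_{k,2}^{\sf co,f}}x^{-\gamma_{k,2}^{\sf co,f}-1}$ and noting that both summands have positive coefficients and negative exponents. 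Strict monotonicity then gives $\bar d_{k,2}<\bar d_{k,3}$ and makes $\mathrm{inv}(\mathcal{H}_1(\cdot))$ well defined on $[\bar d_{k,2},\bar d_{k,3}]$. For $d_k\leq\bar d_{k,2}$ the inequality $d_k\leq\mathcal{H}_1(x)$ holds on the whole interval, so the derivative is non-positive and the optimum is at $\omega_k^{\sf f,max}$; for $d_k>\bar d_{k,3}$ the reverse inequality holds and the optimum is at $\omega_k^{\sf f,min}$; and for $\bar d_{k,2}<d_k\leq\bar d_{k,3}$ there is a unique stationary point $x^\star=\mathrm{inv}(\mathcal{H}_1(d_k))$ where the sign changes from negative to positive (because $\mathcal{H}_1$ is decreasing), which makes $x^\star$ the global minimizer on the interior.

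The main obstacle will be the algebraic bookkeeping needed to produce the clean factorization $\gamma_{k,2}^{\sf co,f}\,[d_k-\mathcal{H}_1(x)]$ for the derivative; once that identity is established, the monotonicity of $\mathcal{H}_1$ and the three-way case analysis are routine. A minor edge case worth handling explicitly is $\gamma_{k,2}^{\sf co,f}=0$, where $\mathcal{H}_1$ is undefined; there $\mathcal{H}_0(x,d_k)$ collapses to $x d_k(\tilde\gamma_{k,1}^{\sf co,f}+\tilde\gamma_{k,3}^{\sf co,f})/(\nu_{k,0}xd_k-b_k^{\sf in})$, which can be checked directly to be decreasing on the feasible interval, consistently giving $\omega_k^{\sf f\star}=\omega_k^{\sf f,max}$ and matching the first branch of \eqref{eq54}.
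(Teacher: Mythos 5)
Your proposal is correct and follows essentially the same route as the paper's proof: compute $\partial\mathcal{H}_0/\partial\omega_k^{\sf f}$, recognize that its sign is governed by $\gamma_{k,2}^{\sf co,f}\big[d_k-\mathcal{H}_1(\omega_k^{\sf f})\big]$ so that the stationary condition is $d_k=\mathcal{H}_1(\omega_k^{\sf f})$, establish that $\mathcal{H}_1$ is strictly decreasing, and read off the three cases by comparing $d_k$ with $\mathcal{H}_1(\omega_k^{\sf f,max})$ and $\mathcal{H}_1(\omega_k^{\sf f,min})$. Your explicit factorization of the derivative's numerator and the separate treatment of the $\gamma_{k,2}^{\sf co,f}=0$ edge case are slightly more careful than the paper's presentation, but they do not change the argument.
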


\begin{IEEEproof}
	The proof is given in \textbf{Appendix~\ref{prf_prop4}}.
\end{IEEEproof}
Based on the results in \textbf{Propositions~\ref{prop3}} and \textbf{\ref{prop4}}, $(\mathcal{P}_{\sf{FV},\eta}^{\sf{TSA}})$ is equivalent to the
following problem:
\begin{eqnarray}
\begin{aligned}
(\mathcal{P}_{\sf{FV},\eta}^{\sf{TSAeq}})  \; \;   \min\limits_{\tilde{\Omega}_4 } \; \sum_{k \in \mathcal{B}} \big[ s_{k}^{\sf{m}}  \mathcal{H}_0\big(\omega_{k}^{{\sf{f}}\star}, d_k\big) + s_{k}^{\sf{f}} f_{k}^{\sf{f,rq}} \big]  \quad 
\text{s.t} \quad
 (\check{\text{C}}3),  (\check{\text{C}}4),  (\check{\text{C}}8),  \nonumber   
\end{aligned}
\label{p62}	
\end{eqnarray}
where $\tilde{\Omega}_4 = \cup_{k\in \mathcal{B}}\{s_k^{{\sf{c}}}, s_k^{{\sf{f}}}, s_k^{{\sf{m}}}, d_k\}$.

\begin{proposition} \label{prop5}
	The optimal value of $d_k$ for $(\mathcal{P}_{\sf{FV},\eta}^{\sf{TSAeq}})$, denoted as $d_k^\star$, is given as follows:	
	\begin{equation}
	d_k^\star = \begin{cases}
	0, \quad \text{if} \quad  s_k^{{\sf{f}}\star}=1, \\
	d_{k}^{\sf{rq}}, \quad \text{if} \quad  s_k^{{\sf{c}}\star}=1, \\
	\Big\{d_{k,\lambda} \Big| \Big(\frac{\partial \mathcal{H}_0(\omega_{k}^{{\sf{f}}\star}, d_k)}{\partial d_k}\big|_{d_k = d_{k,\lambda}} \Big) + \lambda = 0 \Big\}, \text{otherwise},
	\end{cases}
	\label{eq41}
	\end{equation}
	where $\lambda$ is the Lagrange multiplier of constraint $(\check{\text{C}}8)$. 
\end{proposition}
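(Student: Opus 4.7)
The plan is to analyze \textbf{Proposition~\ref{prop5}} case by case, conditioning on the mode indicator $(s_k^{{\sf{f}}\star}, s_k^{{\sf{c}}\star}, s_k^{{\sf{m}}\star})$ that has already been fixed by the outer integer optimization in $(\mathcal{P}_{\sf{FV},\eta}^{\sf{TSAeq}})$. Once the modes are frozen, the remaining problem in $\{d_k\}_{k\in\mathcal{B}}$ is a separable continuous program coupled only through the backhaul budget constraint $(\check{\text{C}}8)$, so a standard KKT analysis should deliver the stated characterization.

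For the first two branches, I would argue by structural inspection. If $s_k^{{\sf{f}}\star}=1$, then $s_k^{\sf{m}}=s_k^{\sf{c}}=0$, and both the term $s_k^{\sf{m}}\mathcal{H}_0(\omega_k^{{\sf{f}}\star},d_k)$ in the objective and the contribution $s_k^{\sf{m}}d_k$ in $(\check{\text{C}}8)$ vanish; hence $d_k$ is a free variable and assigning $d_k^\star = 0$ is a valid optimal choice, consistent with the interpretation that no backhaul rate is reserved for this user. If $s_k^{{\sf{c}}\star}=1$, the actual backhaul consumption entering $(\check{\text{C}}8)$ is the constant $d_k^{\sf{rq}}$ precomputed from $(\mathcal{P}_4)_k$ in \textbf{Section~\ref{st4a2}}, while $d_k$ itself does not appear in the reduced objective or constraints. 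Identifying $d_k^\star$ with $d_k^{\sf{rq}}$ is then a natural labeling convention: the rate physically allocated to user $k$ equals $d_k^{\sf{rq}}$.

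The substantive case is $s_k^{{\sf{m}}\star}=1$, in which the subproblem collapses to
\begin{equation}
\min_{\{d_k\}_{k\in\mathcal{M}}} \sum_{k\in\mathcal{M}} \mathcal{H}_0\bigl(\omega_k^{{\sf{f}}\star},d_k\bigr) \quad \text{s.t.} \quad \sum_{k\in\mathcal{M}} d_k \leq D^{\sf{max}} - \sum_{k\in\mathcal{C}^\star} d_k^{\sf{rq}},
\end{equation}
where $\mathcal{M}$ and $\mathcal{C}^\star$ denote the sets of users operating in \emph{Modes 3} and \emph{2}, respectively. From the closed form of $\mathcal{H}_0$ given in \textbf{Proposition~\ref{prop3}}, the numerator is independent of $d_k$ while the denominator $\nu_{k,0}\omega_k^{{\sf{f}}\star}d_k - b_k^{\sf{in}}$ is linear and strictly increasing in $d_k$, so $\mathcal{H}_0$ is monotonically decreasing on the feasible interval $d_k > \bar{d}_{k,1}$ inherited from \textbf{Proposition~\ref{prop4}}. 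Consequently the budget constraint must be active at the optimum, and forming the Lagrangian $L = \sum_{k\in\mathcal{M}} \mathcal{H}_0(\omega_k^{{\sf{f}}\star},d_k) + \lambda\bigl(\sum_{k\in\mathcal{M}} d_k - D^{\sf{max}} + \sum_{k\in\mathcal{C}^\star} d_k^{\sf{rq}}\bigr)$ with $\lambda\geq 0$ and imposing stationarity $\partial L/\partial d_k = 0$ yields precisely the implicit relation $\partial\mathcal{H}_0(\omega_k^{{\sf{f}}\star},d_k)/\partial d_k\big|_{d_k = d_{k,\lambda}} + \lambda = 0$ appearing in the third branch of \eqref{eq41}; the common multiplier $\lambda$ is pinned down by substituting these stationary points back into the activated budget constraint.

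The main obstacle will be rigorously justifying that the KKT conditions are sufficient, not just necessary — i.e., establishing that $d_k \mapsto \mathcal{H}_0(\omega_k^{{\sf{f}}\star},d_k)$ is convex on its feasible interval so that the interior stationary point is the unique global minimizer. This reduces to checking that $\partial^2 \mathcal{H}_0/\partial d_k^2 \geq 0$, which can be verified by direct differentiation of the rational expression in \textbf{Proposition~\ref{prop3}} (the numerator is a positive constant times $\omega_k^{\sf{f}}$-dependent coefficients, and the denominator grows linearly in $d_k$). A secondary subtlety is to verify that the stationary $d_{k,\lambda}$ lies within the admissible window $[\bar{d}_{k,1}, d_k^{\sf{rq}}]$: if the search would drive $d_{k,\lambda}$ beyond $d_k^{\sf{rq}}$, the constraint is slack and $\lambda=0$, whereas if it falls below $\bar{d}_{k,1}$ the user should have been assigned to a different mode by the outer integer stage — both corner cases can be shown consistent with the piecewise statement by an elementary monotonicity argument.
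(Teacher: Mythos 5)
Your proposal matches the paper's own proof: the paper likewise forms the Lagrangian of $(\mathcal{P}_{\sf{FV},\eta}^{\sf{TSAeq}})$ with multiplier $\lambda$ on $(\check{\text{C}}8)$, sets $\partial\mathcal{L}/\partial d_k=0$ for users with $s_k^{\sf{m}\star}=1$ to obtain the third branch together with the complementary-slackness condition, and treats the first two branches as immediate from the mode definitions. The sufficiency/uniqueness issue you flag is deliberately left outside the proposition (the paper only claims necessary conditions) and is handled right afterwards in \textbf{Lemma~\ref{lemma_4}}, where the monotone increase of $\partial\mathcal{H}_0(\omega_{k}^{{\sf{f}}\star},d_k)/\partial d_k$ is established---note that this step is slightly more delicate than ``direct differentiation of the rational expression,'' since $\omega_{k}^{{\sf{f}}\star}$ itself varies with $d_k$ according to \eqref{eq54} and the argument must combine the monotonicity of the gradient in each variable with the monotone decrease of $\omega_{k}^{{\sf{f}}\star}$ in $d_k$.
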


\begin{IEEEproof} 
The Lagrangian of problem $(\mathcal{P}_{\sf{FV},\eta}^{\sf{TSAeq}})$ can be expressed as
\begin{equation}
\mathcal{L}(\tilde{\Omega}_4,\lambda) = \sum_{k \in \mathcal{B}} \big[ s_{k}^{\sf{m}} \mathcal{H}_0\big(\omega_{k}^{{\sf{f}}\star}, d_k\big) + s_{k}^{\sf{f}} f_{k}^{\sf{f,rq}} \big] {+} 
 \lambda \big( \sum_{k \in \mathcal{B}} \big[ s_k^{\sf{m}} d_k + (1-s_k^{\sf{f}}-s_k^{\sf{m}})d_{k}^{\sf{rq}} \big]- D^{\sf{max}} \big). \nonumber
\end{equation}
When $s_k^{{\sf{m}} \star}=1$, the necessary conditions for the optimal solution $ f_k^{{\sf{f}}\star}, d_k^\star$ can be obtained by setting the derivatives of $\mathcal{L}$ 
with respect to these variables equal to zero as follows:
\begin{eqnarray}
&&\frac{\partial \mathcal{L} }{\partial d_k} = s_k^{\sf{m}} \Big( \frac{\partial \mathcal{H}_0\big(\omega_{k}^{{\sf{f}}\star}, d_k\big)}{\partial d_k} + \lambda \Big)=0, \label{eq39} \\
&&\lambda\left(\sum_{k \in \mathcal{B}} \big[ s_k^{\sf{m}}d_k + (1-s_k^{\sf{f}}-s_k^{\sf{m}})d_{k}^{\sf{rq}} \big]{-} D^{\sf{max}}\right) =0. \label{eq40}
\end{eqnarray}
Based on (\ref{eq39}), it can be verified that  $d_k^\star$ can be expressed as in (\ref{eq41}).
\end{IEEEproof}

\renewcommand{\baselinestretch}{1.2}
\setlength{\textfloatsep}{6 pt}
\begin{algorithm}[t]
	\footnotesize
	\caption{One-dimensional Search Based Feasibility Verification for $(\mathcal{P}_{\mathcal{B}}^{\sf{ext}})$} 
	\label{alg4}
	\begin{algorithmic}[1]
		\State \textbf{initialize}: $\Delta_\lambda$, $\lambda = 0$, Assign $(\mathcal{P}_{\mathcal{B}}^{\sf{ext}})$ is infeasible.
		\parState {Define $f_{k}^{\sf{f,rq}}$ and $d_k^{\sf{rq}}$ for all $k$ as in Step 2 and Step 3 of \textbf{Algorithm~\ref{alg2}}.}
		\Repeat
		\State  Assign $\lambda = \lambda + \Delta_\lambda$. Compute $d_{k,\lambda}$ as in (\ref{eq41}) and  solve $(\mathcal{P}_{\sf{FV},\eta}^{\sf{OSTS}})_{\lambda}$ to find $\tilde{\mathcal{G}}_{\mathcal{B},\eta}^{\sf{OSTS}}(\lambda)$.
		\If {$\tilde{\mathcal{G}}_{\mathcal{B},\eta}^{\sf{OSTS}}(\lambda) \leq F^{\sf{f,max}}$}
		Return $(\mathcal{P}_{\mathcal{B}}^{\sf{ext}})$ is feasible; \textbf{break}
		\EndIf
		\Until {$\lambda$ = $\lambda^{\sf{max}}$}
	\end{algorithmic}
\end{algorithm}
\renewcommand{\baselinestretch}{1.45}

\begin{lemma} \label{lemma_4}
	The gradient ${\partial \mathcal{H}_0(\omega_{k}^{{\sf{f}}\star}, d_k)}/{\partial d_k}$ is a monotonically increasing function of $d_k$.
\end{lemma}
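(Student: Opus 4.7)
The plan is to establish Lemma~\ref{lemma_4} by treating the three regimes of Proposition~\ref{prop4} separately and then gluing the resulting local monotonicity statements together via continuity of the derivative at the interface points $d_k=\bar d_{k,2}$ and $d_k=\bar d_{k,3}$. Throughout, I would use the feasibility condition $\nu_{k,0}\omega_k^{{\sf f}\star}d_k - b_k^{\sf in}>0$, i.e., $d_k>\bar d_{k,1}$.

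In the two boundary regimes where $\omega_k^{{\sf f}\star}$ is the constant $\omega_k^{\sf max,f}$ or $\omega_k^{\sf f,min}$, a direct quotient-rule computation on the closed form of $\mathcal{H}_0$ yields $\partial_{d_k}\mathcal{H}_0|_{\omega\text{ fixed}} = -b_k^{\sf in}\,\omega\bigl[\tilde\gamma_{k,1}^{\sf co,f}\omega^{\gamma_{k,2}^{\sf co,f}}+\tilde\gamma_{k,3}^{\sf co,f}\bigr]/(\nu_{k,0}\omega d_k - b_k^{\sf in})^2$. This quantity is negative, and its magnitude is strictly decreasing in $d_k$ because the numerator is a positive constant (with $\omega$ fixed) while the denominator grows quadratically. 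Hence the partial is strictly increasing on both constant-$\omega$ regimes.

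In the interior regime $\bar d_{k,2}<d_k\le \bar d_{k,3}$, $\omega_k^{{\sf f}\star}$ is the unconstrained interior stationary point, so $\mathcal{H}_{0,\omega}(\omega_k^{{\sf f}\star},d_k)=0$. By the envelope theorem, $\partial_{d_k}\mathcal{H}_0(\omega_k^{{\sf f}\star}(d_k),d_k)$ equals the total derivative $g'(d_k)$ of the value function $g(d_k)=\mathcal{H}_0(\omega_k^{{\sf f}\star}(d_k),d_k)$, so the monotonicity claim reduces to convexity of $g$. Implicit differentiation of the first-order condition gives $\omega_k^{{\sf f}\star\prime}(d_k)=-\mathcal{H}_{0,\omega d_k}/\mathcal{H}_{0,\omega\omega}$, whence $g''(d_k)=\mathcal{H}_{0,d_k d_k}-\mathcal{H}_{0,\omega d_k}^2/\mathcal{H}_{0,\omega\omega}$. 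The second-order optimality condition supplies $\mathcal{H}_{0,\omega\omega}(\omega_k^{{\sf f}\star},d_k)>0$, so $g''\ge 0$ reduces to the Hessian-determinant inequality $\mathcal{H}_{0,d_k d_k}\mathcal{H}_{0,\omega\omega}\ge \mathcal{H}_{0,\omega d_k}^2$. I would verify it by computing the three second-order partials from the quotient form of $\mathcal{H}_0$, clearing the common positive factor $(\nu_{k,0}\omega d_k - b_k^{\sf in})$ in the denominators, and substituting the stationarity relation $\mathcal{H}_1(\omega_k^{{\sf f}\star})=d_k$ to collapse the leading terms.

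Continuity of the derivative across the interfaces follows from the definitions $\bar d_{k,2}=\mathcal{H}_1(\omega_k^{\sf max,f})$ and $\bar d_{k,3}=\mathcal{H}_1(\omega_k^{\sf f,min})$: at each interface the interior stationary $\omega$ matches the adjacent boundary value, so the one-sided limits of $\omega_k^{{\sf f}\star}(d_k)$, and hence of $\partial_{d_k}\mathcal{H}_0$, agree. Combining the regime-wise strict monotonicity with continuity at the interfaces yields the global statement on $d_k>\bar d_{k,1}$. The main obstacle is the determinant inequality in the interior regime: the power-law factor $\omega^{\gamma_{k,2}^{\sf co,f}}$ makes the raw algebra heavy, but because $\omega_k^{{\sf f}\star}$ satisfies the first-order condition, I expect substantial cancellation to produce a manifestly nonnegative expression such as a sum of squares or a product of positive factors.
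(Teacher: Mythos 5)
Your decomposition into the three regimes of Proposition~4, the constant-$\omega$ computation, and the continuity gluing at $\bar d_{k,2}$ and $\bar d_{k,3}$ are all sound, and the envelope-theorem reduction of the interior regime to convexity of the value function $g(d_k)=\mathcal{H}_0(\omega_k^{{\sf f}\star}(d_k),d_k)$ is a legitimate route that differs genuinely from the paper's. However, as written your argument has a gap exactly where you flag "the main obstacle": the Hessian-determinant inequality $\mathcal{H}_{0,d_kd_k}\mathcal{H}_{0,\omega\omega}\ge\mathcal{H}_{0,\omega d_k}^2$ at the stationary point is asserted on the expectation of cancellation, not proven, and it is the entire content of the lemma in the regime $\gamma_{k,2}^{\sf co,f}>0$. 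The inequality does in fact hold, but it is not "manifestly" a sum of squares: writing $\mathcal{H}_0=\phi(u)N(\omega)$ with $u=\omega d_k$, $\phi(u)=u/(\nu_{k,0}u-b_k^{\sf in})$ and $N(\omega)=\tilde\gamma_{k,1}^{\sf co,f}\omega^{\gamma_{k,2}^{\sf co,f}}+\tilde\gamma_{k,3}^{\sf co,f}$, and substituting the first-order condition $\phi N'=-d_k\phi' N$ together with $\omega N''=(\gamma_{k,2}^{\sf co,f}-1)N'$, the determinant collapses to a positive multiple of $2(\gamma_{k,2}^{\sf co,f}+1)(\nu_{k,0}u-b_k^{\sf in})-\nu_{k,0}u$, whose positivity must then be checked against the stationarity relation $d_k=\mathcal{H}_1(\omega_k^{{\sf f}\star})$ (it follows because that relation forces $\nu_{k,0}u\ge b_k^{\sf in}(\gamma_{k,2}^{\sf co,f}+1)/\gamma_{k,2}^{\sf co,f}$). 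Without carrying out this computation, the interior regime remains unestablished.

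The paper avoids second-order analysis altogether. It observes that (i) for fixed $\omega_k^{\sf f}$ the partial $\mathcal{H}_2(\omega_k^{\sf f},d_k)=\partial\mathcal{H}_0/\partial d_k$ is negative with magnitude decaying in $d_k$ (your boundary-regime computation), (ii) $\mathcal{H}_2$ is decreasing in $\omega_k^{\sf f}$ (via the auxiliary function $\mathcal{H}_4>0$), and (iii) $\omega_k^{{\sf f}\star}$ is non-increasing in $d_k$ by the monotonicity of $\mathcal{H}_1$. Chaining $\mathcal{H}_2(\omega_{k,1}^{{\sf f}\star},d_{k,1})\le\mathcal{H}_2(\omega_{k,2}^{{\sf f}\star},d_{k,1})<\mathcal{H}_2(\omega_{k,2}^{{\sf f}\star},d_{k,2})$ then gives the result for any $d_{k,1}<d_{k,2}$, uniformly across all three regimes and without differentiating $\omega_k^{{\sf f}\star}(d_k)$. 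That first-order comparison argument is what your write-up would need to either reproduce or replace with the completed determinant computation.
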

\begin{IEEEproof} The proof is given in \textbf{Appendix~\ref{prf_lemma_4}} .
\end{IEEEproof}
As can be verified, if ${\partial \mathcal{H}_0(\omega_{k}^{{\sf{f}}\star}, d_k)}/{\partial d_k} \big|_{d_k = \bar{d}_{k,1}} +\lambda> 0$, then $d_k^\star =d_{k,\lambda} = 0, s_k^{{\sf{f}}\star} = 1$ will be the optimal solution. When $s_k^{{\sf{m}} \star}=1$, $\lambda$ must be positive because ${\partial \mathcal{H}_0(\omega_{k}^{{\sf{f}}\star}, d_k)}/{\partial d_k}$ is negative for all $d_k$. 
With the results in \textbf{Lemma~\ref{lemma_4}}, we can conclude that for a given $\lambda$, there exists at most one
 value of $d_k$ satisfying ${\partial \mathcal{H}_0(\omega_{k}^{{\sf{f}}\star}, d_k)}/{\partial d_k}+ \lambda = 0 $. This means if the optimal $\lambda$ is known, problem $(\mathcal{P}_{\sf{FV},\eta}^{\sf{TSAeq}})$ can be solved effectively. Therefore, as described in the following, to solve $(\mathcal{P}_{\sf{FV},\eta}^{\sf{TSAeq}})$, we propose two algorithms: one is based on a one-dimensional search for $\lambda$, and the other one is based on  iterative 
updating $\lambda$.

\subsubsection{One-dimensional $\lambda$-search based two-stage algorithm (OSTS Alg.)} \label{P61}
For a given $\lambda$, 
suppose that $d_{k,\lambda}$ satisfies ${\partial \mathcal{H}_0(\omega_{k}^{{\sf{f}}\star}, d_k)}/{\partial d_k}\big|_{d_k = d_{k,\lambda}}  + \lambda = 0$.
 By defining $f_{k,\lambda} = \mathcal{H}_0\big(\omega_{k}^{{\sf{f}}\star}, d_k\big) \big|_{d_k = d_{k,\lambda}}$, $\mu_{k,\lambda} = s_k^{\sf{m}}$, $\mu_{k,\lambda} = 1-s_k^{\sf{c}}$, and $\mu_{k,\lambda}= s_k^{\sf{c}}(1-x_k)$, we can 
find the optimal solution of $\cup_{k\in \mathcal{B}}\{s_k^{{\sf{c}}}, x_k, d_k\}$ by solving the following problem:
\begin{eqnarray}
\begin{aligned}
(\mathcal{P}_{\sf{FV},\eta}^{\sf{OSTS}})_{\lambda}  \quad &  \tilde{\mathcal{G}}_{\mathcal{B},\eta}^{\sf{OSTS}}(\lambda) = \min\limits_{\cup_{k \in \mathcal{B}}s_{k,\lambda}} \; \sum_{k \in \mathcal{B}} \big[ s_{k,\lambda}^{\sf m} f_{k,\lambda}  + s_{k,\lambda}^{\sf f} f_{k}^{\sf{f,rq}} \big]  \nonumber \\[-0 pt]
\text{s.t.} \quad
& (\check{\text{C}}8)_{\lambda}: \sum_{k \in \mathcal{B}}  s_{k,\lambda}^{\sf m} d_{k,\lambda} + (1-s_{k,\lambda}^{\sf f}-s_{k,\lambda}^{\sf m}) d_{k}^{\sf{rq}} \leq D^{\sf{max}},  \quad \{s_{k,\lambda}^{\sf m},s_{k,\lambda}^{\sf f} \}\in \{0,1\},
\end{aligned}
\label{p63}	
\end{eqnarray}
where $s_{k,\lambda} {=} \{s_{k,\lambda}^{\sf f}, s_{k,\lambda}^{\sf m} \}$. The above transformed problem is an integer linear programming (ILP) problem, which can be solved effectively by CVX. Let $\tilde{\mathcal{G}}_{\mathcal{B},\eta}^{\sf{OSTS}}(\lambda)$ be the optimum of $(\mathcal{P}_{\sf{FV},\eta}^{\sf{OSTS}})_{\lambda} $, then we can find the optimum of  
$(\mathcal{P}_{\sf{FV},\eta}^{\sf{TSAeq}})$ as $\tilde{G}_{\mathcal{B},\eta}^{\sf{OSTS}\star} {=} \min_{\lambda} \tilde{\mathcal{G}}_{\mathcal{B},\eta}^{\sf{OSTS}}(\lambda) $. Moreover, it can be shown that when we increase $\lambda$, all $d_{k,\lambda}$ will decrease. Therefore, the maximum value of $\lambda$ is $\lambda^{\sf{max}}$ satisfying $\mathcal{H}_0(\omega_{k}^{\sf{f}}, d_{k,\lambda^{\sf{max}}}) \geq f_{k}^{\sf{f,rq}}$, $\forall k \in \mathcal{B}$ and  $\sum_{k \in \mathcal{B}} d_{k,\lambda^{\sf{max}}} \leq D^{\sf{max}}$. Note that we can stop the search process when there exists a $\lambda$ such that $\tilde{\mathcal{G}}_{\mathcal{B},\eta}^{\sf{OSTS}}(\lambda) \leq F^{\sf{f,max}}$. When the bisection search for $\eta$ converges, we can find the optimum $\lambda^\star = \argmin_{\lambda} \tilde{\mathcal{G}}_{\mathcal{B},\eta}^{\sf{OSTS}}(\lambda)$, and the optimal variables $s_{k}^{\sf m \star}=s_{k,\lambda^{\star}}^{\sf m}$, $s_{k}^{\sf f \star}=s_{k,\lambda^{\star}}^{\sf f}$, $s_{k}^{\sf c \star}=1-s_{k}^{\sf m \star}-s_{k}^{\sf f \star}$, 
$
f_k^{{\sf{f}}\star} =  s_{k,\lambda^{\star}}^{\sf m} f_{k,\lambda^\star}  + s_{k,\lambda^{\star}}^{\sf f} f_{k}^{\sf f,rq}$, and $d_k^{\star}=s_{k,\lambda^{\star}}^{\sf m} d_{k,\lambda^{\star}} + (1-s_{k,\lambda^{\star}}^{\sf f}-s_{k,\lambda^{\star}}^{\sf m}) d_{k}^{\sf{rq}}, \forall k \in \mathcal{B}$.
The OSTS algorithm for feasibility verification of $(\mathcal{P}_{\mathcal{B}}^{\sf{ext}})$ is summarized in \textbf{Algorithm~\ref{alg4}}.

\subsubsection{Iterative $\lambda$-update based two-stage algorithm (IUTS Alg.)} \label{P62}
 This method can solve  $(\mathcal{P}_{\sf{FV},\eta}^{\sf{TSAeq}})$ with very low complexity via Lagrangian dual updates. Specifically,  the dual function of $(\mathcal{P}_{\sf{FV},\eta}^{\sf{TSAeq}})$ can be defined as $\mathcal{G}^{\sf{o}}(\lambda) = \min_{\tilde{\Omega}_4}  \mathcal{L}(\tilde{\Omega}_4,\lambda)$, and the dual problem can be stated as
\begin{eqnarray}
\max_{\lambda} \mathcal{G}^{\sf{o}}(\lambda)  \text{  s.t.  } \lambda \geq 0.
\end{eqnarray}

Since the dual problem is always convex, $\mathcal{G}^{\sf{o}}(\lambda)$ can be maximized by using the standard sub-gradient 
method where the dual variable $\lambda$ is iteratively updated as follows:
\begin{equation}
\lambda_{n} = \Big[ \lambda_{n-1} + \delta_{n} \big( \sum\nolimits_{k \in \mathcal{B}} \left( s_{k,\lambda_{n-1}}^{\sf m}d_{k,\lambda_{n-1}} + s_{k,\lambda_{n-1}}^{\sf c}d_{k}^{\sf{rq}} \right)- D^{\sf{max}} \big)  \Big]^+,
\end{equation}
where $n$ denotes the iteration index, $\delta_{n}$ represents the step size, and $[a]^+$ is defined as $\max(0,a)$. The 
sub-gradient method is guaranteed to converge to the optimal value of $\lambda$ for an initial primal point $\Omega_4$ if 
the step size $\delta_{n}$ is chosen appropriately, e.g., $\delta_{n} \rightarrow 0$ when $n \rightarrow \infty$, which is met by 
setting $\delta_{n}  = 1/\sqrt{n}$.

For a given $\lambda_{n}$, we can determine the primal variable $d_{k,\lambda_{n}} = \textup{inv}(\mathcal{H}_2(\lambda_{n}))$. For given 
$\lambda_{n}$ and $d_{k,\lambda_{n}}$, the primal problem becomes a linear program in $s_{k,\lambda_n}, \forall k\in\mathcal{B}$, which can be solved effectively by using standard linear optimization techniques. Moreover, the vertices in this problem are the points where the $s_{k,\lambda_n}^{\sf m}$'s, $s_{k,\lambda_n}^{\sf f}$'s, and $s_{k,\lambda_n}^{\sf c}$'s are either $0$ or $1$. Thus, \textit{solving the relaxed
 problem will also return binary values $0$ or $1$}. However, once the $s_{k,\lambda_n}^{\sf m}$'s, $s_{k,\lambda_n}^{\sf f}$'s, and $s_{k,\lambda_n}^{\sf c}$'s take values of $0$ or $1$, the
decision on the application execution  location (fog or cloud) may be trapped at a local optimal solution
such that the required fog computing resources cannot be updated to improve the solution. To overcome this critical issue, the gradient projection method can be adopted to slowly update variables $s_{k,\lambda_n}^{\sf m}$'s, $s_{k,\lambda_n}^{\sf f}$'s, and $s_{k,\lambda_n}^{\sf c}$'s  as $
\mb{s}_k^{(n+1)} = \mathbbm{P}_{\Phi_k}\left( \mb{s}_k^{(n)} -\check{\delta} \nabla\mb{s}_k^{(n)} \right),
$
where $\mb{s}_k^{(n)} {=} [s_{k,\lambda_n}^{\sf m},s_{k,\lambda_n}^{\sf f},s_{k,\lambda_n}^{\sf c}]$, $\check{\delta}$ is the step size, 
$ \nabla\mb{s}_k^{(n)} = [\mathcal{H}_0(\omega_{k}^{\sf{f}\star},d_{k,\lambda_{n}}) + \lambda_{n}d_{k,\lambda_{n}}, 
\lambda_{n} f_{k}^{\sf{f,rq}},  \lambda_{n} d_{k}^{\sf{rq}} ]
$, and $\mathbbm{P}_{\Phi_k}(.)$ is the projection onto the set $\Phi_k = \left\lbrace \mb{s}_k | \mb{s}_k\geq 0  ,\; s_{k,\lambda_n}^{\sf f}{+}s_{k,\lambda_n}^{\sf c}{+}s_{k,\lambda_n}^{\sf m} \leq 1 \right\rbrace $.
Finally, it can be verified that this iterative mechanism always converges \cite{sanjabi2014optimal}.

\subsection{Complexity Analysis}

The overall complexity of the PLA based algorithm for solving the extended problem is $|\mathcal{B}|\mathcal{O}((\mathcal{P}_3)_k) + L|\mathcal{B}|\mathcal{O}((\mathcal{P}_4)_k) + \mathcal{O}(\mathcal{P}_{\sf{FV},\eta}^{\sf{PLA}})$. Moreover,
 $(\mathcal{P}_{\sf{FV},\eta}^{\sf{PLA}})$ is an NP-hard problem, solving it via an optimal exhaustive search entails the complexity
 $\mathcal{O}(2^{{(l+1)}^{|\mathcal{B}|}})$.
 

The proposed two-stage IUTS and OSTS based algorithms have the overall complexity of $|\mathcal{B}|\mathcal{O}((\mathcal{P}_3)_k) + |\mathcal{B}|\mathcal{O}((\mathcal{P}_4)_k) + \mathcal{O}(\mathcal{P}_{\sf{FV},\eta}^{\sf{TSA}})$.
In \textbf{Section~\ref{P61}}, problem  $((\mathcal{P}_{\sf{FV},\eta}^{\sf{OSTS}})_{\lambda})$ can be transformed to the standard knapsack problem  as in \cite{martello1990knapsack}, while the optimal $d_{k}$ and $\omega_{k}$ can be computed directly for a given value of $\lambda$. Therefore, the complexity of \textbf{Algorithm \ref{alg4}} to solve $(\mathcal{P}_{\sf{FV},\eta}^{\sf{TSA}})$ by the OSTS method is $\mathcal{O}(\frac{\lambda^{\sf{max}}}{\Delta_\lambda} \nu_2|2\mathcal{B}|)$. For the IUTS based algorithm presented in \textbf{Section~\ref{P62}}, we can directly update $\lambda_{n}, d_{k,\lambda_{n}}, \mu_{k,i,\lambda_n}, \forall i,k,n$; which means that $(\mathcal{P}_{\sf{FV},\eta}^{\sf{TSA}})$ has a complexity of $\mathcal{O}(N)$, where $N$ is the number of iterations. It is worth noting that  $\mathcal{O}((\mathcal{P}_3)_k)$ and $\mathcal{O}((\mathcal{P}_4)_k)$ are given in \textbf{Section \ref{sec:ComAna}}.


\section{Numerical Results} \label{st5}
\subsection{Simulation Setup}
\setlength{\textfloatsep}{5 pt}
\small{
\begin{table}[htb]
	\caption{Simulation Parameter Settings}
	\centering
	\label{J2t:observed_para1}
	\begin{tabular}{|l|l||l|l|}
		\hline
		Parameters & Setting& Parameters & Setting \\
		\hline
		\hline
		Path loss, $\beta_k$ & $128.1{+}37.6 \log_{10} (dist_k (\text{km})) $  
		&
		Cell radius & $800$ meters  \\
		\hline
		Noise power density, $\sigma_{\sf bs}$  & $3.18 \times 10^{-20} $ W/Hz
		&
		Number of users $K$ & $10$ \\
		\hline
		Beamforming gain $M_0$ & $5$ 
		&
		Max. transmission bandwidth $\rho_k^{\sf max}$ & $1$ MHz \\
		\hline		
		Max. delay time $T_k^{\sf max}$& $1$ second 
		&
		Max. clock speed $F_k^{\sf max}$& $2.4$ GHz \\
		\hline
		Max. transmit power $P_k^{\sf max}$& $0.22$ W 
		&
		Circuit power $p_{k,0}$ & $22$ nW/Hz \\
		\hline
		User computation demand  & $c_k \in [1.8-2.4]$ Gcycles 
		&
		Offloadable load & $c_{k,1} = 0.9 c_k$ \\
		\hline
		Energy coefficient $\alpha_{k}$ & $0.1 \times 10^{-27}$ 
		&
		Time $T^{\sf c}$ & $0.2$ second \\
		\hline
		Raw data size $b_k^{\sf in}$ & $4$ Mbits 
		&
		Max. fog computing resource $F^{\sf f,max}$ & $15$ GHz \\
		\hline
		Max. backhaul capacity $D^{\sf max}$ & $20$ Mbps 
		&
		User compression ratio range: $\omega_{k}^{\sf u}$ & $[2.3, 2.9]$\\
		\hline
		Coefficient $\kappa$ & $50$ 
		&
	    Fog compression ratio range: $\omega_{k}^{\sf f}$  & $[3.4, 11.2]$ \\
		\hline
	\end{tabular}
\end{table}
}
\normalsize

We consider a hierarchical fog-cloud system consisting of $K$=10 users (except for Fig.~\ref{P2-fig10}) where the users are randomly distributed in the cell coverage area with a radius of $800$ m and the BS is located at the cell center. 
Detailed simulation parameter settings are summarized in Table~\ref{J2t:observed_para1} unless otherwise stated. Particularly, the path-loss is calculated as $\beta_k (\text{dB})=128.1 + 37.6\log_{10}(dist_k)$, where $dist_k$ is the geographical distance between user $k$ and the BS (in km) \cite{3GPP1}. We further set the beamforming gain as $M_0=5$, the maximum transmission bandwidth as $\rho_k^{\sf{max}}= 1$ MHz, and the noise power density as $\sigma_{\sf{bs}} = 1.381\times 10^{-23} \times 290 \times 10^{0.9}$ W/Hz \cite{ngo2017cell}.
All users are assumed to have the same maximum clock speed of $2.4$ GHz, a maximum transmit power of $P_k^{\sf max} = 0.22$ W, and the circuit power is set to $p_{k,0} = 22$ nW/Hz.  
We assume that the number of transmission bits incurred to support computation offloading $b_k^{\sf{in}}$ is the same for all users.

Moreover, the computation demands of the 10 different users $\{c_{1}, c_{2},..., c_{9},c_{10}\}$ are set randomly in the range $1.8-2.4$ Gcycles 
while the maximum delay time is $T_k^{\sf{max}} = 1$ second, the non-offloadable load is $c_{k,0} = 0.1 c_{k}$, and the offloadable load is $c_{k,1} = 0.9 c_{k}$ for all users.
We also set the energy coefficient as $\alpha_k = 0.1 \times 10^{-27}$ and the computing time at the cloud server as $T^{\sf{c}}=T_k^{\sf{max}}/5$. 
For the DC algorithm, we set the parameters according to the top-left sub-figure in Fig.~\ref{P2-fig1} as follows: $\gamma_{k,1}^{\sf{co}} = 0.03 \times 2.6^{32.28}$, $\gamma_{k,2}^{\sf{co}}=32.28$, $\gamma_{k,3}^{\sf{co}}=0.3$,  $\gamma_{k,1}^{\sf{de}} = 0.115$, $\gamma_{k,2}^{\sf{de}}=-0.9179$, $\gamma_{k,3}^{\sf{de}}=0.046, \forall k$, $\omega_k^{\sf{u,min}} = 2.3$, and $\omega_k^{\sf{u,max}} = 2.9$. The energy and delay weights are chosen so that $w_k^{\sf{E}} + w_k^{\sf{T}}=1, \forall k$. 
Simulation results are obtained by averaging over 100 realizations of the random locations of the users. 
Finally, for all figures, we set the raw data size as $b_k^{\sf{in}} = 4$ Mbits (except for Figs.~\ref{P2-fig2}, \ref{P2-fig3} and \ref{P2-fig10}),  $w_k^{\sf{E}} = 2w_k^{\sf{T}}$, $\forall k$
(except for Fig.~\ref{P2-fig4}), the maximum fog computing resource as $F^{\sf{f,max}}=15$ GHz, the maximum backhaul capacity as $D^{\sf{max}} = 20$ Mbps (except for Figs.~\ref{P2-fig3} 
and \ref{P2-fig4}), and  $\kappa = 50$ (except for Figs.~\ref{P2-fig2} and ~\ref{P2-fig5}), where  $\kappa$ 
captures the relationship between $\gamma_{k,0}^{\sf{u}}$ in (\ref{eqcomplex}) and
the raw data size as $\gamma_{k,0}^{\sf{u}} = \kappa b_k^{\sf{in}}$ \cite{yu2006information}.

In practice, a fog server can support more powerful DC algorithms compared to the users.
This implies that the compression ratio for the fog server is much larger than that for the users. Therefore, when the fog server decompresses and re-compresses data, we set the parameters according to the top-middle sub-figure in Fig.~\ref{P2-fig1} as follows: $\gamma_{k,1}^{\sf{co,f}} = 0.076, \gamma_{k,2}^{\sf{co,f}} = 0.7116, \gamma_{k,3}^{\sf{co,f}} = 0.5794$, $\omega_{k}^{\sf{f,min}} = 3.4$ and $\omega_{k}^{\sf{f,max}} = 11.2$.  The step size is set as $\check{\delta} = 0.1$. 
For the proposed algorithms presented in \textbf{Section~\ref{st3}} and \textbf{Section~\ref{st4a}}, numerical results 
are shown in Figs.~\ref{P2-fig2}--\ref{P2-fig10} and $\text{Figs.~\ref{P2-fig7}--\ref{P2-fig11}}$, respectively.
\subsection{Results for DC at only Mobile Users}

In Fig.~\ref{P2-fig2}, we show the significant benefits of DC in computation offloading where the min-max WEDC
(called WEDC for brevity) vs. $b_k^{\sf{in}}$ is plotted  for six different schemes: the `Local-execution' scheme in which
 all users' applications are executed locally; the `Alg. in \cite{du2017computation} (w/o Comp)' scheme 
in which the benchmark algorithm in \cite{du2017computation} is applied with $\omega_{k}^{\sf u}=1, \forall k$, and no DC \footnote{As discussed in Section I, our current work is the first study on joint data compression and computation offloading in hierarchical fog-cloud systems to minimize the maximum weighted energy and service delay cost. 
Therefore, \cite{du2017computation}, a recent work on computation offloading in hierarchical fog-cloud systems without exploiting data compression, is selected as a benchmark work for comparison purpose.}; the `JCORA Alg. w/o Comp' in which the proposed JCORA algorithm is applied with $\omega_{k}^{\sf u}=1, \forall k$, and no DC (the other variables 
are optimized as in the JCORA algorithm); and three other instances of the proposed JCORA algorithm with
DC and three different values of $\kappa = 50, 100, 200$ ($\kappa = \gamma_{k,0}^{\sf{u}}/b_k^{\sf{in}}$). To guarantee 
a fair comparison between 
the `Alg. in \cite{du2017computation} (w/o Comp)' scheme and our proposed schemes, we also apply MIMO and 
optimize the offloading decision and the allocation of the fog computing resources,  transmit power, bandwidth, and local CPU clock speed 
for the `Alg. in \cite{du2017computation} (w/o Comp)' scheme. In addition, for the remaining
 variable $d_k$, we  allocate the backhaul capacity equally to the users that offload their tasks to the cloud server.

\begin{figure}[t]
	\centering
	\includegraphics[width=0.7\textwidth]{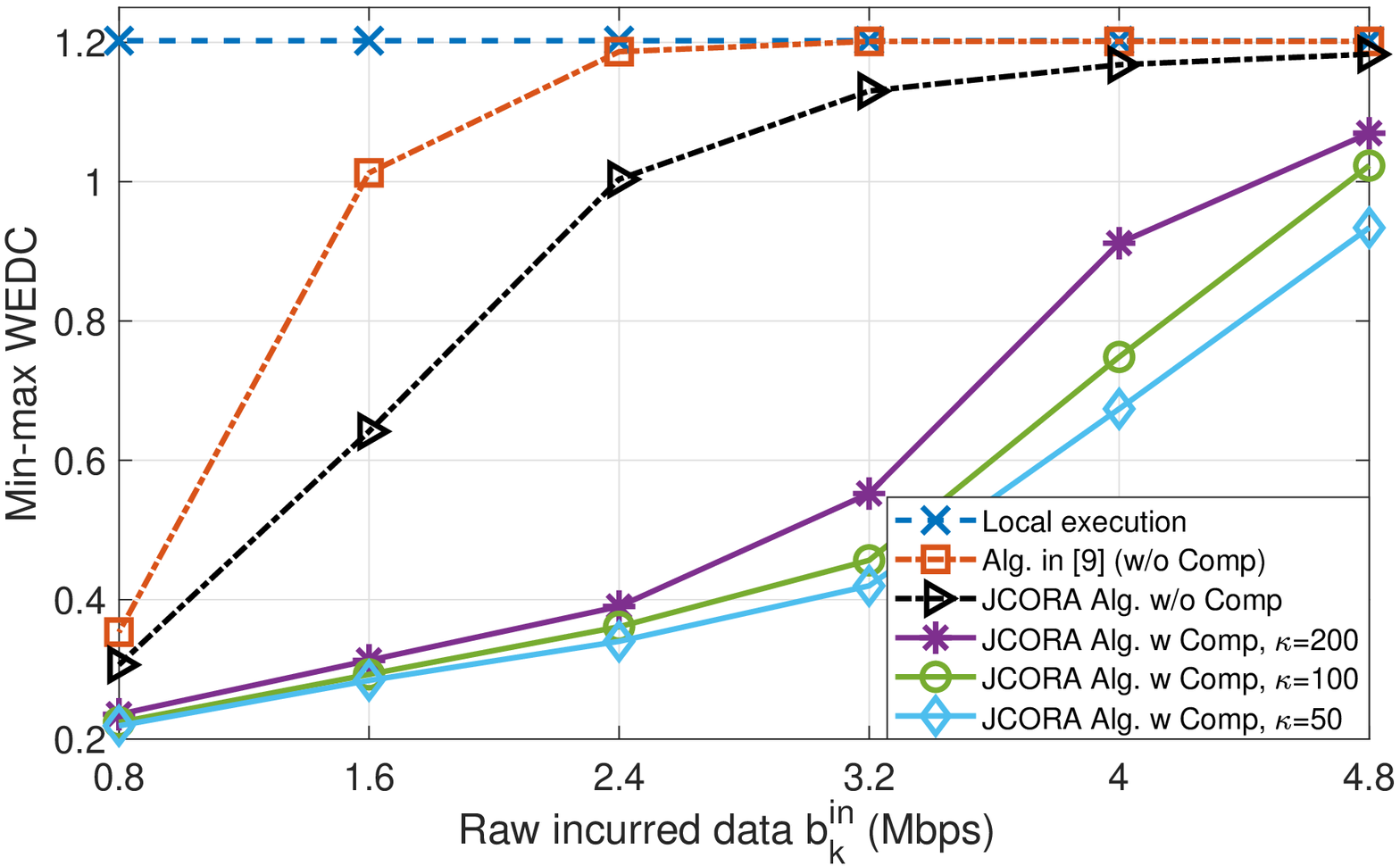}
	\caption{Min-max WEDC vs. $b_k^{\sf{in}}$.}
	\label{P2-fig2}
\end{figure}

\begin{figure}[t]
	\centering
		\includegraphics[width=0.7\textwidth]{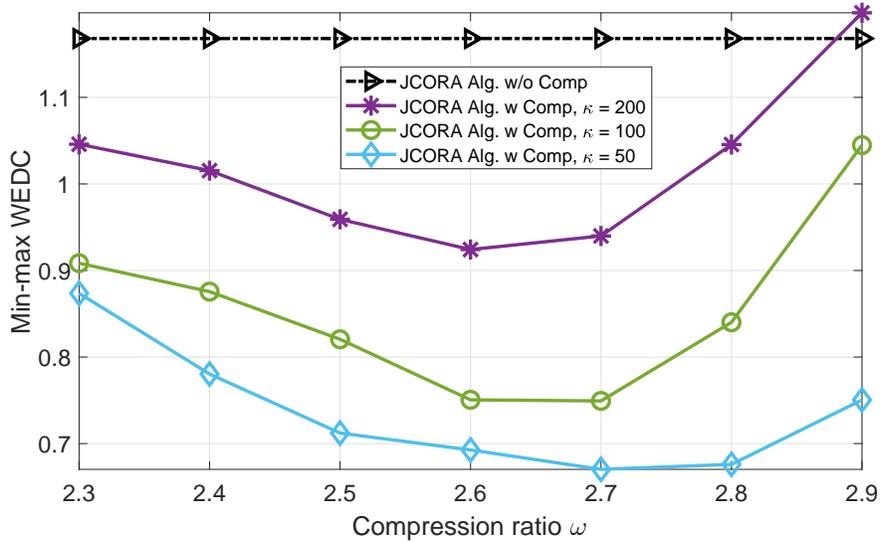}
		\caption{Min-max WEDC vs. compression ratio.}
		\label{P2-fig5}
\end{figure}

As can be observed from Fig.~\ref{P2-fig2}, computation offloading can greatly
 improve the WEDC when there are sufficient radio and computing resources to support 
the offloading (e.g., the incurred amount of data is not too large). Specifically, computation offloading even without DC can result in a significant reduction of the WEDC compared to local execution, especially when the incurred amount of data $b_k^{\sf{in}}$ is small due to the constrained radio resources. Furthermore, even without exploiting DC, our proposed algorithm (JCORA Alg. w/o Comp) 
results in much better performance than the algorithm proposed in \cite{du2017computation}. 
This is because our proposed design jointly optimizes the offloading decisions and the computing and radio resource allocation,
while in \cite{du2017computation}, the offloading decisions are found nearly independent of the computing and radio resource
allocation. In particular, the semidefinite relaxation technique employed in \cite{du2017computation} may not always
 guarantee the rank-1 condition for the optimized matrix. 
Joint optimization of DC, computation offloading, and resource allocation can lead to a further significant 
reduction of the WEDC for a larger range of $b_k^{\sf{in}}$ (e.g., when $b_k^{\sf in} = 2.4 $ Mbps, the min-max WEDC is reduced by up to 65\%). However, the energy and time consumed for (de)compression  also affect the 
achievable  min-max WEDC, and their impact tends to become stronger for larger $\gamma_{k,0}^{\sf{u}}$ and when the available radio 
resource is more limited.

In Fig.~\ref{P2-fig5}, we investigate the impact of the compression ratio on the min-max WEDC for the JCOCA scheme with and without DC for different values of $\omega_k^{\sf{u}} = \omega, \forall k$ (i.e., the compression ratio $\omega_k^{\sf{u}}$ is fixed while the remaining variables are optimized as in the JCOCA scheme). As can be seen, there is an optimal $\omega$ that achieves the minimum WEDC. Moreover, the optimal value of $\omega$ tends to decrease for increasing computational load because the optimal compression ratio has to efficiently balance the demand on the radio and computing resources. 
In fact, for the right choice of $\omega$, the \textit{``JCORA Alg. w Comp''} scheme
greatly outperforms the \textit{``JCORA Alg. w/o Comp''} scheme. 
Moreover, this figure shows that for the optimal $\omega$, 29\% reduction in the min-max WEDC   
can be achieved compared to  the worst choice of $\omega$. 

\begin{figure}[t]
	\centering
		\includegraphics[width=0.7\textwidth]{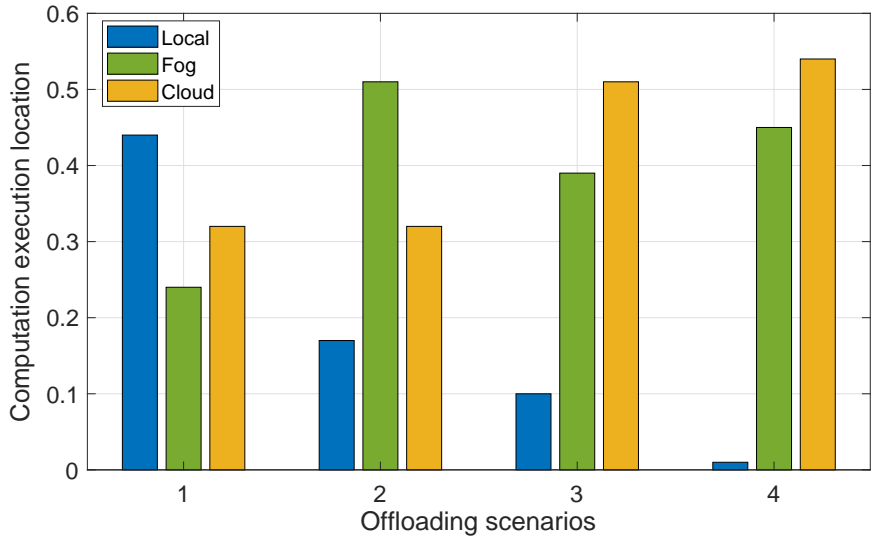}
		\caption{User, fog, and cloud computational load processing.}
		\label{P2-fig3}
\end{figure}

\begin{figure}[t]
	\centering
		\includegraphics[width=0.7\textwidth]{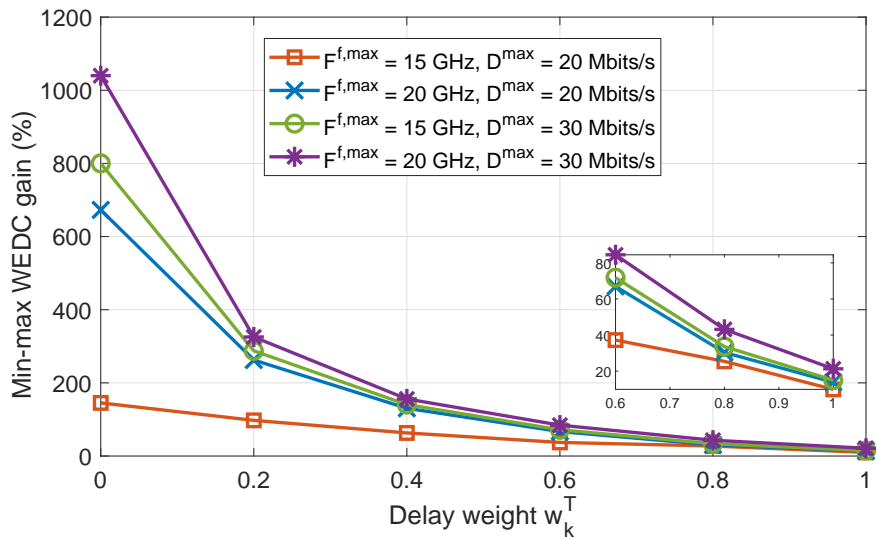}
		\caption{Min-max WEDC gain vs. delay weight.}
		\label{P2-fig4} 
\end{figure}

Fig.~\ref{P2-fig3} shows the computational loads processed locally as well as in the fog and cloud servers when $b_k^{\sf{in}}=4.8$ Mbits for four different scenarios: 
1) $F^{\sf{f,max}} = 15$ GHz, $D^{\sf{max}} = 20$ Mbps; 2) $F^{\sf{f,max}} = 20$ GHz, $D^{\sf{max}} = 20$ Mbps; 3) $F^{\sf{f,max}} = 15$ GHz, $D^{\sf{max}} = 30$ Mbps; and 4) $F^{\sf{f,max}} = 20$ GHz, $D^{\sf{max}} = 30$ Mbps. The results shown in Fig.~\ref{P2-fig3} suggest that
more of the users' computational load should be offloaded and executed at the fog and cloud servers if there are sufficient resources to support the
offloading process. Particularly, nearly all users offload their computation tasks in Scenario 4, while in Scenario 1, 
about half of the users  offload their computation demand.  



In Fig.~\ref{P2-fig4}, we show the min-max WEDC gain due to DC as a function of the delay weight $w_k^T$. 
The  min-max WEDC gain is computed as $\frac{\eta^{\sf{NoComp}\star} -\eta^{\sf{Comp}\star}}{\eta^{\sf{Comp}\star} } \times 100$ 
(\%) where $\eta^{\sf{Comp}\star}$ and $\eta^{\sf{NoComp}\star}$  denote the optimal min-max WEDCs with and without DC
under the JCORA framework. 
When energy saving is the only concern for the mobile devices ($w_k^{\sf{T}} = 0, w_k^{\sf{E}} = 1$), 
this figure confirms that JCORA with DC can save more than 170\% of energy compared with 
JCORA without DC even for the scenario with $F^{\sf{f,max}} = 15$ GHz and $D^{\sf{max}} = 20$ Mbps. 
The min-max WEDC gain decreases
when we focus more on latency (i.e., for higher delay weight $w_k^T$). Moreover, for $w_k^{\sf{T}}=1$, DC 
results in a 15\% reduction of the execution delay for $F^{\sf{f,max}} = 15$ GHz, $D^{\sf{max}} = 20$ Mbps, 
and about 25\% delay reduction for $F^{\sf{f,max}} = 20$ GHz, $D^{\sf{max}} = 30$ Mbps.

In Fig.~\ref{P2-fig10}, we show the min-max WEDC vs. the number of users in the system for $b_k^{\sf{in}} = 2.4$ Mbps, $\forall k$. When there are more users that may offload their computational loads to the fog and cloud servers, the available resources that can be allocated to each user become smaller; therefore, the min-max WEDC increases. However, the proposed JCORA
scheme still achieves the optimal performance in the multi-user hierarchical fog-cloud system. 

\begin{figure}[t]
	\centering
	\includegraphics[width=0.7\textwidth]{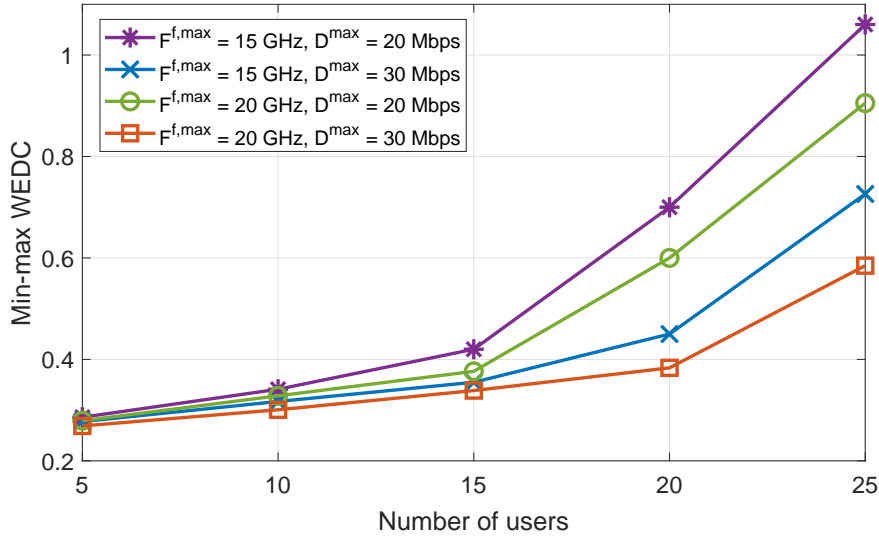}
	\caption{Min-max WEDC vs. number of users.}
	\label{P2-fig10}	
\end{figure}
\begin{figure}[t]
	\centering
	\includegraphics[width=0.7\textwidth]{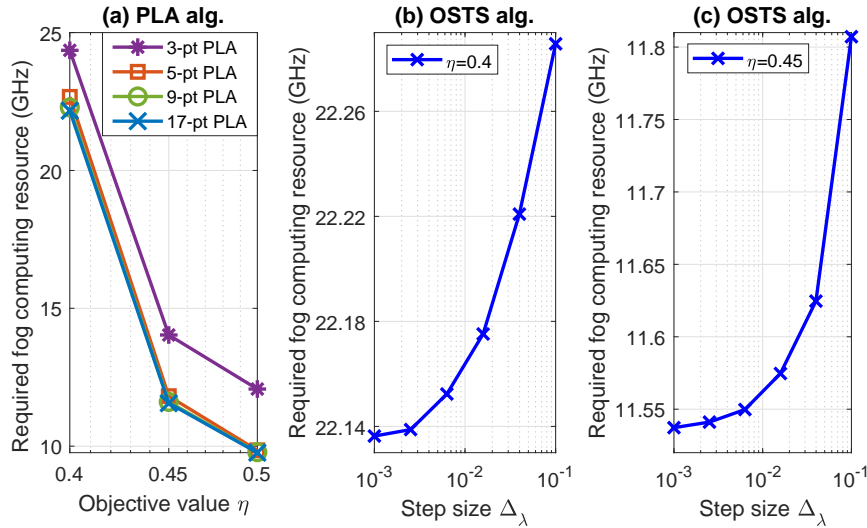}
	\caption{Accuracy of proposed PLA and OSTS algs.}
	\label{P2-fig7}
\end{figure}
	\begin{figure}[t]
	\centering
	\includegraphics[width=0.7\textwidth]{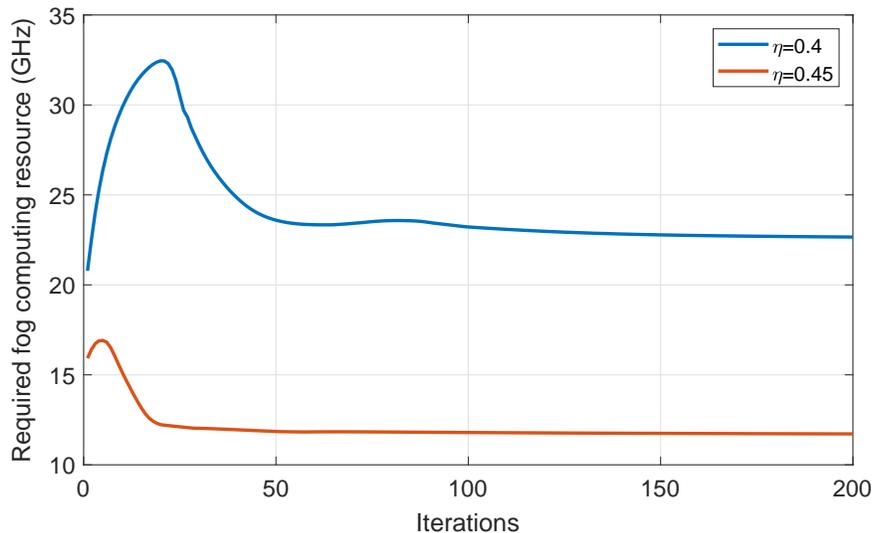}
	\caption{Convergence of proposed IUTS alg.}
	\label{P2-fig8}
\end{figure}
\subsection{Results for DC at both Mobile Users and Fog Server}

To evaluate the system performance when  DC is performed  at both the mobile users and the fog server, 
we consider the following parameter setting: $ \gamma_{k,0}^{\sf{f}} =\gamma_{k,0}^{\sf{u}}$ (except for Fig.~\ref{P2-fig11}),
$F^{\sf{f,max}} = 15$ GHz, and $D^{\sf{max}} = 20$ Mbps. 
In Fig.~\ref{P2-fig7}, we show the required computing resources for 
 the proposed PLA and OSTS algorithms when solving the extended problem. In Fig.~\ref{P2-fig7}-(a), `$n$-pt PLA' corresponds to the
$n$-point PLA method. 
In the PLA method,  when the number of points used to approximate the actual function is sufficiently large,  the difference between the actual and approximated functions becomes negligible. As shown in Fig~\ref{P2-fig7}-(a), there is only a 
small difference in the required fog computing resources when the number of points increases from $5$ 
to $9$. In addition, these required resources are nearly identical for both  the $9$-point and $17$-point curves.
Therefore, we use  `9-pt PLA' as a benchmark method to evaluate the performance of the OSTS and IUTS algorithms. 
The middle and right sub-figures illustrate the accuracy of the OSTS algorithm in solving problem $(\mathcal{P}_{\sf{FV},\eta}^{\sf{TSA}})$ vs.
the step size $\Delta_{\lambda}$. Specifically, these figures show that the value of  $G_{\mathcal{B},\eta}^{\sf{OSTS}\star}$ becomes stable when $\Delta_{\lambda}$ is about $5\times10^{-3}$. Moreover, the value of $G_{\mathcal{B},\eta}^{\sf{OSTS}\star}$ achieved with the OSTS algorithm at $\Delta_{\lambda} = 5\times10^{-3}$ is almost the same as the value of  $\hat{G}_{\mathcal{B},\eta}^{\sf{PLA}\star}$ achieved with `17-pt PLA', which means that the approximated problem 
$(\mathcal{P}_{\sf{FV},\eta}^{\sf{TSA}})$ can be used to find a close-to-optimal solution of the extended problem. Besides, the difference in $G_{\mathcal{B},\eta}^{\sf{OSTS}\star}$ for $\Delta_{\lambda} = 0.1$  and $\Delta_{\lambda} = 0.001$ is less than 2\%, which means that a large step size ($\Delta_{\lambda} = 0.1$) can be used to make the OSTS algorithm converge quickly while still guaranteeing
good system performance.

 The convergence of the proposed IUTS algorithm is illustrated in Fig.~\ref{P2-fig8}. The initial conditions are 
		set as follows: $\lambda=1$, and $s_{k,\lambda}^{\sf x}=1/3, \forall k \in \mathcal{B}, \sf{x} \in \{\sf{f,c,m}\}$.  It can be observed that the proposed IUTS algorithm converges after about 100 iterations even with small feasible set when $\eta$ closes to the optimal value.

The benefits of data re-compression at the fog are shown in Fig.~\ref{P2-fig6} where we plot the min-max WEDC vs. 
$b_k^{\sf{in}}$ for four different schemes: the `JCORA Alg. w Comp' scheme in which data are compressed only at
 users while the three remaining schemes correspond to the proposed algorithms for the extended case. In particular, `9-pt PLA Alg. w Fog Comp', `OSTS Alg. w Fog Comp', and `IUTS Alg. w Fog Comp' correspond to the 
9-point PLA, OSTS, and IUTS algorithms, respectively, which perform compression at both the users and the fog server.
For $b_k^{\sf{in}} = 4$ Mbits, an additional min-max WEDC reduction of 35\% can be achieved by performing DC at both the users and the fog server. 
Moreover, the required radio resources decrease with decreasing  $b_k^{\sf{in}}$; therefore, the gain is reduced due to the decreasing demand for data transmission.  When $b_k^{\sf{in}}$ increases, the main bottleneck for  computation offloading  are the limited radio resources available to support data transmissions between the users and the fog server; 
therefore, the gain due to data re-compression at the fog server becomes less significant. This figure 
also confirms that the `9-pt PLA', `OSTS', and `IUTS' schemes achieve almost the same min-max WEDC.

\begin{figure}[t]
	\centering
	\includegraphics[width=0.7\textwidth]{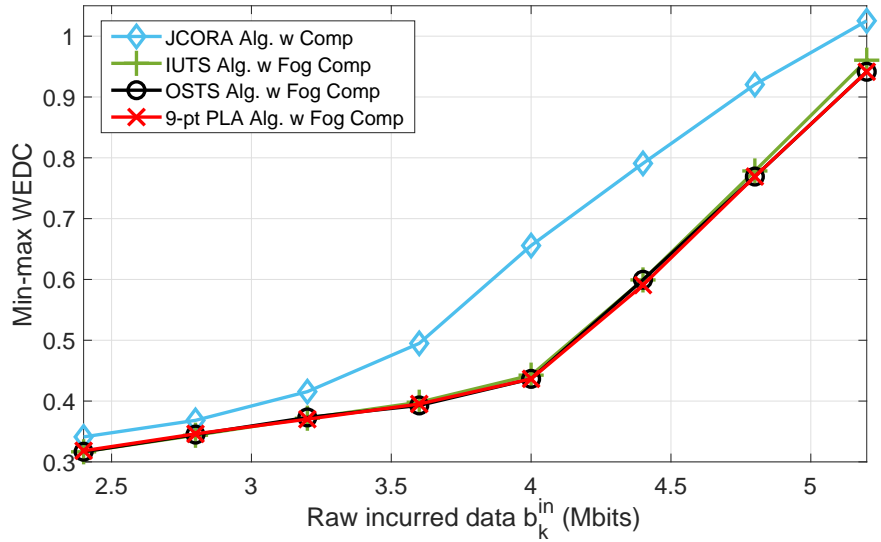}
	\caption{Min-max WEDC in general design scenario.}
	\label{P2-fig6}
\end{figure}

\begin{figure}[t]
	\centering
	\includegraphics[width=0.7\textwidth]{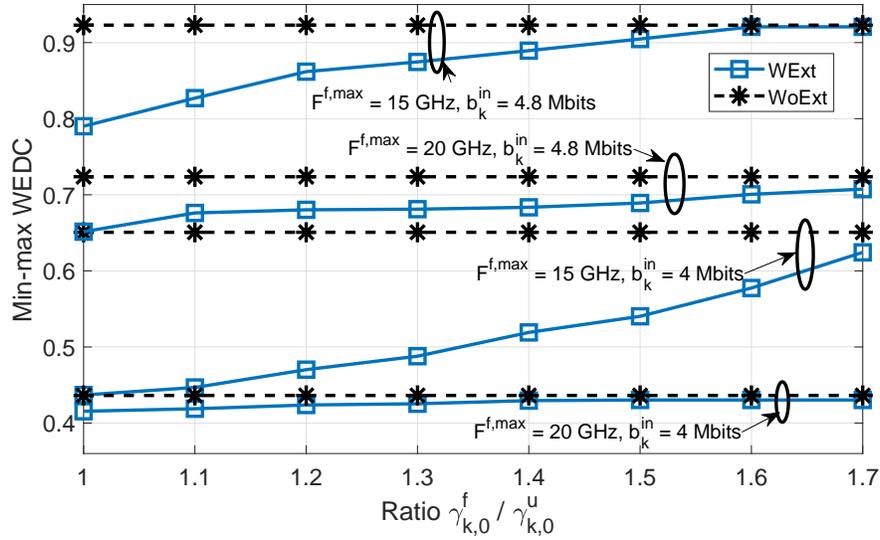}
	\caption{Min-max WEDC vs. $\gamma_{k,0}^{\sf{f}}/\gamma_{k,0}^{\sf{u}}$.}
	\label{P2-fig11}
\end{figure}

In Fig.~\ref{P2-fig11}, we plot the min-max WEDC vs. the ratio between the maximum computational loads (in CPU cycles) required 
to compress data at the fog server ($\gamma_{k,0}^{\sf{f}}$) and the user ($\gamma_{k,0}^{\sf{u}}$) 
 for different values  of $F^{\sf{f,max}}$ and $b_k^{\sf{in}}$. The  `WoExt' and `WExt' correspond to the JCORA and OSTS algorithms presented in \textbf{Section~\ref{st3}} and \textbf{\ref{st4a}}, respectively.
This figure shows that  data re-compression at the fog server can bring additional performance benefits, especially in scenarios with limited fog
 computing resources (i.e., $F^{\sf{f,max}}=15$ GHz). As the compression ratio adopted at the fog server 
could be much larger than that at the users, a better performance can be obtained by applying DC at both the users and the fog server when 
$\gamma_{k,0}^{\sf{f}}$ is not much larger than $\gamma_{k,0}^{\sf{u}}$.  
Otherwise, if the cost due to data re-compression becomes larger, 
the benefits of adopting \textbf{\textit{Mode 3}} are less significant (i.e., for $\gamma_{k,0}^{\sf{f}} = 1.7\gamma_{k,0}^{\sf{u}}$). 

\begin{remark}
\label{mrk:discuss_fig}
Although certain variation patterns or performance improvements obtained 
		by the newly proposed designs for certain parameter variations can be predicted or foreseen through careful analysis.
		However, presenting this kind of simulation results for the proposed designs does still provide valuable insights in many cases.
		In particular, these simulation results enable us to confirm/validate the superior performance and
		quantify the performance gains of the proposed algorithms compared to  state-of-the-art designs/algorithms
		for practical settings. 
\end{remark}

\section{Conclusion}
\label{st6}
In this paper, we have proposed novel and efficient algorithms for joint DC and computation offloading in
 hierarchical fog-cloud systems which minimize the weighted energy and delay cost while maintaining user fairness. 
 Specifically, we have considered the cases where DC is leveraged at only the mobile users and
 at both the mobile users and the fog server, respectively. Numerical results have confirmed the significant performance gains of 
the proposed algorithms compared to conventional schemes not using DC. 
Particularly, the following key observations can be drawn from our numerical studies: 1) Joint DC and computation offloading  can result in min-max WEDC reductions of up to 65\% compared to optimal computation
offloading without DC; 2) the proposed JCORA scheme can efficiently distribute the computational load among the mobile users, the fog server, and the cloud server and exploits the available system resources in an optimal manner;
3) when energy saving is the only concern for the mobile users, the JCORA scheme can achieve an energy saving 
gain of up to  a few hundred percent compared to optimal computation offloading  without DC; 
and 4) an additional min-max WEDC reduction of up to 35\% can be achieved by further employing DC at the fog server.
In future work, we plan to extend our designs to multi-task offloading and systems with multiple fog servers.

\appendices

\section{Proof of \textbf{Lemma~\ref{lemma_1}}}
\label{prf_lemma_1} 
The three statements in \textbf{Lemma~\ref{lemma_1}} can be proved as follows.
\begin{enumerate}
	\item As can be observed, merging the optimal solutions of $(\mathcal{P}_{\mathcal{A}})$ and $(\mathcal{P}_{\mathcal{B}})$ will result in a feasible solution of $(\mathcal{P}_2)$. In addition, the objective function corresponding to this feasible solution can be expressed as $\max(\eta_{\mathcal{A}}^{\star},\eta_{\mathcal{B}}^{\star})$. Therefore, we can conclude that $\eta^{\star} \leq \max(\eta_{\mathcal{A}}^{\star},\eta_{\mathcal{B}}^{\star})$ for any user classification $(\mathcal{A},\mathcal{B})$.
	
	\item The merged optimal solutions of $(\mathcal{P}_{\mathcal{A}})$ and $(\mathcal{P}_{\mathcal{B}})$ is a feasible solution of $(\mathcal{P}_2)$, and its objective function is $\max(\eta_{\mathcal{A}}^{\star},\eta_{\mathcal{B}}^{\star})$. Hence, if $\eta^{\star} = \max(\eta_{\mathcal{A}}^{\star},\eta_{\mathcal{B}}^{\star})$, this feasible solution is also an optimal solution.
	
	\item  Let $f_{k}^{{\sf{f}}\star}$ and $d_{k}^\star$ denote the optimal solution of user $k \in \mathcal{B}$. It is easy to see that $\{f_{k}^{{\sf{f}}\star}, \; d_{k}^\star\}$ is also the feasible point when $k \in \mathcal{B'}$. This means that $\eta_{\mathcal{B'}}$ cannot be greater than $\eta_{\mathcal{B}}$. 
	On the other hand, to satisfy the delay constraint, we must allocate either the fog computing resource or the backhaul resource for users $k' \in \mathcal{B}\backslash \mathcal{B'}$.
	Therefore, we can allocate these resources $\{f_{k'}^{{\sf{f}}\star}, \; d_{k'}^\star\}$ for users $k \in \mathcal{B'}$ with the WEDC satisfying $\Xi_k = \eta_{\mathcal{B}}$. Let $f^{\sf{f}} = \sum_{k' \in \mathcal{B}\backslash \mathcal{B'}} f_{k'}^{{\sf{f}}\star}$, $d= \sum_{k' \in \mathcal{B}\backslash \mathcal{B'}} d_{k'}^{{\sf{f}}\star}$, $N_1 = \sum_{k \in \mathcal{B'}} \mathbbm{1}_{\Xi_{k}=\eta_{\mathcal{B}}} \mathbbm{1}_{s_k^{\sf{f}}=1}$ and $N_2 = \sum_{k \in \mathcal{B'}} \mathbbm{1}_{\Xi_{k}=\eta_{\mathcal{B}}} \mathbbm{1}_{s_k^{\sf{c}}=1}$. 
	If both $f^{\sf{f}}$ and $d$ are positive, 
	we can allocate $d_{k} = d_{k}^\star + d/N_2$ if $s_k^{\sf{c}}=1$, or $f_{k}^{{\sf{f}}} = f_{k}^{{\sf{f}}\star} + f^{\sf{f}}/N_1$ if $s_k^{\sf{c}}=0$ for these users.  As the WEDC $\Xi_k$ and the total delay $T_k$ are inversely proportional to the $f_k^{\sf{f}}$ and $d_k$, $\Xi_k$ will decrease and less than to $\eta_{\mathcal{B}}$. 
	Therefore, $\eta_{\mathcal{B'}} \leq \eta_{\mathcal{B}}$.
\end{enumerate}
\section{Proof of \textbf{Proposition~\ref{prop_PA}}}
\label{prf_prop_PA}
Since the variables of all users in $\mathcal{A}$ are independent in $(\mathcal{P}_{\mathcal{A}})$, this optimization problem can be solved by checking the feasibility condition of each variable.
Specifically, considering constraints (CA0) and (CA2), the solution for $f_k^{\sf{u}}$ is feasible if and only if 
$
\min_{f_k^{\sf{u,min}} \leq f_{k}^{\sf{u}} \leq F_k^{\sf{max}}}\mathcal{Q}_{k,0}(f_k^{\sf{u}}) \leq \eta_{\mathcal{A}}.
$

As $\mathcal{Q}_{k,0}$ is convex with respect to $f_k^{\sf{u}}$, one can easily determine the only stationary point of $\mathcal{Q}_{k,0}(f_k^{\sf{u}})$ as $f_k^{\sf{u,sta}} = \sqrt[3]{\frac{w_k^\text{\tiny T}}{2w_k^\text{\tiny E}  \alpha_k}}$ by taking the derivative of $\mathcal{Q}_{k,0}(f_k^{\sf{u}})$ with respect to $f_k^{\sf{u}}$, and setting the resulting derivative to zero.
Then, the minimum value of $\mathcal{Q}_{k,0}(f_k^{\sf{u}})$ over the range $f_k^{\sf{u,min}}\leq f_{k}^{\sf{u}} \leq F_k^{\sf{max}}$ can be determined as
in \eqref{eq20}.
Using these results for all locally executing users, the feasibility condition of problem $(\mathcal{P}_{\mathcal{A}})$ can be 
written as
$
\max_{k \in \mathcal{A}} \eta_k^{\sf{lo}} \leq \eta_{\mathcal{A}}.
$
Hence, the optimum value of $(\mathcal{P}_{\mathcal{A}})$ must be $\eta_{\mathcal{A}}^{\star}=\max_{k \in \mathcal{A}} \eta_k^{\sf{lo}}$ where the optimal solution for each user $k$ can be defined as the point in $\big[f_k^{\sf{u,min}},F_k^{\sf{max}}\big]$ corresponding to the minimum value of $\mathcal{Q}_{k,0}(f_k^{\sf{u}})$.

\section{Proof of \textbf{Theorem~\ref{thrm_opt_class}}}
\label{prf_thrm_opt_class}
Assume that $(\mathcal{A}^{\prime},\mathcal{B}^{\prime})$ is an optimal classification corresponding to the optimum value $\eta^{\star}$. Due to Statement 2 in \textbf{Lemma~\ref{lemma_1}} and \textbf{Proposition~\ref{prop_PA}}, we have the following results:
\beq
\max(\eta_{\mathcal{A}^{\prime}},\eta_{\mathcal{B}^{\prime}}) = \eta^{\star}, \\
\quad \eta_{\mathcal{A}^{\prime}} = \max_{k \in \mathcal{A}^{\prime}} \eta_k^{\sf{lo}}. \label{A_prime}
\eeq

If there is no user $k$ in $\mathcal{B}$ whose $\eta_k^{\sf{lo}}$ is less than or equal to $\eta^{\star}$, we can conclude that $(\mathcal{A}^{\prime},\mathcal{B}^{\prime}) \equiv (\mathcal{A}^{\star},\mathcal{B}^{\star})$. Then, $(\mathcal{A}^{\star},\mathcal{B}^{\star})$ must be an optimal classification.

Conversely, if there exists a user $k$ in $\mathcal{B}$ such that $\eta_k^{\sf{lo}} \leq \eta^{\star}$, we will prove that the user classification determined in \textbf{Theorem~\ref{thrm_opt_class}} is also an optimal classification.
Let $\mathcal{C}=\lbrace k \in \mathcal{B}^{\prime}\vert \eta_k^{\sf{lo}} \leq \eta^{\star} \rbrace$. Then, it is easy to see that
$
\mathcal{A}^{\star} = \mathcal{A}^{\prime} \cup \mathcal{C} \;\; \text{ and } 
\mathcal{B}^{\star} = \mathcal{B}^{\prime}/\mathcal{C}.
$
According to the definition of $\mathcal{C}$, \eqref{A_prime}, and the result in \textbf{Proposition~\ref{prop_PA}}, we have $\eta_{\mathcal{A}^{\star}} \leq \eta^{\star}$.
In addition, since $\mathcal{B}^{\star} \subset \mathcal{B}^{\prime}$, because of Statement 3 in \textbf{Lemma~\ref{lemma_1}}, we can conclude that $\eta_{\mathcal{B}^{\star}} \leq \eta_{\mathcal{B}^{\prime}} \leq \eta^{\star}$.
Using these results, we can conclude that $(\mathcal{A}^{\star},\mathcal{B}^{\star})$ is an optimal classification.

\section{ Proof of \textbf{Proposition~\ref{prop2}}} \label{prf_prop2}
Functions $\mathcal{Q}_{k,1}$ and $\mathcal{Q}_{k,2}$ are sums of exponential terms with positive coefficients; therefore, they are convex with respect to the variables in set $\tilde{\Omega}_{2,k}$ as proven in \cite{boyd2004convex}. On the other hand, the first term of the WEDC and the total delay can be represented via function $\mathcal{H}({\tilde{p}_k},y_k) = \frac{a_{k,0}  \e^{a_{k,1}{\tilde{p}_k} + a_{k,2} y_k }}{\log \big(1{\plus} \beta_{k,0} \e^{{\tilde{p}_k}}  \big)}$, where $y_k \in \{ {\tilde{\omega}_k^{\sf{u}}}, \tilde{\rho}_{k}, \tilde{l}_{k} \}$,  $a_{k,0} > 0$, $a_{k,1} = \{0,1\}$, and $\beta_{k,0}\e^{{\tilde{p}_k}} > 0$ due to the required positive data rate  when users decide to offload their computational load. 

Now, we will show that $\mathcal{H}({\tilde{p}_k},y_k)$ is a convex function of ${\tilde{p}_k}$ and $y_k$. Firstly, $\mathcal{H}({\tilde{p}_k},y_k)$ is convex with respect to $y_k$. 
Now, we need to prove that ${\partial^2 \mathcal{H}({\tilde{p}_k},y_k)}/{\partial {\tilde{p}_k}^2} \geq  0$ and the determinant $|H({\tilde{p}_k},y_k)| > 0$, where $H({\tilde{p}_k},y)$ is the Hessian matrix of $\mathcal{H}({\tilde{p}_k},y_k)$. 

Because we have $u_k =\beta_{k,0} \e^{{\tilde{p}_k}} > 0$ and the fact that $\log(1+u_k) < u_k, \forall u_k >0$, it can be verified that $|H({\tilde{p}_k},y)| {=}  \frac{a_{k,0} a_{k,2}^2 \beta_{k,0} [u_k {\minus} \log (1{\plus}u_k)] \e^{(2a_{k,1}{\plus}1) {\tilde{p}_k} {\plus} 2 a_{k,2} y }}{(1+u_k)^2 \log^4 (1+u_k )} {>} 0 $. In addition, we have
\begin{equation}
\frac{\partial^2 \mathcal{H}({\tilde{p}_k},y_k)}{\partial {\tilde{p}_k}^2} = 
\begin{cases}
\frac{u_k[2u_k - \log (1+u_k) ]}{ (1+u_k)^2 \log^3 (1+u_k)}, &  \text{if} \quad a_{k,1} = 0, \\[2pt] 
\frac{a_{k,0} \e^{a_{k,2}y} \e^{{\tilde{p}_k}} \mathcal{H}_\text{a} (u_k)}{(1+u_k)^2 \log^3 (1+u_k)},  &  \text{if} \quad a_{k,1} = 1,
\end{cases}
\label{eq25a}
\end{equation}
where $\mathcal{H}_\text{a} (u_k) = (1{\plus} u_k)^2 \log^2(1{\plus} u_k) 
{\minus} (3u_k  {\plus} 2u_k^2) \log(1{\plus}u_k )  {\plus} 2u_k^2.$
From (\ref{eq25a}), it can be verified that ${\partial^2 \mathcal{H}({\tilde{p}_k},y_k)}/{\partial {\tilde{p}_k}^2} > 0, \forall u_k > 0$ when $a_{k,1} = 0$. For the case with $a_{k,1} = 1$, since $\mathcal{H}_\text{a} (u_k) $ is a quadratic function of $\log(1+u_k)$, the discriminant of $\mathcal{H}_\text{a}(u_k) $ is $u_k^2\big[ 2- \big(1+2u_k \big)^2 \big]$, which leads to $\mathcal{H}_\text{a} (u_k) = (1{+}u_k)^2 \prod\limits_{j=\{-1,1\}} \big( \log(1{+}u_k) {-} u_{k,j}\big)$ if $u_k \leq  \frac{\sqrt{2} - 1}{2}$, where $u_{k,j} = \frac{u_k(3+2u_k) + j u_k \sqrt{2-(1+2u_k)^2}}{2(1+u_k)^2}$, $j=\{-1,1\}$. Otherwise, $\mathcal{H}_\text{a} (u_k)$ will be positive.
Using again $\log(1+u_k) < u_k, 
\forall u_k >0$, we have
$
u_{k,\{1\}} - \log(1+u_k) \geq  u_{k,\{-1\}} - \log(1+u_k) 
\geq u_{k,\{-1\}} - u_k > 0, \forall u_k > 0.
\label{eq28a}
$
This implies that $\mathcal{H}_\text{a} (u_k) >0, \forall u_k >0$, and we can conclude that ${\partial^2 \mathcal{H}({\tilde{p}_k},y_k)}/{\partial {\tilde{p}_k}^2} > 0$ as shown in (\ref{eq25a}).
As $\mathcal{H}({\tilde{p}_k},y_k)$ is a convex function, $\Xi_{k}$ and $T_k$ are also convex. Furthermore, $(\text{C}6)_k$ can be easily transformed to a linear constraint as ${\tilde{\rho}_k} + {\tilde{p}_k} \leq \log({P_k^{\sf{max}}})$, while  
$(\text{C}1)_k$, $(\text{C}5)_k$,  and $(\text{C}7)_k$  can be converted to  box constraints for ${\tilde{f}_k^{\sf{u}}}$, ${\tilde{\omega}_k^{\sf{u}}}$, and ${\tilde{\rho}_k}$, respectively. Therefore, $(\mathcal{P}_3)_k$ is a convex optimization problem with respect to $\tilde{\Omega}_{2,k} \cup {\tilde{l}_k}$.

\section{Proof of \textbf{Proposition~\ref{prop4}}} \label{prf_prop4}
We have the derivative $
{\partial \mathcal{H}_0 \big(\omega_{k}^{\sf{f}}, d_k\big) }/{\partial \omega_k^{\sf{f}}} = { \mathcal{H}_3(\omega_k^{\sf{f}}, d_k)}/{(\nu_{k,0} {\omega_{k}^{\sf{f}}} d_k {\minus} b_k^{\sf{in}})^2}, 
$
where $\mathcal{H}_3(\omega_k^{\sf{f}}, d_k) = d_k \big[ - \tilde{\gamma}_{k,1}^{\sf{co},\sf{f}} b_k^{\sf{in}} (\gamma_{k,2}^{\sf{co},\sf{f}} +1) {\big(\omega_{k}^{\sf{f}}\big)}^{\gamma_{k,2}^{\sf{co},\sf{f}}} $ $- \tilde{\gamma}_{k,3}^{\sf{co},\sf{f}} b_k^{\sf{in}} + \tilde{\gamma}_{k,1}^{\sf{co},\sf{f}}  \nu_{k,0} \gamma_{k,2}^{\sf{co},\sf{f}} d_k {\big(\omega_{k}^{\sf{f}}\big)}^{\gamma_{k,2}^{\sf{co},\sf{f}}+1}  \big]$.
As $\mathcal{H}_0 \big(\omega_{k}^{\sf{f}}, d_k\big)$ is positive when $s_k^{\sf{m}}=1$, it implies that $\nu_{k,0} \omega_k^{\sf{f}} d_k > b_k^{\sf{in}}$. Therefore, we can infer that $\mathcal{H}_3(\omega^{\sf{f}}, d_k) {\leq} -\tilde{\gamma}_{k,1}^{\sf{co},\sf{f}} \big(\omega^{\sf{f}} \big)^{\gamma_{k,2}^{\sf{co},\sf{f}}} b_k^{\sf{in}} d_k - \tilde{\gamma}_{k,3}^{\sf{co},\sf{f}} b_k^{\sf{in}} d_k {<} 0, \forall \omega^{\sf{f}}, d_k$ if $\gamma_{k,2}^{\sf{co},\sf{f}}\leq 0$. Hence,  $ \mathcal{H}_0 \big(\omega_{k}^{\sf{f}}, d_k\big)$ achieves its minimal value at $\omega_{k}^{{\sf{f}}\star} = \omega_{k}^{\sf{max},\sf{f}}$ when $\gamma_{k,2}^{\sf{co},\sf{f}}\leq 0$.
When $\gamma_{k,2}^{\sf{co},\sf{f}} > 0$, it can be verified that $
\mathcal{H}_3 \big(\omega_{k}^{{\sf{f}}\star}, d_k \big) = 0$ if and only if $ d_ k =    \mathcal{H}_1 \big(\omega_{k}^{{\sf{f}}\star}\big).
$
On the other hand, the derivative of $ \mathcal{H}_1 \big(\omega_{k}^{\sf{f}}\big)$ is
$
\frac{\partial \mathcal{H}_1 \big(\omega_{k}^{\sf{f}}\big) }{\partial \omega_k^{\sf{f}}} = 
- \frac{(\gamma_{k,2}^{\sf{co},\sf{f}}+1)}{\gamma_{k,2}^{\sf{co},\sf{f}}} \frac{ b_k^{\sf{in}} (\tilde{\gamma}_{k,1}^{\sf{co},\sf{f}} {\big(\omega_{k}^{\sf{f}}\big)}^{\gamma_{k,2}^{\sf{co},\sf{f}}} + \tilde{\gamma}_{k,3}^{\sf{co},\sf{f}} ))}{\nu_{k,0} {\big(\omega_{k}^{\sf{f}}\big)}^{\gamma_{k,2}^{\sf{co},\sf{f}}+2}} < 0. 
$
So, $\mathcal{H}_1 \big(\omega_{k}^{\sf{f}}\big)$ is a monotonically decreasing function with respect to $\omega_{k}^{\sf{f}}$. Therefore, $\mathcal{H}_0 \big(\omega_{k}^{\sf{f}}, d_k\big)$ is minimized if $\omega_{k}^{\sf{f}} = \omega_{k}^{{\sf{f}}\star}$ satisfies (\ref{eq54}).


%
\section{Proof of \textbf{Lemma~\ref{lemma_4}}} \label{prf_lemma_4}
First, it can be verified that $
\frac{\partial \mathcal{H}_0\big(\omega_{k}^{{\sf{f}}}, d_k\big)  }{\partial d_k} {=} {-} \frac{b_k^{\sf{in}}  \omega_{k}^{\sf{f}}     \big[ \tilde{\gamma}_{k,1}^{\sf{co},\sf{f}} \big(\omega_{k}^{\sf{f}} \big)  ^{\gamma_{k,2}^{\sf{co},\sf{f}}} {+} \tilde{\gamma}_{k,3}^{\sf{co},\sf{f}} \big] }{\big(\nu_{k,0} \big(\omega_{k}^{\sf{f}} \big)  d_{k} {-} b_k^{\sf{in}} \big)^2}{=}
\mathcal{H}_2 \big(\omega_{k}^{\sf{f}}, d_k \big).
$
As $\frac{\partial \mathcal{H}_1 \big(\omega_{k}^{\sf{f}}\big) }{\partial \omega_k^{\sf{f}}} < 0$ for all $\omega_k^{\sf{f}}$, $\omega_{k}^{{\sf{f}}\star}$ will not increase when $d_k > \bar{d}_{k,1}$ increases. When $\gamma_{k,2}^{\sf{co},\sf{f}} \leq 0$, $\omega_{k}^{{\sf{f}}\star} = \omega_{k}^{\sf{max},\sf{f}}$ as proved in \textbf{Proposition~\ref{prop4}}. Therefore,  $\mathcal{H}_2 \big( \omega_k^{\sf{max},\sf{f}}, d_k \big)$ increases  with respect to $d_k$. When $\gamma_{k,2}^{\sf{co},\sf{f}} > 0$, we will show that
$
{\mathcal{H}_2\big(\omega_{k,1}^{{\sf{f}}\star}, d_k\big)  }\big|_{d_k = d_{k,1}} 
< {\mathcal{H}_2\big(\omega_{k,2}^{{\sf{f}}\star}, d_k\big)  }\big|_{d_k = d_{k,2}}
$, where  $\bar{d}_{k,1} < d_{k,1} < d_{k,2}$ and $\omega_{k,i}^{{\sf{f}} \star}$ denotes the optimal value of $\omega_k^{\sf{f}}$  when $d_k$ is equal to $d_{k,i}$, for $i=1,2$. 

Indeed, when $\omega_{k}^{{\sf{f}}}$ is fixed, 
$
{\mathcal{H}_2\big(\omega_{k}^{{\sf{f}}}, d_k\big)  }$ is an increasing function of $d_k$. The second derivative of  $\mathcal{H}_0\big(\omega_{k}^{{\sf{f}}}, d_k\big) $ when substituting $d_k {=} \mathcal{H}_1\big(\omega_{k}^{{\sf{f}}}\big) $ is given as
$
\frac{\partial {\mathcal{H}_2\big(\omega_{k}^{{\sf{f}}}, d_k\big)  } }{\partial \omega_{k}^{{\sf{f}}}} =  -\frac{\mathcal{H}_4\big( \omega_{k}^{{\sf{f}}}\big)}{\big( \tilde{\gamma}_{k,1}^{\sf{co},\sf{f}} \big(\omega_{k}^{{\sf{f}}} \big)  ^{\gamma_{k,2}^{\sf{co},\sf{f}}} + \tilde{\gamma}_{k,3}^{\sf{co},\sf{f}} \big)^2},
$
where $\mathcal{H}_4\big( \omega_{k}^{{\sf{f}}}\big) {=} (\tilde{\gamma}_{k,1}^{\sf{co},\sf{f}})^2 \big({\gamma_{k,2}^{\sf{co},\sf{f}}}\big)^2 \big(\omega_{k}^{{\sf{f}}} \big)^{2 {\gamma_{k,2}^{\sf{co},\sf{f}}}} \big( \tilde{\gamma}_{k,1}^{\sf{co},\sf{f}} \big( {\gamma_{k,2}^{\sf{co},\sf{f}}} {+}1\big) \big(\omega_{k}^{{\sf{f}}} \big)^{{\gamma_{k,2}^{\sf{co},\sf{f}}}} {+} \tilde{\gamma}_{k,3}^{\sf{co},\sf{f}} \big(2{\gamma_{k,2}^{\sf{co},\sf{f}}} {+}1\big) \big) {>} 0$, for all $\omega_{k}^{{\sf{f}}}$ when $\gamma_{k,2}^{\sf{co},\sf{f}} {>} 0$. 
Thus, it can be concluded that ${\mathcal{H}_2\big(\omega_{k}^{{\sf{f}}}, d_k\big)  }$ is a decreasing function of  $\omega_{k}^{{\sf{f}}}$. Furthermore,  the optimal solution $\omega_{k}^{{\sf{f}}\star}$ monotonically decreases as $d_k$ increases as shown in  (\ref{eq54}); hence,  $\omega_{k,1}^{{\sf{f}}\star} {\geq} \omega_{k,2}^{{\sf{f}}\star}$. Therefore, we have
$
{\mathcal{H}_2\big(\omega_{k,1}^{{\sf{f}}\star}, d_k\big)  }\big|_{d_k = d_{k,1}} {\leq} {\mathcal{H}_2\big(\omega_{k,2}^{{\sf{f}}\star}, d_k\big)  }\big|_{d_k = d_{k,1}} 
{<} {\mathcal{H}_2\big(\omega_{k,2}^{{\sf{f}}\star}, d_k\big)  }\big|_{d_k = d_{k,2}}.
$

\section{Verification of data compression model}
To the best of our knowledge, in the literature, there is no theoretical result regarding a mathematical model for the computational workload required
for data compression process, i.e., a function expressing the relationship between the computation workload of
the compression process and the compression ratio does not seem to be known. This justifies our selection of a general
non-linear function to represent the relationship between the computation load and the compression ratio. The parameters of the proposed function are obtained 
by minimizing the difference between the experimental and the modeled data using a fitting approach.
In fact, this fitting method based on experimental data has been employed to model different metrics in the wireless communication literature including the  energy per operation \cite{Zhang13} and the channel modeling \cite{sun2016air}. 
In particular, \cite{Zhang13} proposed an energy model which can fit well with the measured data in \cite{miettinen2010energy}. 
The authors in \cite{sun2016air} proposed the air-to-ground channel models which fit well with their collected data.

Because there is no theoretical result available regarding this modeling issue, we have employed a practical data-fitting approach to 
capture the compression computational load, decompression computational load, and compression quality as non-linear
functions of the compression ratio.
In general, when different models based on data fitting  are chosen to optimize certain
system parameters then they can affect the system performance differently.
In addition, a more accurate fitting model can result in a more reliable design.

In our current work, we first collected the experimental data by running different compression algorithms, namely GZIP, BZ2, and JPEG, and measured on the execution time of the corresponding algorithms for different values of the compression ratio.
Specifically, we run algorithms GZIP, BZ2, and JPEG for data compression
in Python 3.0 via a Linux terminal using Ubuntu 18.04.1 LTS on a computer equipped with CPU chipset Intel(R) core(TM) i7-4790, and 12 GB RAM.
In these experiments, we keep the CPU clock speed unchanged by employing \textit{``Linux cpupower tool''} and turning off all applications except the ``Linux terminal'' program executing the compression and decompression algorithms. 
The experimental data were obtained by averaging over 1000 realizations.
Since the computational load in CPU cycles is linearly proportional to the execution time for a fixed CPU clock speed, the normalized execution time is proportional to the normalized computational load.
It is worth noting that the same approach for collecting experimental data of the normalized computational load was employed
in \cite{Clearlinux},  in which different compression algorithms, e.g., XZ and ZLIB, were run on computers using two different operating systems: Ubuntu and Clear linux (CL).
After collecting the experimental data, we used them to fit the parameters of the functions shown in equation (1) in the manuscript 
by using the Matlab curve fitting tool \cite{matlabfit}.

To validate the quality of the proposed fitting model, we compare the normalized execution time obtained by our proposed model fitted by
using the experimental data and the experimental data given in \cite{Clearlinux}.
The root mean square error is computed as 
\beq
\text{RMSE} = \sqrt{\frac{\sum_{i=1}^{N} (\hat{y}_i-y_i)^2}{N}},
\eeq
where $y = [y_1, y_2,...,y_N]$ are our collected experimental data or the experimental data obtained in \cite{Clearlinux}, and $\hat{y}=[\hat{y}_1, \hat{y}_2,...,\hat{y}_N]$ are the values obtained from the fitted function.
Additionally, two other existing models in the literature, namely the linear model adopted in \cite{zhang2017mobile, yu2006information} and the non-linear model proposed in \cite{li2019wirelessly} are compared with our proposed model in the additional numerical studies
shown in Figs.~\ref{fig1res}, \ref{fig2res}, \ref{fig3res}, \ref{fig4res}, and \ref{fig5res}. These additional results allow us
to verify the quality of the proposed functions.
Particularly, we compare the proposed model introduced in Section II.A, $C_{\sf our} = \gamma_1 \omega^{\gamma_2} + \gamma_3$, with the linear model \cite{yu2006information}, $C_{\sl li} =  \beta_1 \omega + \beta_2$, and the non-linear model  \cite{li2019wirelessly}, $C_{\sf nl} = \epsilon_1 (\exp (\epsilon_2 \omega) - \exp(\epsilon_2))$, where $\epsilon_1, \epsilon_2, \gamma_1, \gamma_3$ are positive constants and $\omega$ is the compression ratio. The parameters $\epsilon_1, \epsilon_2, \gamma_1, \gamma_3, \beta_1$, and $\beta_2$ used for Figs.~\ref{fig1res}, \ref{fig2res}, \ref{fig3res}, \ref{fig4res}, and \ref{fig5res} are given in Table~\ref{t:fitcurve}.

These figures confirm that the proposed model is superior to the existing models.
Specifically, Figs.~\ref{fig1res} and \ref{fig2res} illustrate the fitted curves using our experimental data when running compression algorithms  GZIP and BZ2 for the benchmark text files ``alice.txt" and ``asyoulik.txt" while Figs.~\ref{fig3res} and \ref{fig4res} show the fitted curves using the experimental data obtained in \cite{Clearlinux}. Fig.~\ref{fig5res} shows the root mean square error (RMSE) of the fitted curves and the true data (our experimental data and the experimental data in \cite{Clearlinux}). The results obtained with
our proposed model and existing models are presented in these figures for the comparison purpose.

\begin{table}[t]
	\caption{Fitting Curves From Different Models}
	\centering
	\label{t:fitcurve}
	\begin{tabular}{|l|l|l|l|}
		\hline
		Schemes & Linear Model& Model [7]& Our Proposed Model\\
		\hline
		GZIP-Alice & $y {=}1.7189x{-}3.9237$ & $y{=}1.156{\times} 10^{-5}(\exp(3.969x){-}\exp(3.969)) $& $y{=}1.207{\times} 10^{-15} x^{32.28} {+} 0.3$ \\
		\hline
		GZIP-Asyoulik & $y {=} 2.1605x-4.6807$ & $y{=}2.146{\times} 10^{-6}(\exp(4.964x){-}\exp(4.964)) $& $y{=}6.497{\times} 10^{-19} x^{42.94} {+} 0.303$ \\
		\hline
		BZ2-Alice & $y {=} 0.0297x+0.6657$ & $y{=}1029(\exp(0.00012x)-\exp(0.00012)) $& $y{=}0.076 x^{0.7117} {+} 0.579$ \\
		\hline
		BZ2-Asyoulik &$y {=} 0.0297x-0.6661$ & $y{=}1450 (\exp(8.47{\times}10^{-5}x){-}\exp(8.47{\times}10^{-5})) $& $y{=}0.178 x^{0.478} {+} 0.437$ \\
		\hline
		XZ-Ubuntu & $y {=}0.4767x{-}2.5884$ & $y{=}4.703{\times} 10^{-4}(\exp(1.018x){-}\exp(1.018)) $& $y{=}6.441{\times} 10^{-7} x^{7.062} + 10^{-7}$ \\
		\hline
		XZ\_CL & $y {=}0.4767x{-}2.5884$ & $y{=}7.255{\times} 10^{-4}(\exp(0.961x){-}\exp(0.961)) $& $y{=}1.492{\times} 10^{-6} x^{6.646} + 10^{-6}$ \\
		\hline
		ZLIB-Ubuntu & $y {=}1.032x{-}4.0726$ & $y{=}2.002{\times} 10^{-20}(\exp(9.228x){-}\exp(9.228)) $& $y{=}6.019{\times} 10^{-76} x^{108.6} {+} 0.240$ \\
		\hline
		ZLIB\_CL & $y {=}1.177x{-}4.7851$ & $y{=}1.1557{\times} 10^{-5}(\exp(9.769x){-}\exp(9.769)) $& $y{=}1.436{\times} 10^{-68} x^{97.76} {+} 0.205$ \\
		\hline
	\end{tabular}
\end{table}

\begin{figure*}[h!]
	\centering
	\includegraphics[height=3.9in]{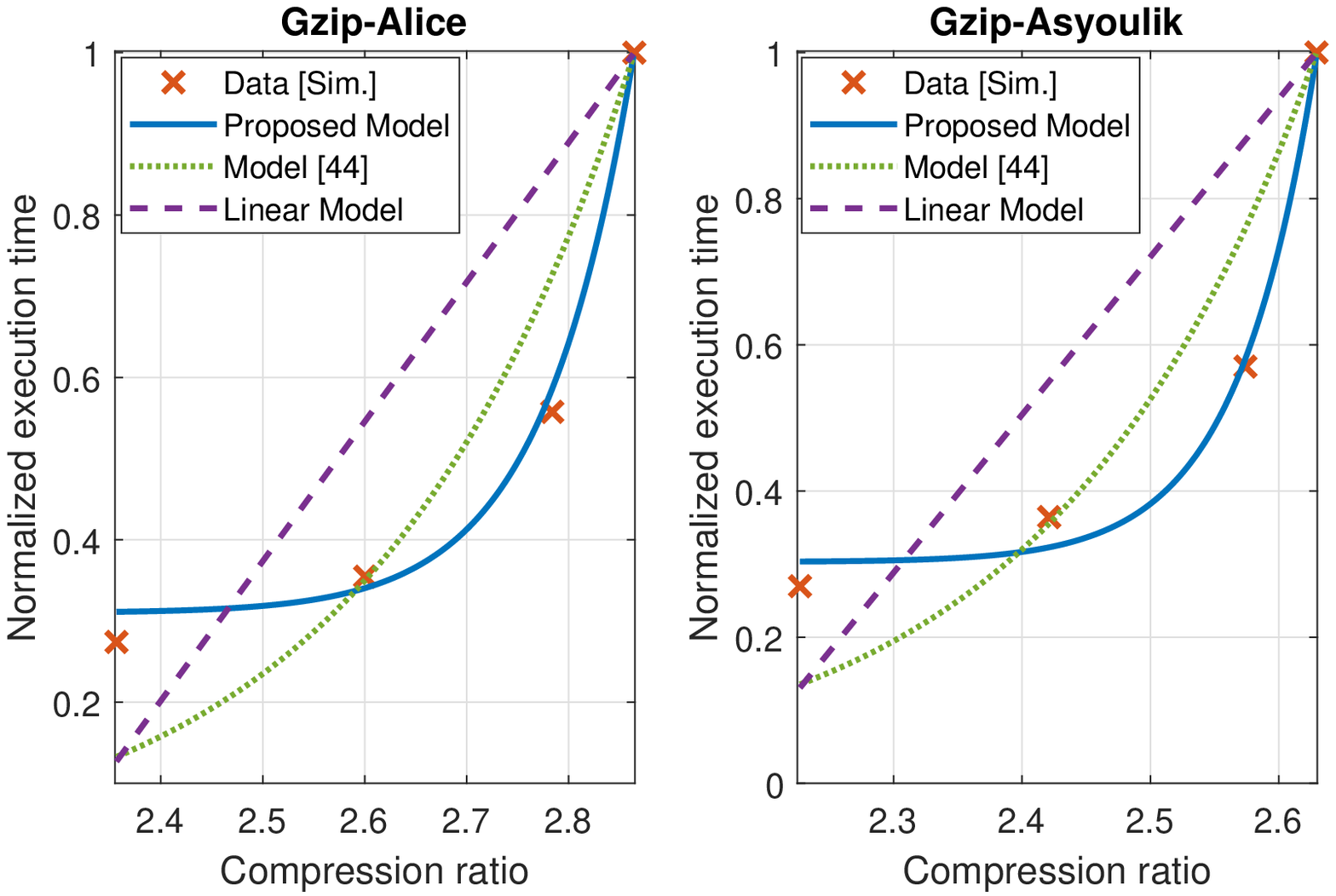}
	\caption{Comparison of different fitting models for the GZIP algorithm.}
	\label{fig1res}
\end{figure*}

\begin{figure*}[h!]
	\centering
	\includegraphics[height=3.9in]{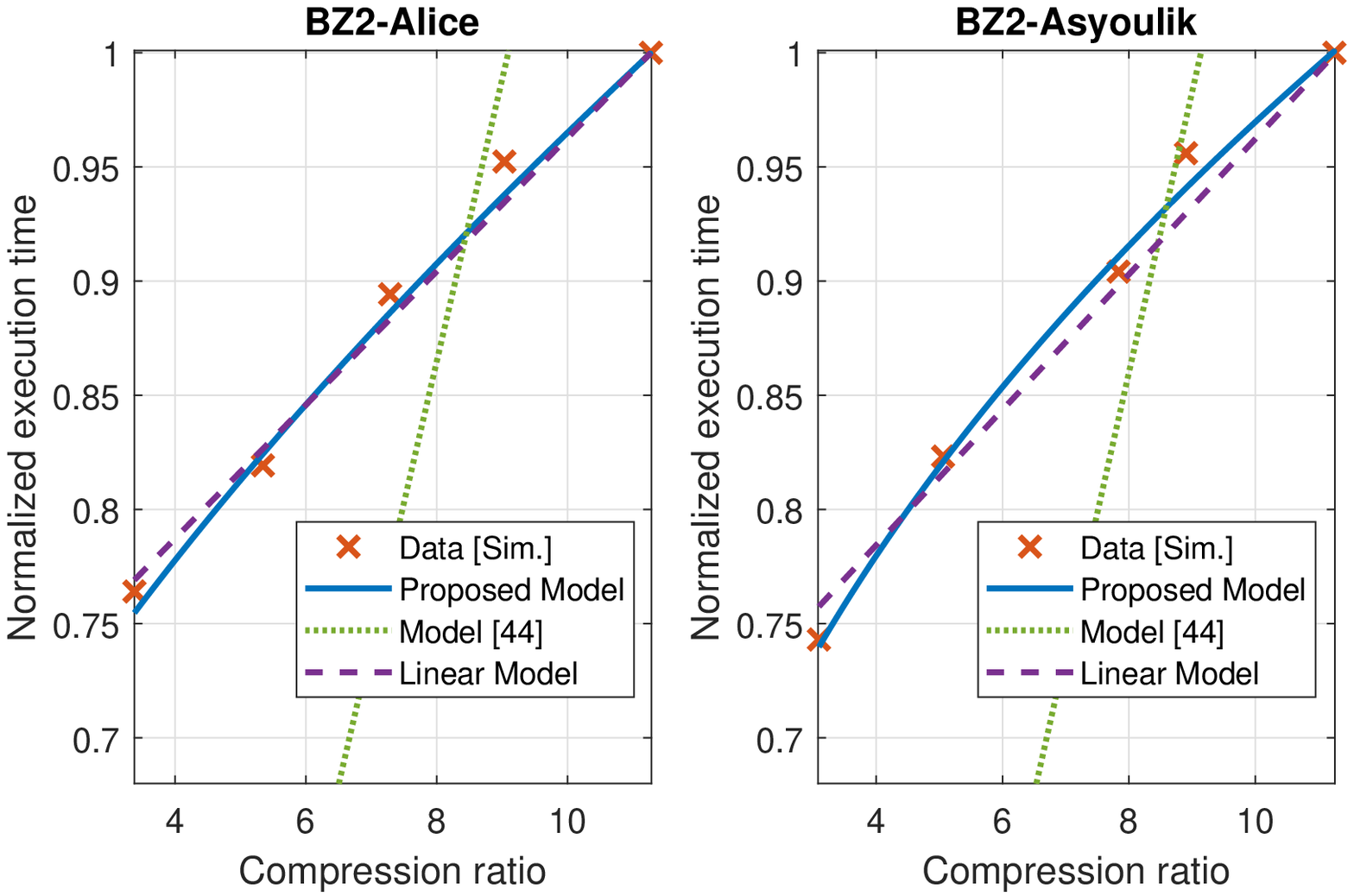}
	\caption{Comparison of different fitting models for the BZ2 algorithm.}
	\label{fig2res}
\end{figure*}

\begin{figure*}[h!]
	\centering
	\includegraphics[height=3.9in]{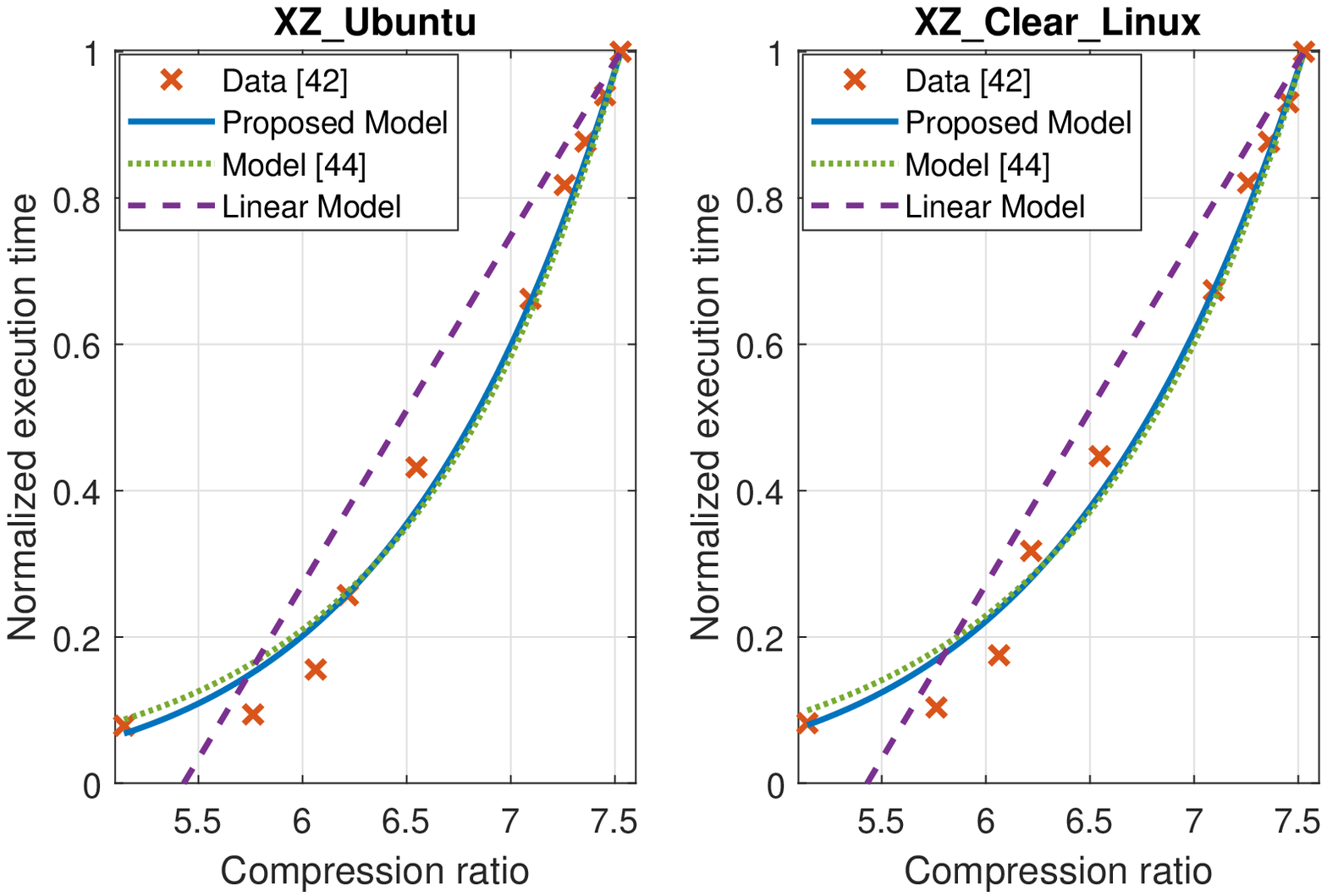}
	\caption{Comparison of different fitting models for the XZ algorithm.}
	\label{fig3res}
\end{figure*}

\begin{figure*}[h!]
	\centering
	\includegraphics[height=3.9in]{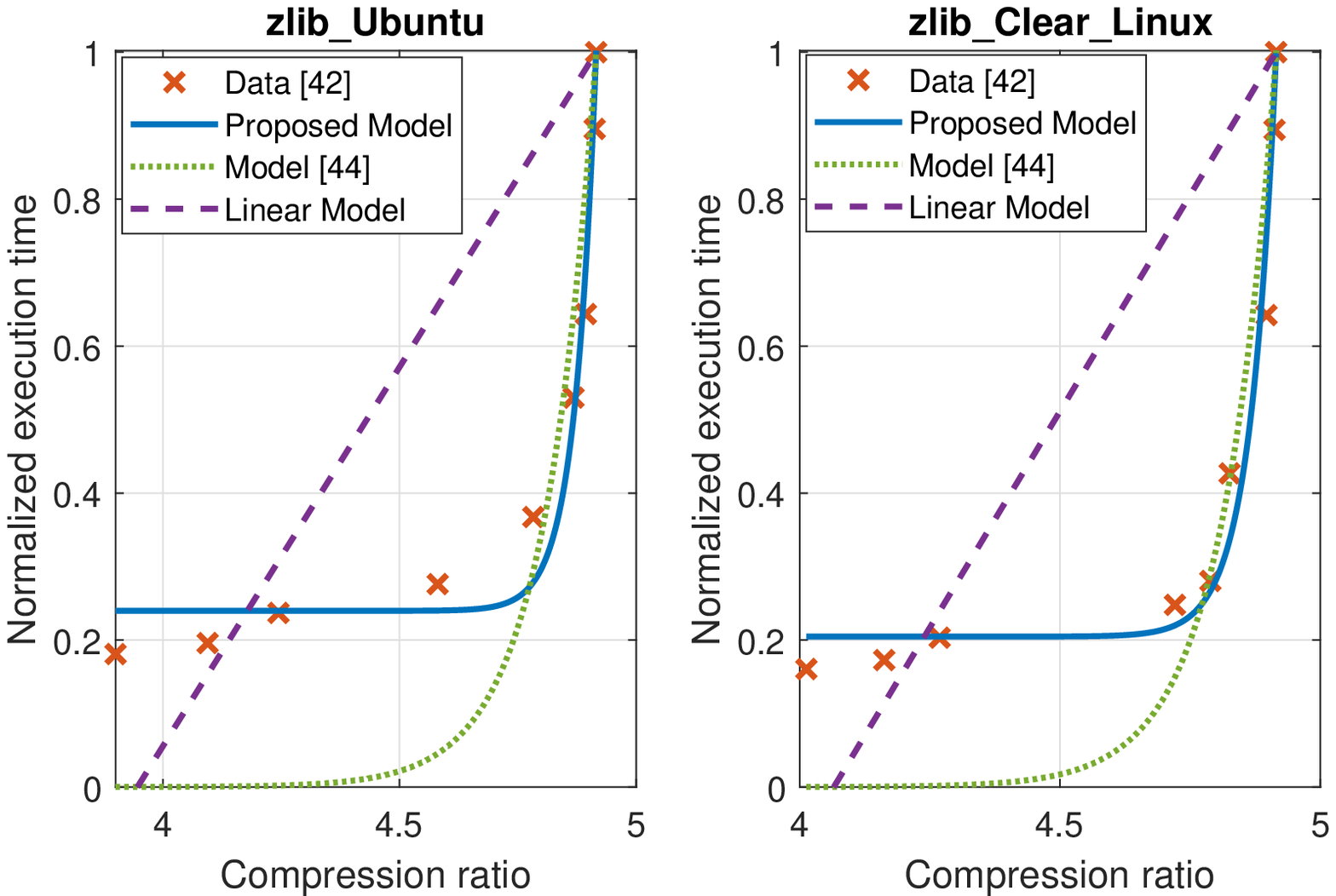}
	\caption{Comparison of different fitting models for the Zlib algorithm.}
	\label{fig4res}
\end{figure*}

\vspace{3 cm}
\begin{figure*}[h!]
	\centering
	\includegraphics[height=3.4in]{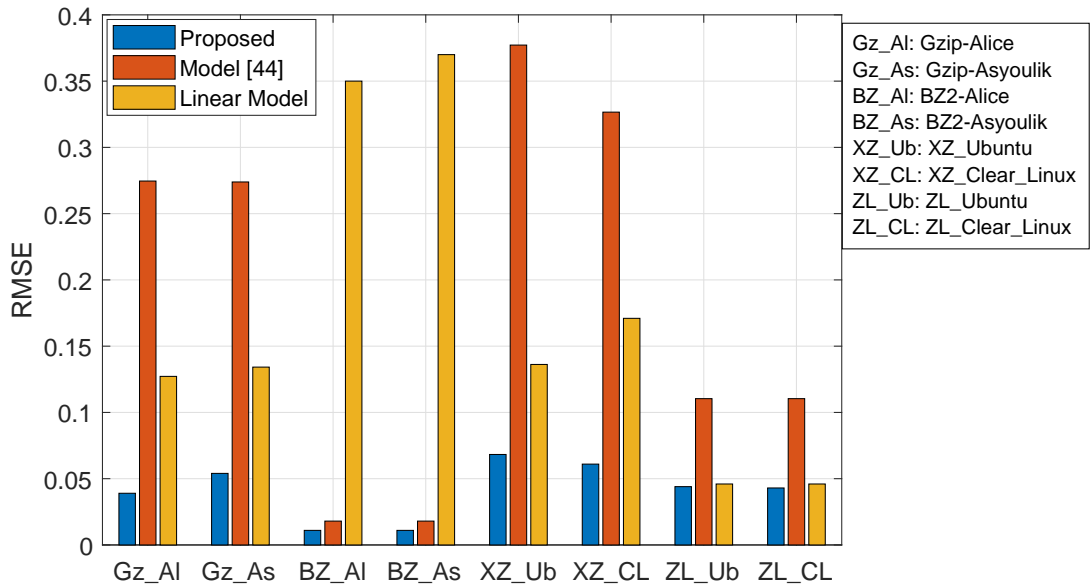}
	\caption{RMSE of different fitting models and different compression algorithms.}
	\label{fig5res}
\end{figure*}

\footnotesize{
\bibliographystyle{IEEEtran}  
\bibliography{MCCpaper}
}

\setlength{\textfloatsep}{5 pt}
\begin{table}[htb]
	\caption{List of Key Notations}
	\centering
	\label{t:observed_psrs1}
	\begin{tabular}{ll}
		\noalign{\smallskip}  \hline \noalign{\smallskip}
		Notations & Description \\
		\hline
		$K/\mathcal{K}$ & Number/set of users \\
		$k$ & User index\\
		$T_k^{\sf max}$ & Application execution interval  (seconds) \\
		$c_k$ & Computation demand of user $k$ (CPU cycles)\\
		$c_{k,0}$ & Number of CPU cycles which must be executed locally at the mobile user\\		
		$c_{k,1}$ & Number of flexible CPU cycles of user $k$ which can be processed at the user or fog/cloud server\\
		$b_k^{\sf{in}}$ & Incurred data size of user $k$ (bits) \\		
		$b_k^{\sf{out,u/f}}$ & Output data size of user $k$ after compression mobile user $k$ or at the fog (bits) \\
		$c_k^{\sf{co,u}}/ c_k^{\sf{de,u}}$ & Compression/Decompression computation load at mobile user $k$ (CPU cycles) \hspace{1 cm} \\
		$c_k^{\sf{co,f}}/ c_k^{\sf{de,f}}$ & Compression/Decompression computation load at the fog server (CPU cycles) \\
		$\gamma_{k,0}^{\sf{u}}$ & Maximum required computation load to compress the input data at mobile user $k$ (CPU cycles) \\
		$\gamma_{k,0}^{\sf{f}}$ & Maximum required computation load to compress the input data at the fog server (CPU cycles) \\
		$\gamma_{k,i}^{\sf{co/de/qu,u/f}}$ & Constant parameters which characterize the data compression model\\
		$c_k^{\sf{u}}$ & Total computation load at mobile user $k$ (CPU cycles)\\
		$c_k^{\sf{f}}$ & Total computation load for user $k$ at the fog (CPU cycles)\\
		$\xi_{1,k}^{\sf{u}}$ & Local computation energy consumed by user $k$ (Joule)\\
		$\xi_{2,k}^{\sf{u}}$ & Transmission energy of user $k$ (Joule)\\
		$\xi_{k}$ & Total energy consumption of user $k$ (Joule)\\		
		$t_{1,k}^{\sf{u}}$ & Local computation time of user $k$  (seconds) \\
		$t_{2,k}^{\sf{u}}$ & Transmission time from user $k$ to the fog (seconds) \\
		$t_{1,k}^{\sf{f}}$ & Execution time in the fog for processing the task of user $k$ (seconds) \\
		$t_{2,k}^{\sf{f}}$ & Transmission time from the fog to the cloud due to user $k$ (seconds) \\
		$T^{\sf{c}}$ & Execution time in the cloud for each user (seconds) \\
		$T_k$ & Total delay for completing the computation task of user $k$ (seconds) \\
		$w_k^{\sf{T}}/w_k^{\sf{E}}$ & The weights correspond to the service latency and energy, respectively \\
		$p_{k,0}$ & Circuit power per Hz of user $k$ (Watts/Hz) \\
		$r_k$ & Transmission rate of user $k$  (bits/seconds)\\		
		$\rho_k^{\sf{max}}$ & Maximum transmission bandwidth assigned to user $k$ (Hz)\\
		$\beta_{k,0}$ & Coefficient in signal-to-noise ratio  \\	
		$\mathcal{A}, \mathcal{B}$ & Locally executing user set and the offloading user set \\
		$\alpha_k$ & Energy coefficient specified in the CPU model\\
		\noalign{\smallskip} \hline \noalign{\smallskip}
	\end{tabular}
\end{table}
\setlength{\textfloatsep}{5 pt}
\begin{table}[htb]
	\caption{List of Key Notations (Continued)} 
	\centering
	\begin{tabular}{ll}
		\noalign{\smallskip}  \hline \noalign{\smallskip}
		Notations & Description \\
		\hline
		$\Xi_k$ & Weighted sum of energy and delay (WEDC) for user $k$  \\
		$\omega_k^{\sf{u}}$ & Compression ratio for user $k$ at mobile user $k$ \\
		$\omega_k^{\sf{f}}$ & Compression ratio for user $k$ at the fog\\
		$s_{k}^{\sf{u}}$ & Binary variable which indicates whether the user task is processed at the mobile user or not\\
		$s_{k}^{\sf{f}}$ & Binary variable which indicates whether the user task is processed at the fog or not\\
		$s_{k}^{\sf{c}}$ & Binary variable which indicates whether the user task is processed at the cloud or not\\
		$s_{k}^{\sf{m}}$ & Binary variable which indicates whether the user task is  processed at the cloud with re-compression or not\\
		$f_{k}^{\sf{u}}$ & Local CPU clock speed of user $k$  (Hz or CPU cycles/second) \\
		$f_{k}^{\sf{f}}$ & CPU clock speed assigned for processing the application of user $k$ in the fog  (Hz or CPU cycles/second) \\
		$p_k$ & Transmit power per Hz of user $k$ (Watts/Hz) \\
		$\rho_k$ & Transmission bandwidth of user $k$ (Hz)\\
		$d_{k}$ & Backhaul rate allocated for user $k$   (bits/second) \\
		$\eta$ & Auxiliary variable to capture the min-max WEDC \\		
		$\eta_{k}^{\sf{lo}}$ & Required energy for local execution (Joule)\\
		$T_k^{\sf{max}}$ & Maximum delay of user $k$\\
		$D^{\sf{max}}$ & Backhaul capacity (bits/second) \\
		$F^{\sf{f,max}}$ & Maximum CPU clock speed in the fog  (Hz or CPU cycles/second) \\
		$F_k^{\sf{max}}$ & Maximum CPU clock speed of user $k$ (Hz or CPU cycles/second) \\
		$\omega_k^{\sf{u/f,min/max}}$ & Feasible range of $\omega_k^{\sf{u/f}}$\\
		$f_k^{\sf{f,rq}}$ & Minimum required fog computing resource for executing the application of user $k$\\
		$d_k^{\sf{rq}}$ & Minimum backhaul resource required by  user $k$\\
		$G_{\mathcal{B},\eta}^{\star}$ & Minimum total required computing resource in the fog $\left(\sum_k f_k^{\sf{f}} \right)$ for a given value of $\eta$\\
		$G_{\mathcal{B},\eta}^{\sf{PLA}\star}$ & Minimum $\left(\sum_k f_k^{\sf{f}} \right)$ for a given value of $\eta$ 
		when applying PLA based algorithm\\
		$G_{\mathcal{B},\eta}^{\sf{OSTS}\star}$ & Minimum $\left(\sum_k f_k^{\sf{f}} \right)$ for a given value of $\eta$ 
		when applying the OSTS algorithm\\
		$(\mathcal{P}_1)$ & Original optimization problem\\
		$(\mathcal{P}_2)$ & Equivalent problem of problem $\mathcal{P}_1$ \\
		$(\mathcal{P}_{\mathcal{A}})$ & Sub-problem for users whose applications are processed locally \\
		$(\mathcal{P}_{\mathcal{B}})$ & Sub-problem for users whose applications are offloaded and processed in the fog/cloud \\
		$(\mathcal{P}_3)_k$ & Sub-problem to find the minimum required fog computing resource allocated to user $k$ when $s_k^{\sf{f}}=1$ \\
		$(\mathcal{P}_4)_k$ & Sub-problem to find the minimum required backhaul rate allocated to user $k$ when $s_k^{\sf{c}}=1$ \\
		$(\mathcal{P}_{\sf FV,\eta})$ & Sub-problem to find the minimum required computing resource in the fog \\
		$(\mathcal{P}_{d_k})$ & Sub-problem to find the minimum  $f_k^{\sf{f}}$ for a given $d_k$ when $s_k^{\sf{c}}=1$ and $x_k=1$\\
		$(\mathcal{P}_{\sf FV,\eta}^{\sf PLA})$ & Sub-problem to find the minimum required computing resource of the fog using PLA based algorithm\\
		$(\mathcal{P}_{\sf FV,\eta}^{\sf TSA})$ & Sub-problem to find the minimum required computing resource of the fog using TSA algorithms\\
		$(\mathcal{P}_{\sf FV,\eta}^{\sf OSTS})_{\lambda}$ & Sub-problem to solve $(\mathcal{P}_{\sf FV,\eta}^{\sf TSA})$ for a given $\lambda$  \\
		$\Omega_{1,k}$ & $\{s_k^{\sf{u}}, s_k^{\sf{f}}, s_k^{\sf{c}},  \omega_k^{\sf{u}}, f_k^{\sf{u}}, f_k^{\sf{f}}, p_k, \rho_k, d_k\}$  \\
		$\Omega_1$ & $\cup_{k\in \mathcal{K}}\Omega_{1,k}$  \\	
		$\Omega_{2,k}$ & $\{ \omega_k^{\sf{u}}, f_k^{\sf{u}}, f_k^{\sf{f}},  p_k, \rho_k  \}$  \\
		$\Omega_3$ & $\cup_{k\in \mathcal{B}}s_k^{\sf{c}}$  \\
		$\Omega_4$ & $\cup_{k\in \mathcal{B}}\{s_k^{\sf{f}}, s_k^{\sf{c}}, s_k^{\sf{m}}, d_k, f_k^{\sf{f}}, \omega_k^{\sf{f}}\}$  \\
		\noalign{\smallskip} \hline \noalign{\smallskip}
	\end{tabular}
\end{table}

\setlength{\textfloatsep}{5 pt}
\begin{table}[htb]
	\caption{List of Abbreviations}
	\centering
	\begin{tabular}{ll}
		\noalign{\smallskip}  \hline \noalign{\smallskip}
		Abbr. & Description \\
		\hline
		WEDC & Weighted energy and delay cost  \\
		JCORA & Joint compression, computation offloading, and resource allocation  \\
		PLA & Piece-wise linear approximation  \\
		OSTS & One-dimension search-based two-stage  \\
		IUTS & Iterative update-based two-stage  \\
		CPU & Central processing unit  \\
		BS & Base station  \\
		QoS & Quality of service  \\
		MIMO & Multiple-input multiple-output  \\
		ILP & Integer linear programming \\
		MILP & Mixed integer linear programming \\
		MINLP & Mixed integer non-linear programming \\
		MCC/MEC & Mobile cloud/edge computing\\
		\textbf{\textit{Mode 1}} & $s_k^{\sf f} = 1$ for user $k$\\
		\textbf{\textit{Mode 2}} & $s_k^{\sf c} = 1$ for user $k$\\
		\textbf{\textit{Mode 3}} & $s_k^{\sf m} = 1$ for user $k$\\
		\noalign{\smallskip} \hline \noalign{\smallskip}
	\end{tabular}
\end{table}
\end{document}